\newtheorem{theorem}{Theorem}
\newtheorem{lemma}{Lemma}
\newtheorem{conjecture}{Conjecture}
\newtheorem{claim}{Claim}
\newtheorem{definition}{Definition}
\newtheorem{fact}{Fact}
\newtheorem{observation}{Observation}
 \gdef\xxxmark{%
   \expandafter\ifx\csname @mpargs\endcsname\relax 
     \expandafter\ifx\csname @captype\endcsname\relax 
       \marginpar{xxx}
     \else
       xxx 
     \fi
   \else
     xxx 
   \fi}
 \gdef\xxx{\@ifnextchar[\xxx@lab\xxx@nolab}
 \long\gdef\xxx@lab[#1]#2{{\bf [\xxxmark #2 ---{\sc #1}]}}
 \long\gdef\xxx@nolab#1{{\bf [\xxxmark #1]}}
\newcommand{\m}[1]{\text{ }\left(\text{mod } #1 \right)}
\newcommand{\R}{\mathbb{R}}
\newcommand{\F}{\text{Frechet}}
\newcommand{\LCS}{\text{LCS}}
\newcommand{\poly}{\mbox{poly}}
\newcommand{\DTWD}{\mbox{DTWD}}
\newcommand{\EDIT}{\mbox{EDIT}}
\newcommand{\CG}{\mbox{CG}}
\newcommand{\AG}{\mbox{AG}}
\newcommand{\VG}{\mbox{VG}}
\newcommand{\WLCS}{\mbox{WLCS}}
\newcommand{\nn}{2^{N/2}}
\newcommand{\Ort}{\text{Orthogonal Vectors}}
\newcommand{\cnfsat}{\text{CNF-SAT}}
\newcommand{\MostOrt}{\text{Most-Orthogonal Vectors}}
\newcommand{\kMostOrt}{$k$\text{-Most-Orthogonal-Vectors}}
\newcommand{\kOrt}{$k$\text{-Orthogonal-Vectors}}
\newcommand{\seta}{\{\alpha_i\}_{i \in [n]}}
\newcommand{\setb}{\{\beta_i\}_{i \in [n]}}
\def \eps {\varepsilon}
\title{Quadratic-Time Hardness of LCS and other \\ Sequence Similarity Measures}
\date{}
\author{
	Amir Abboud\footnote{Supported by NSF Grant CCF-1417238, BSF Grant BSF:2012338 and a Stanford SOE Hoover Fellowship.} \\ Stanford University \\ \texttt{\small abboud@cs.stanford.edu}
	\and Arturs Backurs\\ MIT \\ \texttt{ \small backurs@mit.edu}
	\and Virginia Vassilevska Williams\footnotemark[\value{footnote}] \\ Stanford University \\ \texttt{\small virgi@cs.stanford.edu}
	}
\begin{document}


\begin{titlepage}
\clearpage\maketitle
\thispagestyle{empty}
\begin{abstract}

Two important similarity measures between sequences are the longest common subsequence (LCS) and the dynamic time warping distance (DTWD).
The computations of these measures for two given sequences are central tasks in a variety of applications.
Simple dynamic programming algorithms solve these tasks in $O(n^2)$ time, and despite an extensive amount of research, no algorithms with significantly better worst case upper bounds are known.

In this paper, we show that an $O(n^{2-\eps})$ time algorithm, for some $\eps>0$, for computing the LCS or the DTWD of two sequences of length $n$ over a constant size alphabet, refutes the popular Strong Exponential Time Hypothesis (SETH).
Moreover, we show that computing the LCS of $k$ strings over an alphabet of size $O(k)$ cannot be done in $O(n^{k-\eps})$ time, for any $\eps>0$, under SETH.
Finally, we also address the time complexity of approximating the DTWD of two strings in truly subquadratic time.

%
\end{abstract}
\end{titlepage}

\section{Introduction}
In many applications it is desirable to determine the similarity of two or more sequences of letters. 
The sequences could be English text, computer viruses, pointwise descriptions of points in the plane, or even proteins or DNA sequences.
Because of the large variety of applications, there are many notions of sequence similarity. Some of the most important notions are the Longest Common Subsequence (LCS), the Edit-Distance, the Dynamic Time Warping Distance (DTWD) and the Frechet distance measures. Considerable algorithmic research has gone into developing techniques to compute these measures of similarity. Unfortunately, even when the input consists of two strings, the time complexity of the problems is not well understood. There are classical algorithms that compute each of these measures in time that is roughly quadratic in the length of the strings, and this quadratic runtime is essentially the best known. A common technique to explain this quadratic bottleneck is to reduce the so called $3$SUM problem to the problems at hand. This approach has enjoyed a tremendous amount of success~\cite{overmars}. Nevertheless, there are no known reductions from $3$SUM to the above four sequence similarity problems. Two recent papers~\cite{Bring,edit_hardness} explained the quadratic bottleneck for Frechet distance and Edit-Distance by a reduction from CNF-SAT, thus showing that any polynomial improvement over the quadratic running time for these two problems would imply a breakthrough in SAT algorithms (refuting the Strong Exponential Time Hypothesis (SETH) that we define below).
A natural question is, can the same hypothesis explain the quadratic bottleneck for other sequence similarity measures such as DTWD and LCS? This paper answers this question in the affirmative, providing conditional lower bounds based on SETH for LCS and DTWD, along with other interesting results.


\paragraph{LCS.} Given two strings of $n$ symbols over some alphabet $\Sigma$, the LCS problem asks to compute the length of the longest sequence that appears as a subsequence in both input strings.
It is a very basic problem that we encounter in undergraduate-level computer science courses, with a classic $O(n^2)$ dynamic programming algorithm \cite{CLRS}.
LCS attracted an extensive amount of research, both for its mathematical simplicity and for its large number of important applications, including data comparison programs and bioinformatics.
In many of these applications, the size of $n$ makes the quadratic time algorithm impractical.
Despite a long list of improved algorithms for LCS and its variants in many different settings, e.g. \cite{Hir75, HS77} (see \cite{edita_vs_lcs} for a survey), the best algorithms on arbitrary strings are only slightly subquadratic and have an $O(n^2/\log^2{n})$ running time~\cite{masek1980faster} if the alphabet size is constant, and $O(n^2(\log\log{n})/\log^2{n})$ otherwise \cite{BF08,Grabo14}.

\paragraph{DTWD.}
Given two sequences of $n$ points $P_1$ and $P_2$, the \emph{dynamic time warping} distance between them is defined as the minimum, over
all monotone traversals of $P_1$ and $P_2$, of the sum over the stages of the traversal of the distance between the corresponding points at that stage (see the preliminaries for a formal definition).
When defined over symbols, the distance between two symbols is simply $0$ if they are equal and $1$ otherwise.
The DTWD problem asks to compute the score of the optimal traversal of two given sequences.
Note that if instead of taking the sum over all the stages of the traversal, we only take the maximum distance, we get the discrete Frechet distance between the sequences, a well known measure from computational geometry.

DTWD is an extremely useful similarity measure between temporal sequences which may vary in time or speed, and has long been used in speech recognition and more recently in countless data mining applications.
A simple dynamic programming algorithm solves DTWD in $O(n^2)$ time and is the best known in terms of worst-case running time, while many heuristics were designed in order to obtain faster runtimes in practice (see Wang et al. for a survey \cite{WDT+10}).

\paragraph{Hardness assumption.}
The Strong Exponential Time Hypothesis (SETH)~\cite{IP01,IPZ01} asserts that for any $\eps>0$ there is an integer $k>3$ such that $k$-SAT cannot be solved in $2^{(1-\eps)n}$ time.
Recently, SETH has been shown to imply many interesting lower bounds for polynomial time solvable problems~\cite{PW10, RV13,AV14,AVW14,Bring,edit_hardness}.
 We will base our results on the following conjecture, which is possibly more plausible than SETH: it is known to be implied by SETH, yet might still be true even if SETH turns out to be false. See Section~\ref{sec:sat} for a discussion.

\begin{conjecture}
\label{conj:MOV}
Given two sets of $n$ vectors $A,B$ in $\{0,1\}^d$ and an integer $r\geq 0$, there is no $\eps>0$ and an algorithm that can decide if there is a pair of vectors $a \in A, b \in B$ such that $\sum_{i=1}^{d} a_i b_i \leq r$, in $O(n^{2-\eps} \cdot \poly(d))$ time.
\end{conjecture}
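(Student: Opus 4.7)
The plan is to derive Conjecture~\ref{conj:MOV} from SETH via the split-and-list reduction of R.\ Williams from CNF-SAT to Orthogonal Vectors. Since $r=0$ is a special case of the problem in the conjecture, it suffices to rule out an $O(n^{2-\eps}\poly(d))$-time algorithm at $r=0$ under SETH; any algorithm taking $r$ as input and working for all $r \ge 0$ must in particular handle $r=0$, which is the standard Orthogonal Vectors problem.

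Given a $k$-CNF formula $\varphi$ on $N$ variables with $m$ clauses $C_1,\dots,C_m$, I would split the variables into halves $X_1,X_2$ of size $N/2$ each and enumerate the $2^{N/2}$ partial assignments of each half. For each partial assignment $\alpha$ on $X_1$, I build a vector $a_\alpha \in \{0,1\}^m$ whose $j$-th coordinate equals $1$ exactly when $\alpha$ falsifies every literal of $C_j$ appearing in $X_1$; the vectors $b_\beta$ are defined symmetrically for partial assignments $\beta$ on $X_2$. Setting $A=\{a_\alpha\}_\alpha$ and $B=\{b_\beta\}_\beta$ yields $n = 2^{N/2}$ vectors of dimension $d = m = \poly(N)$, and the correctness property is that $(a_\alpha)_j \cdot (b_\beta)_j = 1$ iff clause $C_j$ is unsatisfied by the joint assignment $\alpha \cup \beta$. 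Hence $a_\alpha \cdot b_\beta = 0$ iff $\alpha \cup \beta$ satisfies $\varphi$, so an orthogonal pair exists iff $\varphi$ is satisfiable. An $O(n^{2-\eps}\poly(d))$-time algorithm on this instance would decide $k$-SAT in time $O(2^{(1-\eps/2)N}\poly(N))$ for every constant $k$, refuting SETH, since the $\poly(N)$ overhead is dominated by the $2^{-(\eps/2)N}$ savings.

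The only subtlety worth highlighting, rather than a serious obstacle, is the quantifier exchange. SETH asserts ``for every $\eps>0$ there exists $k$ such that $k$-SAT requires $2^{(1-\eps)N}$ time,'' whereas Conjecture~\ref{conj:MOV} fixes a single problem. The reduction sidesteps this by using the same hypothetical vector algorithm uniformly for every $k$, so a fixed $\eps>0$ violating the conjecture produces a speed-up violating SETH simultaneously for all $k$, which is strictly stronger than what SETH alone requires. Minor bookkeeping—padding to make $|A| = |B| = n$ an exact power of two, and handling small clauses whose literals all lie in one half—does not affect the asymptotics, so the implication SETH $\Rightarrow$ Conjecture~\ref{conj:MOV} goes through with room to spare.
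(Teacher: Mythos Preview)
First, a framing point: the statement is a \emph{conjecture} in the paper, used as a hardness hypothesis, so neither you nor the paper proves it unconditionally. What both of you actually do is argue that SETH implies it, and your derivation of that implication is correct.

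Your route and the paper's differ. You observe that $r=0$ specializes to Orthogonal Vectors and invoke Williams' split-and-list reduction from CNF-SAT directly: any algorithm solving the problem for all $r\ge 0$ in particular solves $r=0$, so SETH falls. The paper instead proves Lemma~\ref{lem:maxsat}, reducing MAX-CNF-SAT to Most-Orthogonal Vectors with \emph{general} $r$ via binary search on the minimum inner product. Your argument is shorter and entirely sufficient for the bare implication SETH $\Rightarrow$ Conjecture~\ref{conj:MOV}. The paper's detour through MAX-CNF-SAT buys something extra: it shows that refuting the conjecture would yield faster MAX-CNF-SAT algorithms, not merely faster CNF-SAT, which supports the paper's remark that Conjecture~\ref{conj:MOV} may be strictly harder to refute than SETH and, separately, that even $n^2/\text{superpolylog}$ algorithms for the downstream problems would have nontrivial MAX-SAT consequences. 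So your approach is the economical one for the stated goal; the paper's is chosen to extract a stronger corollary.
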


\paragraph{Previous work.}
Out of the many recent SETH-based hardness results, most relevant to our work are the following three results concerning sequence similarity measures.

Abboud, Vassilevska Williams and Weimann \cite{AVW14} proved that a truly sub-quadratic algorithm\footnote{A truly (or strongly) sub-quadratic algorithm is an algorithm with $O(n^{2-\eps})$ running time, for some $\eps>0$.} for alignment problems like Local Alignment and Local-LCS refutes SETH.
However, the ``locality" of those measures was heavily used in the reductions, and the results did not imply any barrier for ``global" measures like LCS.

Bringmann~\cite{Bring} proved a similar lower bound for the computation of the Frechet distance problem.
As mentioned earlier, DTWD is equivalent to Frechet if we replace the ``max" with a ``sum".

Most recently, Backurs and Indyk~\cite{edit_hardness} proved a similar quadratic lower bound for Edit-Distance.
LCS and Edit-Distance are closely related.
A simple observation is that the computation of the LCS is equivalent to the computation of the Edit-Distance when only deletions and insertions are allowed, but no substitutions.
Thus, intuitively, LCS seems like an easier version of Edit-Distance, since it a solution has fewer degrees of freedom, and the lower bound for Edit-Distance does not immediately imply any hardness for LCS.

\subsection{Our results}
Our main result is to show that a truly sub-quadratic algorithm for LCS or DTWD refutes Conjecture~\ref{conj:MOV} (and SETH), and should therefore be considered beyond the reach of current algorithmic techniques, if not impossible.
Our results justify the use of sub-quadratic time heuristics and approximations in practice, and add two important problems to the list of SETH-hard problems!

\begin{theorem}
\label{thm:main}
If there is an $\eps>0$ such that either
\begin{itemize}
\item LCS 
over an alphabet of size $7$ can be computed in $O(n^{2-\eps})$ time, or
\item DTWD 
over symbols from an alphabet of size $5$ can be computed in $O(n^{2-\eps})$ time,
\end{itemize}
then Conjecture~\ref{conj:MOV} is false.
\end{theorem}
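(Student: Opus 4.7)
The approach is to reduce the Most-Orthogonal Vectors problem of Conjecture~\ref{conj:MOV} to both LCS and DTWD, following the gadget-based template that has succeeded for Edit-Distance~\cite{edit_hardness} and Frechet~\cite{Bring}. Given $A, B \subseteq \{0,1\}^d$ with $|A|=|B|=n$ and threshold $r$, the goal is to build strings $S_A, S_B$ of length $N = O(n \cdot \poly(d))$ over a constant-size alphabet such that the YES and NO instances of Most-Orthogonal Vectors are separated by an explicit threshold $T$ on $\LCS(S_A, S_B)$ (respectively on $\DTW(S_A, S_B)$, with the inequality reversed). Any $O(N^{2-\eps})$ algorithm for either measure would then decide the MOV problem in $O(n^{2-\eps} \cdot \poly(d))$ time, contradicting Conjecture~\ref{conj:MOV}.

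The construction proceeds in two layers. First, for each vector $v \in \{0,1\}^d$ I would design short gadget strings $\phi_A(v), \phi_B(v)$ such that, for every $a, b \in \{0,1\}^d$, the pairwise gadget score $\LCS(\phi_A(a), \phi_B(b))$ (resp.\ $\DTW$) is an affine function of $\langle a, b \rangle$ whose slope and intercept do not depend on the particular pair $(a,b)$. A natural attempt is to encode each coordinate by a short block, using distinct ``0-bit'' and ``1-bit'' patterns padded by neutral symbols, so that only coordinates where both $a$ and $b$ carry a $1$ contribute to the score. Second, concatenate the gadgets into $S_A$ and $S_B$ with carefully chosen guard/padding strings between consecutive gadgets, arranged so that any nearly optimal alignment of $S_A$ with $S_B$ must commit to one pair $(i,j)$: the total score decomposes as a fixed baseline (depending only on $N$) plus the pairwise gadget cost for $(a_i, b_j)$, and the overall optimum is then achieved by the pair minimizing $\langle a_i, b_j\rangle$.

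The main difficulty is designing the selection gadget to force this single-pair behaviour while using only seven symbols for LCS and five for DTWD. For LCS, which is an additive maximization, one must prevent the optimum from picking up small bonuses from many off-diagonal gadget pairs simultaneously; this requires guard symbols that appear in $S_A$ but nowhere inside any $\phi_B(b_j)$ (and vice versa), together with enough replication that any ``diagonal'' alignment matching $\phi_A(a_i)$ with $\phi_B(b_j)$ for $i\neq j$ is strictly dominated by a legitimate single-pair alignment. For DTWD the sum-of-distances structure invites a different gadget: long blocks of dedicated symbols placed between consecutive vector gadgets that are expensive to warp across, creating unavoidable cost barriers that dwarf the signal. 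In both cases the numerical balance --- the gadget reward scales with $d$ while the baseline scales with $N$ --- is what ultimately pins down the alphabet size, and squeezing all of the required structural distinctions (vector encoding, intra-gadget padding, coordinate separators, inter-gadget guards, and outer frame markers) into only seven resp.\ five symbols will be the tightest part of the construction.
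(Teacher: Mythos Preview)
Your high-level template is right, but two of the paper's key technical moves are missing from your plan, and for DTWD your route diverges entirely from what the paper actually does.

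\textbf{LCS.} The paper does not build the LCS gadgets directly. It first reduces Most-Orthogonal Vectors to a \emph{weighted} LCS problem (WLCS), where each alphabet symbol carries an integer weight, and only then reduces WLCS to LCS by the simple trick of replacing each symbol $\ell$ by $w(\ell)$ copies of itself (Lemma~\ref{lem:wlcs}). Working in WLCS lets one assign large weights to the separator symbols $0,2,3$ and tiny weights to the coordinate symbols $4,5,6$, which is exactly what makes the selection analysis tractable and what squeezes everything into seven symbols. Your proposal to do this directly with ``guard symbols'' and ``replication'' is not obviously wrong, but the paper explicitly flags the WLCS detour as its first main contribution, and without it the analysis balloons.

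Your description of the selection gadget (``commit to one pair $(i,j)$'') is also not quite the mechanism used. The paper puts \emph{all} $n$ vector gadgets of $A$ into $P_1$, and $3n-1$ gadgets into $P_2$ (the $n$ real $B$-gadgets flanked by $n-1$ all-ones dummies on each side). The optimal matching aligns the $n$ gadgets of $P_1$ with some window of exactly $n$ consecutive gadgets of $P_2$; the weights on $0,2,3$ are tuned so that any window of size $\neq n$ is strictly worse. The signal then comes from whether \emph{some} pair among the $n$ aligned ones is $r$-far. Proving that the window size must be exactly $n$ is the hard part, and the paper does it via an inductive argument (Claims~\ref{cl:0} and~\ref{cl:1}) rather than the case analysis of~\cite{edit_hardness}; this is flagged as the second main contribution.

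\textbf{DTWD.} Here your plan and the paper's proof are unrelated. The paper does \emph{not} build DTWD gadgets from scratch with ``long blocks \ldots\ expensive to warp across.'' Instead it takes the hard Edit-Distance instances $P_1',P_2'$ from~\cite{edit_hardness} (over four symbols), defines a simple transformation $f$ that interleaves a fifth symbol between consecutive characters, and proves that on these particular instances $\EDIT(P_1',P_2')=\DTWD(f(P_1'),f(P_2'))$ (Lemmas~\ref{edit_leq_dtwd} and~\ref{dtwd_eq_edit}). One direction holds for arbitrary strings; the other uses the specific structure of the Edit-Distance construction. Your direct-gadget approach might be workable, but it is not what establishes the five-symbol bound here.
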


We note that the non-existence of $O(n^{2-\epsilon})$ algorithm for $\DTWD$ between two sequences of symbols
over an alphabet of size $5$ implies that there
is no $O(n^{2-\epsilon})$ time algorithm for $\DTWD$ between two sequences of points from $\ell_2^4$ ($4$-dimensional
Euclidean space). This follows because we can choose $5$ points in $4$-dimensional Euclidean space so that
any two points are at distance $1$ from each other, i.e., choose the vertices of a regular $4$-simplex.

Next, we consider the problem of computing the LCS of $k>2$ strings, which also is of great theoretical and practical interest.
A simple dynamic programming algorithm solves $k$-LCS in $O(n^k)$ time, and the problem is known to be NP-hard in general, even when the strings are binary~\cite{Maier78}.
When $k$ is a parameter, the problem is $W[1]$-hard, even over a fixed size alphabet, by a reduction from Clique~\cite{Piet03}. 
The parameters of the reduction imply that an $n^{o(k)}$ algorithm for $k$-LCS would refute ETH~\footnote{The exponential time hypothesis (ETH) is a weaker version of SETH: it asserts that there is some $\eps>0$ such that $3$SAT on $n$ variables requires $\Omega(2^{\eps n})$ time.}, and an algorithm with running time sufficiently faster than $O(n^{k/7})$ would imply a new algorithm for $k$-Clique.
However, no results ruling out $O(n^{k-1})$ or even $O(n^{k/2})$ upper bounds were known.

In this work, we prove that even a slight improvement over the dynamic programming algorithm is not possible under SETH when the alphabet is of size $O(k)$. 

\begin{theorem}
\label{thm:klcs}
If there is a constant $\eps>0$, an integer $k \geq 2$, and an algorithm that can solve $k$-LCS on strings of length $n$ over an alphabet of size $O(k)$ in $O(n^{k-\eps})$ time, then SETH is false.
\end{theorem}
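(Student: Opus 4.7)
The plan is to reduce $k$-Orthogonal Vectors ($k$-OV) to $k$-LCS. In $k$-OV we are given $k$ sets $A_1,\dots,A_k$ of $n$ vectors in $\{0,1\}^d$ and must decide whether there exist $a_j\in A_j$ with $\sum_{i=1}^d\prod_{j=1}^k(a_j)_i = 0$. By the standard split-and-list argument generalizing the $k=2$ case behind Conjecture~\ref{conj:MOV}, an $n^{k-\varepsilon}\poly(d)$ algorithm for $k$-OV would refute SETH. It therefore suffices to build, in $O(n\cdot\poly(d))$ time, strings $S_1,\dots,S_k$ of length $N=O(n\cdot\poly(d))$ over an alphabet of size $O(k)$, together with a prescribed threshold $T$, so that the $k$-way LCS equals $T$ iff a $k$-orthogonal tuple exists, and is at most $T-1$ otherwise.

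I would first construct coordinate gadgets $C_j(b)$, one for each $j\in[k]$ and $b\in\{0,1\}$, over a constant-size sub-alphabet, with the property that the $k$-way LCS of $(C_1(b_1),\dots,C_k(b_k))$ equals some fixed value $L$ exactly when $\prod_j b_j=0$, and is strictly smaller (say $L-1$) when all $b_j=1$. A promising design gives each $C_j(b)$ the shape of a short word in which only the all-ones configuration forces two characters to compete for a single alignment slot. These coordinate gadgets are then concatenated to form vector gadgets $V_j(a)=C_j(a_1)\cdots C_j(a_d)$ whose $k$-way LCS equals $dL$ iff $a_1,\dots,a_k$ are $k$-orthogonal, and is strictly less otherwise.

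Next, each input string is formed as $S_j = \Delta_j\, V_j(a^{(j)}_1)\,\Delta_j\, V_j(a^{(j)}_2)\,\Delta_j\cdots\Delta_j\, V_j(a^{(j)}_n)\,\Delta_j$, where $\Delta_j$ is a large block of separator symbols. With an $O(k)$ alphabet budget, I expect to use a dedicated separator symbol per string $j$ (plus a few global symbols from the coordinate gadgets), and to calibrate the block length so that the optimal $k$-way LCS is forced to pick exactly one vector gadget from each $S_j$ and to align them at matching positional indices across all $k$ strings. The main obstacle is this alignment argument for $k>2$: unlike the two-string case treated in Theorem~\ref{thm:main}, misaligned choices in $k$ strings can partially overlap in many configurations, so one must design a careful charging scheme — likely using per-string separators and block lengths growing with $d$ — that shows every misalignment loses strictly more separator credit than the at most $d$ coordinate units it could possibly gain elsewhere.

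Finally, setting $T$ to be the total score achieved by the clean, aligned orthogonal tuple (namely, the credit from one full $\Delta_j$-sequence plus $dL$ from the aligned vector gadgets), one verifies: (i) if a $k$-OV solution exists, the LCS reaches $T$; (ii) if not, either the aligned vectors are non-orthogonal and lose at least one unit inside the vector gadgets, or the alignment is off and loses at least one unit from the separator argument, so the LCS is at most $T-1$. An $O(N^{k-\varepsilon})$ algorithm for $k$-LCS then solves $k$-OV in $O((n\,\poly(d))^{k-\varepsilon})$ time, contradicting SETH via the standard CNF-SAT to $k$-OV reduction.
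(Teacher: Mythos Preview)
Your high-level plan (reduce $k$-OV to $k$-LCS via coordinate gadgets, vector gadgets, and a selection/alignment mechanism) matches the paper's architecture, but the proposal as written has a genuine gap in the selection step, which you yourself flag as ``the main obstacle'' without resolving.

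Concretely, your separator scheme cannot work as stated. If $\Delta_j$ is a symbol dedicated to string $S_j$, then $\Delta_j$ does not occur in $S_{j'}$ for $j'\neq j$, so no $\Delta_j$ can ever participate in a $k$-way common subsequence; the ``credit from one full $\Delta_j$-sequence'' you put into the threshold $T$ is always zero, and the separators exert no alignment pressure at all. The whole point of the $O(k)$-size alphabet is that the $k-1$ extra symbols must appear in \emph{all} strings, arranged so that matching them across strings competes with spreading the alignment over too many vector gadgets. The paper does this with a hierarchy of symbols $3_2,\ldots,3_k$ (plus delimiters $0,2$) that appear in every $P_i$: between consecutive vector gadgets of $P_i$ one inserts the block $3_2\cdots 3_i$, while the far left and right of $P_i$ are padded with long runs of $3_{i+1}\cdots 3_k$. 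This asymmetric placement is what forces, in an optimal matching, exactly $n$ intervals to be used from each string; the counting is done via a $k$-string analogue of the inductive $W/W_0$ argument (Claims~\ref{cl:k0} and~\ref{cl:k1}) together with a careful bound (Claim~\ref{cl:tricky}) on how many $3_i$'s remain matchable once $N_i$ intervals are used. None of this is suggested by your sketch.

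Two further points. First, the paper does not work directly with $k$-LCS but with a weighted version ($k$-WLCS), which it then reduces to $k$-LCS by the symbol-copying trick of Lemma~\ref{lem:kwlcs}; this is what makes both the coordinate gadgets and the selection analysis tractable, since one can simply make $d$, $e$, $0$, $2$, and the $3_i$'s heavy enough to dominate lower-level choices. Second, your coordinate-gadget description (``only the all-ones configuration forces two characters to compete'') is not yet a construction; the paper's gadget encodes, for each $t$, the $t$-th bit of all integers $0,\ldots,2^k-2$ (flipped if the coordinate is $1$), so that the $k$-way LCS drops by exactly one precisely when all coordinates are $1$. Without something at this level of specificity, the claim that $k$-LCS of the vector gadgets detects orthogonality is unsupported.
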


A main question we leave open is whether the same lower bound holds when the alphabet size is a constant independent of $k$.
In Section~\ref{sec:klcs} we prove Theorem~\ref{thm:klcs} and make a step towards resolving the latter question by proving that a problem we call Local-$k$-LCS has such a tight $n^{k-o(1)}$ lower bound under Conjecture~\ref{conj:MOV} even when the alphabet size is $O(1)$.

Finally, we consider the possibility of truly sub-quadratic algorithms for approximating these similarity measures.
The LCS and Edit-Distance reductions do not imply any non-trivial hardness of approximation.
For Frechet in $2$-dimensional Euclidean space, Bringmann~\cite{Bring} was able to rule out truly sub-quadratic $1.0001$-approximation algorithms.
Here, we show that Bringmann's construction implies approximation hardness for DTWD and Frechet when the distance function between points is arbitrary, and is not required to satisfy the triangle inequality.
The details are presented in Section~\ref{sec:approx}.

\subsection{Technical contribution}
Our reductions build up on ideas from previous SETH-based hardness results for sequence alignment problems, and are most similar to the Edit-Distance reduction of \cite{edit_hardness}, with several new ideas in the constructions and the analysis.
As in previous reductions, we will need two kinds of gadgets: the vector or assignment gadgets, and the selection gadgets.
Two vector gadgets will be ``similar'' iff the two vectors satisfy the property we are interested in (we want to find a pair of
vectors that together satisfy some certain property).
The \emph{selection gadget} construction will make sure that the existence of a pair of ``similar'' vector-gadgets (i.e., the existence of a pair of vectors with the property), determines the overall similarity between the sequences. That is, if there is a pair of vectors
satisfying the property, the sequences are more ``similar'' than if there is non.
Typically, the vector-gadgets are easier to analyze, while the selection-gadgets might require very careful arguments.

There are multiple challenges in constructing and analyzing a reduction to LCS.
Our first main contribution was to prove a reduction from a weighted version of LCS (WLCS), in which different letters are more valuable than others in the optimal solution, to LCS.
Reducing problems to WLCS is a significantly easier and cleaner task than reducing to LCS.
Our second main contribution was in the analysis of the selection gadgets. 
The approach of \cite{edit_hardness} to analyze the selection gadgets involved a case-analysis which would have been extremely tedious if applied to LCS.
Instead, we use an inductive argument which decreases the number of cases significantly.

One way to show hardness of DTWD would be to show a reduction from Edit-Distance.
However, we were not able to show such a reduction in general.
Instead, we construct a mapping $f$ with the following property.
Given the hard instance of Edit-Distance, that were constructed in \cite{edit_hardness}, consisting of two sequences $x$ and $y$, we have that $\EDIT(x,y)=\DTWD(f(x),f(y))$. This requires carefully checking that this equality holds for particularly structures sequences.

\section{Preliminaries}

For an integer $n$, $[n]$ stands for $\{1,2,3,...,n\}$.

\subsection{Formal definitions of the similarity measures}

\begin{definition}[Longest Common Subsequence] \label{def:lcs}
For two sequences $P_1$ and $P_2$ of length $n$ over an alphabet $\Sigma$, the longest sequence $X$ that appears in both $P_1,P_2$ as a subsequence is the \emph{longest common subsequence} (LCS) of $P_1,P_2$ and we say that $LCS(P_1,P_2)=|X|$. The Longest Common Subsequence problem asks to output $LCS(P_1,P_2)$.
\end{definition}

\begin{definition}[Dynamic time warping distance]
For two sequences $x$ and $y$ of $n$ points from a set $\Sigma$ and a
distance function $d:\Sigma \times \Sigma \to \R^{0+}$,
the \emph{dynamic
time warping distance}, denoted by $\DTWD(x,y)$, is the minimum cost
of a (monotone) \emph{traversal} of $x$ and $y$. 

A traversal of the two sequences $x,y$ has the following form: We have two markers. Initially, one is located at the beginning of $x$, and the other is located at the
beginning of $y$.
 At every step, one or both of the
markers simultaneously move one point forward in their corresponding
sequences. 
At the end, both markers must be located at the last point of their corresponding
sequence.

 To determine the \emph{cost} of a traversal, we consider all the $O(n)$ steps of the traversal, and add up the following quantities to the final cost.
Let the configuration of a step be the pair of symbols $s$ and $t$ that the first and second markers are pointing at, respectively, then the contribution of this step to the final cost is $d(s,t)$.

The DTWD problems asks to output $DTWD(x,y)$.
\end{definition}

In particular, we will be interested in the following special case of DTWD.

\begin{definition}[$\DTWD$ over symbols]The DTWD problem over sequences of symbols, is the special case of DTWD in which the points come from an alphabet $\Sigma$ and the distance function is such that for any two symbols $s,t \in \Sigma$, $d(s,t)=1$ if $s \neq t$
and $d(s,t)=0$ otherwise.
\end{definition}

Besides LCS and DTWD which are central to this work, the following two important measures will be referred to in multiple places in the paper.

\begin{definition}[Edit-Distance] 
For any two sequences $x$ and $y$ over an alphabet $\Sigma$, the edit distance $\EDIT(x,y)$ is equal to the minimum number of symbol insertions, symbol deletions or symbol substitutions needed to transform $x$ into $y$. 
The Edit-Distance problem asks to output $\EDIT(x,y)$ for two given sequences $x,y$.
\end{definition}

\begin{definition}[The discrete Frechet distance]
The definition of the Frechet distance between two sequences of points is equivalent to the definition of the DTWD with
the following difference. Instead of defining the cost of a traversal to be the \emph{sum} of $d(s,t)$ for all the configurations of points $s$ and $t$ from the traversal, we define it to be the \emph{maximum} such distance $d(s,t)$. 
The Frechet problem asks to compute the minimum achievable cost of a traversal of two given sequences.
\end{definition}

\subsection{Satisfiability and Orthogonal Vectors}
\label{sec:sat}

%
%
%
%
%

To prove hardness based on Conjecture~\ref{conj:MOV} and therefore SETH, we will show reductions from the following vector-finding problems.

\begin{definition}[Orthogonal Vectors] \label{Def:Ort}
Given two lists $\{\alpha_i\}_{i\in [n]}$ and $\{\beta_i\}_{i\in [n]}$ of vectors $\alpha_i,\beta_i \in \{0,1\}^d$, is there a pair $\alpha_i,\beta_j$ that is orthogonal, $\sum_{h=1}^d \alpha_i[h]\cdot \beta_j[h] = 0$?
\end{definition}

This problem is known under many names and equivalent formulations, e.g. Batched Partial Match, Disjoint Pair, and Orthogonal Pair. 
Starting with the reduction of Williams~\cite{williams2005new}, this problem or variants of it have been used in every hardness result for a problem in P that is based on SETH, via the following theorem.

\begin{theorem}[Williams~\cite{williams2005new}]
If for some $\eps>0$, Orthogonal Vectors on $n$ vectors in $\{0,1\}^d$ for $d=O(\log{n})$ can be solved in $O(n^{2-\eps})$ time, then CNF-SAT on $n$ variables and $\poly(n)$ clauses can be solved in $O(2^{(1-\eps/2)n} poly(n))$ time, and SETH is false.
\end{theorem}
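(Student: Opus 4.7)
The plan is to use the classical ``split and list'' approach of Williams. Given a CNF-SAT instance $\varphi$ on $n$ variables $x_1,\dots,x_n$ and $m=\poly(n)$ clauses $C_1,\dots,C_m$, I would begin by partitioning the variables into two halves $X_1=\{x_1,\dots,x_{n/2}\}$ and $X_2=\{x_{n/2+1},\dots,x_n\}$ and enumerating all $2^{n/2}$ partial assignments to each half. This creates two lists $A,B$ each of size $N=2^{n/2}$, indexed by partial assignments.

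Next I would encode each partial assignment as a vector in $\{0,1\}^m$. For $\alpha:X_1\to\{0,1\}$, define $v_\alpha\in\{0,1\}^m$ by $v_\alpha[i]=0$ if the partial assignment $\alpha$ already satisfies at least one literal of $C_i$, and $v_\alpha[i]=1$ otherwise; define $w_\beta$ analogously for $\beta:X_2\to\{0,1\}$. The key observation is that $v_\alpha[i]\cdot w_\beta[i]=1$ exactly when neither half satisfies $C_i$, i.e., exactly when the full assignment $\alpha\cup\beta$ falsifies $C_i$. Therefore $\langle v_\alpha,w_\beta\rangle=\sum_{i=1}^m v_\alpha[i]w_\beta[i]=0$ if and only if $\alpha\cup\beta$ is a satisfying assignment of $\varphi$. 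Building the two lists takes $\poly(n)\cdot 2^{n/2}$ time.

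With this setup, an $O(N^{2-\eps})$ time algorithm for Orthogonal Vectors immediately yields running time
\[
2^{(n/2)(2-\eps)}\cdot\poly(n)\;=\;2^{(1-\eps/2)n}\cdot\poly(n)
\]
for CNF-SAT, which refutes SETH. It remains to ensure the dimension is $d=O(\log N)$, matching the theorem's hypothesis. Since $\log N=n/2$, this requires $m=O(n)$. The main (and essentially only) obstacle is therefore that a generic CNF-SAT instance may have $m=\poly(n)$ clauses rather than $O(n)$; I would resolve this by invoking the Impagliazzo--Paturi--Zane sparsification lemma, which, for each fixed $k$, reduces $k$-SAT to a subexponential sum of instances with only $O(n)$ clauses, at a cost absorbable into the $\eps$ in the exponent. (Alternatively, one may simply observe that if OV is solvable in $O(N^{2-\eps}\poly(d))$ time for all $d$, the reduction already goes through for $d=\poly(\log N)$, which is the form actually used in the rest of the paper via Conjecture~\ref{conj:MOV}.) Combining the sparsification step with the split-and-list reduction gives the claimed refutation of SETH.
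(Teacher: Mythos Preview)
Your proposal is correct and follows essentially the same split-and-list approach the paper uses (the paper simply defers to Lemma~\ref{lem:maxsat}, whose proof is exactly the partial-assignment-to-vector encoding you describe). You are in fact more careful than the paper here: the paper's deferred proof does not explicitly address the mismatch between $m=\poly(n)$ clauses and the hypothesis $d=O(\log N)$, whereas you correctly invoke the Impagliazzo--Paturi--Zane sparsification lemma to bring the clause count down to $O(n)$ for each fixed $k$, which is precisely what is needed to conclude that SETH fails.
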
 

The proof of this theorem is via the split-and-list technique and will follow from the proof of Lemma~\ref{lem:maxsat} below.
The following is a more general version of the Orthogonal Vectors problem.

\begin{definition}[Most-Orthogonal Vectors]
Given two lists $\{\alpha_i\}_{i\in [n]}$ and $\{\beta_i\}_{i\in [n]}$ of vectors 
$\alpha_i,\beta_i \in \{0,1\}^d$ and an integer $r \in \{0,\ldots,d\}$, 
is there a pair $\alpha_i,\beta_j$ that has inner product at most $r$, 
$\sum_{h=1}^d \alpha_i[h]\cdot \beta_j[h] \leq r$? 
We call any two vectors that satisfy this condition ($r$-)\emph{far}, and ($r$-)\emph{close} vectors otherwise.
\end{definition}

Clearly, an $O(n^{2-\eps})$ algorithm for Most-Orthogonal Vectors on $d$ dimensions implies a similar algorithm for Orthogonal Vectors, while the other direction might not be true.
In fact, while faster, mildly sub-quadratic algorithms are known for $\Ort$ when $d$ is polylogarithmic, with $O(n^2/\text{superpolylog($n$)})$ running times \cite{CIP02,ILPS14,AWY15}, we are not aware of any such algorithms for Most-Orthogonal Vectors. 

Lemma~\ref{lem:maxsat} below shows that such algorithms would imply new $O(2^n/\text{superpoly($n$)})$ algorithms for MAX-CNF-SAT on a polynomial number of clauses.
While such upper bounds are known for CNF-SAT~\cite{AWY15,DH09}, to our knowledge, $o(2^n)$ upper bounds are known for MAX-CNF-SAT only when the number of clauses is linear in the number of variables \cite{DW06,CK04}.
Together with the fact that the reductions from $\MostOrt$ to LCS, DTWD and Edit-Distance incur only a polylogarithmic overhead, this implies that shaving a superpolylogarithmic factor over the quadratic running times for these problems might be difficult.
The possibility of such improvements for pattern matching problems like Edit-Distance was recently suggested by Williams \cite{ryan-apsp}, as another potential application of his breakthrough technique for All-Pairs-Shortest-Paths.

More importantly, Lemma~\ref{lem:maxsat} shows that refuting Conjecture~\ref{conj:MOV} implies an $O(2^{(1-\eps)n}\poly(n))$ algorithm for MAX-CNF-SAT and therefore refutes SETH.

\begin{lemma}
\label{lem:maxsat}
If Most-Orthogonal Vectors on $n$ vectors in $\{0,1\}^d$ can be solved in $T(n,d)$ time, then given a CNF formula on $n$ variables and $M$ clauses, we can compute the maximum number of satisfiable clauses (MAX-CNF-SAT), in $O(T(2^{n/2},M)\cdot \log{M})$ time.
\end{lemma}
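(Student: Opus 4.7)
The plan is to use Williams's split-and-list technique, turning MAX-CNF-SAT into a minimum inner product question that we resolve by binary searching with the Most-Orthogonal Vectors oracle. Partition the $n$ variables into two halves of size $n/2$ and enumerate all $2^{n/2}$ partial assignments for each half, giving sets of assignments that will index the vector families $A$ and $B$.

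For each partial assignment $\alpha$ of the first half, define a vector $a(\alpha)\in\{0,1\}^M$, one coordinate per clause $C$, by setting $a(\alpha)_C=1$ iff no literal of $C$ on a first-half variable is satisfied by $\alpha$; construct $b(\beta)$ analogously for partial assignments $\beta$ of the second half. The key observation is that a clause $C$ is unsatisfied by the combined assignment $(\alpha,\beta)$ iff every one of its literals is falsified, which happens iff all its first-half literals are falsified by $\alpha$ \emph{and} all its second-half literals are falsified by $\beta$, i.e., iff $a(\alpha)_C \cdot b(\beta)_C = 1$. Summing over clauses gives
\[
\#\{\text{clauses unsatisfied by }(\alpha,\beta)\} \;=\; \sum_{C=1}^{M} a(\alpha)_C\, b(\beta)_C \;=\; \langle a(\alpha),\, b(\beta)\rangle,
\]
so the maximum number of satisfiable clauses equals $M - \min_{\alpha,\beta}\langle a(\alpha),b(\beta)\rangle$.

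Building $A=\{a(\alpha)\}$ and $B=\{b(\beta)\}$ takes $O(2^{n/2}\cdot\poly(n,M))$ time, which is absorbed into the stated bound provided $T(n,d)\geq n$. To find the minimum inner product, binary search over $r\in\{0,1,\ldots,M\}$: for each candidate $r$, a single invocation of the $\MostOrt$ algorithm on $A,B$ in dimension $d=M$ decides whether there exists a pair with inner product at most $r$. Standard binary search pins down the minimum using $O(\log M)$ oracle calls, yielding total time $O(T(2^{n/2},M)\cdot \log M)$.

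The only subtle point is the reduction from a minimization to a decision primitive; this is where the $\log M$ factor comes from, and it is both unavoidable given the stated oracle and harmless for the intended application, since composing with SETH-level bounds $T(N,M)=O(N^{2-\eps}\poly(M))$ gives $O(2^{(1-\eps/2)n}\poly(n,M)\log M)$ time for MAX-CNF-SAT, refuting SETH as claimed. I do not anticipate a real obstacle here; the proof is a clean adaptation of the standard split-and-list argument, with the observation that flipping to ``falsified literals'' turns the OR structure of clause satisfaction into the AND (product) structure needed for inner products.
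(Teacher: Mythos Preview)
Your proposal is correct and follows essentially the same approach as the paper: split the variables into two halves, encode each partial assignment by a $0/1$ vector whose $j$th coordinate is $1$ iff the assignment fails to satisfy clause $j$, observe that the inner product counts unsatisfied clauses, and binary search over $r$ with $O(\log M)$ calls to the $\MostOrt$ oracle. The only difference is cosmetic exposition; the construction, the key observation, and the reduction to decision via binary search are identical.
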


\begin{proof}
Given a CNF formula on $n$ variables and $M$ clauses, split the variables into two sets of size $n/2$ and list all $2^{n/2}$ partial assignments to each set.
Define a vector $v(\alpha)$ for each partial assignment $\alpha$ which contains a $0$ at coordinate $j \in [M]$ if $\alpha$ sets any of the literals of the $j^{th}$ clause of the formula to true, and $1$ otherwise. 
In other words, it contains a $0$ if the partial assignment satisfies the clause and $1$ otherwise.
Now, observe that if $\alpha,\beta$ are a pair of partial assignments for the first and second set of variables, then the inner product of $v(\alpha)$ and $v(\beta)$ is equal to the number of clauses that the combined assignment $(\alpha , \beta)$ does not satisfy.
Therefore, to find the assignment that maximizes the number of satisfied clauses, it is enough to find a pair of partial assignments $\alpha,\beta$ such that the inner product of $v(\alpha),v(\beta)$ is minimized.
The latter can be easily reduced to $O(\log{M})$ calls to an oracle for Most-Orthogonal Vectors on $N=2^{n/2}$ vectors in $\{0,1\}^M$ with a standard binary search. 
\end{proof}

By the above discussion, a lower bound that is based on Most-Orthogonal Vectors can be considered stronger than one that is only based on SETH.

\section{Hardness for LCS}

In this section we provide evidence for the hardness of the Longest Common Subsequence problem, and prove the first item in Theorem~\ref{thm:main}.

As an intermediate step, we first show evidence that solving a more general version of the problem in strongly subquadratic time is impossible under Conjecture~\ref{conj:MOV}.

\begin{definition}[Weighted Longest Common Subsequence ($\WLCS$)]
For two sequences $P_1$ and $P_2$ of length $n$ over an 
alphabet $\Sigma$ and a weight function $w:\Sigma \to [K]$,  
let $X$ be the sequence that appears in both $P_1,P_2$ as a subsequence 
and maximizes the expression $W(X)=\sum_{i=1}^{|X|} w(x[i])$. 
We say that $X$ is the $\WLCS$ of $P_1,P_2$ and write $\WLCS(P_1,P_2)=W(X)$.
The Weighted Longest Common Subsequence problem asks to output $\WLCS(P_1,P_2)$.
\end{definition}

Note that a common subsequence $X$ of two sequences $P_1,P_2$ can be thought of as an alignment or a matching $A = \{ (a_i,b_i) \}_{i=1}^{|X|}$ between the two sequences, so that for all $i \in [|X|]: P_1[a_i]=P_2[b_i]$, and $a_1<\cdots<a_{|X|}$ and $b_1 < \cdots < b_{|X|}$.
Clearly, the weight $\sum_{i=1}^{|X|}P_1[a_i]=\sum_{i=1}^{|X|}P_2[b_i]$ 
of the matching $A$ correspond to the length $W(X)$ of the weighted length of the common subsequence $X$.

In our proofs, we will find useful the following relation between pairs of indices.
For a pair $(x,y)$ and a pair $(x',y')$ of indices 
we say that they are in \emph{conflict} or they \emph{cross} 
if $x<x'$ and $y>y'$ or $x>x'$ and $y<y'$.

\subsection{Reducing $\WLCS$ to $\LCS$}

The following simple reduction from $\WLCS$ to $\LCS$ gives a way to translate a lower bound for WLCS to a lower bound for LCS, and allows us to simplify our proofs.

\begin{lemma}
\label{lem:wlcs}
Computing the $\WLCS$ of two sequences of length $n$ over $\Sigma$ with 
weights $w: \Sigma \to [K]$ can be reduced to computing the $\LCS$ of 
two sequences of length $O(Kn)$ over $\Sigma$.
\end{lemma}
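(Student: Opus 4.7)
The plan is to use the obvious blow-up reduction: form $P_1'$ from $P_1$ by replacing each symbol $c \in P_1$ with $w(c)$ consecutive copies of $c$, and define $P_2'$ analogously. Both $P_1'$ and $P_2'$ use the same alphabet $\Sigma$ and have length at most $Kn$. I will then argue that $\LCS(P_1',P_2') = \WLCS(P_1,P_2)$, which finishes the reduction.

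The direction $\LCS(P_1',P_2') \geq \WLCS(P_1,P_2)$ is the easy one: any weighted alignment $\{(a_i,b_i)\}$ of $P_1,P_2$ of weight $W$ can be lifted by fully matching each aligned block $B_{a_i}$ to $C_{b_i}$ (both are runs of the same symbol, of equal length $w(P_1[a_i])$), yielding a common subsequence of $P_1',P_2'$ of length exactly $W$. Non-crossing of the lifted alignment follows from non-crossing of the original.

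The reverse inequality is the nontrivial direction, and my plan is to prove a stronger table-level identity by induction. Let $e_i$ and $f_j$ denote the endpoints in $P_1'$ and $P_2'$ of the blocks corresponding to $P_1[i]$ and $P_2[j]$; let $L[a,b]$ be the usual LCS table of $P_1',P_2'$ and $W[i,j]$ the WLCS table of $P_1,P_2$. I will show $L[e_i,f_j] = W[i,j]$ for all $i,j$ by strong induction on $i+j$. In the case $P_1[i] = P_2[j] = c$ with $w = w(c)$, the last $w$ characters of $P_1'[1..e_i]$ and of $P_2'[1..f_j]$ are all $c$, so applying the LCS recurrence $L[a,b] = L[a-1,b-1]+1$ iteratively $w$ times gives
\[
L[e_i,f_j] \;=\; L[e_{i-1},f_{j-1}] + w,
\]
matching the WLCS branch $W[i,j] \geq W[i-1,j-1]+w$. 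In the case $P_1[i] \neq P_2[j]$, I look at the last matched symbol in any LCS of $P_1'[1..e_i],P_2'[1..f_j]$: if that symbol is not $P_1[i]$ the last $P_1'$-position used is at most $e_{i-1}$, and symmetrically, so $L[e_i,f_j] = \max(L[e_{i-1},f_j],\,L[e_i,f_{j-1}])$, matching the WLCS recurrence.

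The main obstacle is the matching-symbols case: I need the telescoping of the LCS recurrence to run exactly $w$ times and to land on $L[e_{i-1},f_{j-1}]$, which uses the fact that $P_1[i] = P_2[j]$ forces $w(P_1[i]) = w(P_2[j])$ so the two blocks have equal length. Once this step and the last-symbol case analysis are in place, the induction closes and evaluating at $(i,j) = (n,n)$ yields $\LCS(P_1',P_2') = \WLCS(P_1,P_2)$, which proves the lemma.
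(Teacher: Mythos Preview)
Your proposal is correct, and it takes a genuinely different route from the paper's proof.

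The paper argues the hard direction $\LCS(P_1',P_2')\le \WLCS(P_1,P_2)$ by a \emph{matching-rewriting} argument: starting from an optimal LCS matching of $P_1',P_2'$, it scans the strings and, whenever a block is only partially matched, replaces those pairs by a full block-to-block alignment, then shows that at most $w(\ell)$ crossing pairs are removed while $w(\ell)$ new pairs are added. After all modifications, whole blocks are aligned and one contracts back to a weighted alignment of $P_1,P_2$.

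Your approach instead works directly with the DP tables, proving $L[e_i,f_j]=W[i,j]$ at block boundaries by induction on $i+j$. The matching case telescopes the unweighted recurrence $L[a,b]=L[a-1,b-1]+1$ across the two equal-length blocks, and the non-matching case uses the observation that the last matched symbol of an optimal LCS must miss at least one of the two final blocks. Both steps are sound as you wrote them; one small point worth making explicit is that in the matching case you actually get the \emph{equality} $W[i,j]=W[i-1,j-1]+w$ (not just $\ge$), since $W[i-1,j]\le W[i-1,j-1]+w$ and symmetrically, which is what closes the induction for the stated equality.

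Your argument is shorter and arguably cleaner for two strings. The paper's rewriting argument, on the other hand, is phrased purely in terms of matchings and extends with essentially no change to the $k$-string version (their Lemma~\ref{lem:kwlcs}); your DP-table induction would also extend, but the case analysis for the multidimensional recurrence is a bit heavier.
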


\begin{proof}
The reduction simply copies each symbol $\ell \in \Sigma$ in each of the sequences $w(\ell)$ times.
That is, we define a mapping $f$ from symbols in $\Sigma$ to sequences of length up to $K$ so that for any $\ell \in \Sigma$, $f(\ell) = [ \ell^{w(\ell)}] \in \Sigma^{w(\ell)}$.

For a sequence $P$ of length $n$ over $\Sigma$, let $f(P) = \bigcirc_{i=1}^{n} f(P[i])$. That is, replace the $i^{th}$ symbol $P[i]$ with the 
sequence $f(P[i])$ defined above.

Note that $|f(P)| \leq K |P|$ and the reduction follows from the next claim.

\begin{claim}
For any two sequences $P_1,P_2$ of length $n$ over $\Sigma$, the mapping $f$ satisfies: \[WLCS(P_1,P_2) = LCS(f(P_1),f(P_2)).\]
\end{claim}

\begin{proof}
For brevity of notation, we let $P_1'=f(P_1)$ and $P_2'=f(P_2)$.

First, observe that $WLCS(P_1,P_2) \leq LCS(P_1',P_2')$, since for any common subsequence $X$ of $P_1,P_2$, the sequence $f(X)$ is a common subsequence of $P_1',P_2'$ and has length $|f(X)|=\sum_{i=1}^{n} |f(X[i])| = \sum_{i=1}^n w(X[i]) = W(X)$.

In the remainder of this proof, we show that $WLCS(P_1,P_2) \geq LCS(P_1',P_2')$.
Let $X$ be the LCS of $P_1',P_2'$ and consider a corresponding matching $A$.
%

Let $x \in \{1,2\}$. We say that a symbol $\ell$ in $P_x'$ at index $i \leq Kn$ belongs to interval $I_x(i) \in [n]$, iff this symbol was generated when mapping $P_x[I_x(i)]$ to the subsequence $f(\ell)$. 
Moreover, we say that it is at index $J_x(i) \in [w(\ell)]$ in interval $I_x(i)$, iff it is the $J_x(i)^{th}$ symbol in that interval. 

We will go over the symbols $\ell \in \Sigma$ of the alphabet in an arbitrary order, and perform the following modifications to $X$ and the matching $A$ for each such symbol in turn.

Go over the indices $i$ of $P_1'$ that are matched in $A$ to some index $j$ of $P_2'$, and for which $P_1'[i]=\ell$, in increasing order.
Consider the intervals $I_1(i)$ and $I_2(j)$, both of which contain the symbol $\ell$, $w(\ell)$ times.
Throughout our scan, we maintain the invariant that: $i$ is the first index to be matched to the interval $I_2(j)$.

If $J_1(i)=J_2(j)=1$, and the next $w(\ell)-1$ pairs in our matching $A$ are matching the rest of the interval $I_1(i)$ to the interval $I_2(j)$, we do not need to modify anything, and we move on to the next index $i'$ that is not a part of this interval $I_1(i)$ and is matched to some index $j'$ - note that at this point, $i'$ satisfies the invariant, since it cannot also be matched to the interval $I_2(j)$ by the pigeonhole principal, and therefore $I_2(j')>I_2(j)$ and $i'$ is the first index to be matched to this interval.

Otherwise, we modify $A$ so that now the whole intervals $I_1(i)$ and $I_2(j)$ are matched to one another: for each $i',j'$ 
such that $I_1(i')=I_1(i), I_2(j')=I_2(j)$, and $J_1(i')=J_2(j')$, 
we add pair $(i',j')$ to the matching $A$, 
and remove any conflicting pairs from $A$.
We claim that we obtain a matching of at least the original size, since we add $w(\ell)$ pairs and we remove only up to $w(\ell)$ pairs.
To see this, note that for a pair $(x,y)$ to be in conflict with one of the pairs we added, it must be one of the following three types: (1) $I_1(x)=I_1(i)$ and $I_2(y)=I_2(j)$, or (2) $I_1(x)=I_1(i)$ but $I_2(y)> I_2(j)$, or (3) $I_2(y) = I_2(j)$ but $I_1(x)>I_1(i)$. 
Here, we use the invariant to rule out pairs for which $I_1(x) < I_1(i)$ or $I_2(y)<I_2(j)$.
However, in any matching $A$, there cannot be both pairs of type (2) and pairs of type (3), since any such two pairs would cross.
Therefore, we conclude that all conflicting pairs either come from the interval $I_1(i)$ or they all come from the interval $I_2(j)$, and in any case, there are only $w(\ell)$ of them.
After this modification, we move on to the next index $i'$ that is not a part of this interval $I_1(i)$ and is matched (in the new matching $A$) to some index $j'$ - as before, this $i'$ satisfies the invariant.

After we are done with all these modifications, we end up with a matching $A$ of size at least $|X|$ in which complete intervals are aligned to each other.
Now, we can define a matching $A'$ between $P_1$ and $P_2$ that contains all pairs $(I_1(i),I_2(j))$ for which $(i,j) \in A$.
In words, we contract the intervals of $P_1',P_2'$ to the original symbols of $P_1,P_2$.
Finally, $A'$ corresponds to a common subsequence $X'$ of $P_1,P_2$, and $W(X')=|A|\geq |X|$ since each matched interval corresponds to some symbol $\ell$ and contributes $w(\ell)$ matches to $A$ and a single match of weight $w(\ell)$ to $A'$.
\end{proof}
\end{proof}

\subsection{Reducing $\MostOrt$  to $\LCS$ }

We are now ready to present our main reduction, proving our hardness result for LCS. 

\begin{theorem}
$\MostOrt$ on two lists $\{\alpha_i\}_{i\in [n]}$ and $\{\beta_i\}_{i\in [n]}$ of $n$ binary vectors in $d$ dimensions
($\alpha_i,\beta_i \in \{0,1\}^d$)
can be reduced to $\LCS$ problem on two sequences of length $n \cdot d^{O(1)}$ over an 
alphabet of size $7$.
\end{theorem}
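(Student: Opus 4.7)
The plan is to first reduce Most-Orthogonal Vectors to the Weighted LCS problem ($\WLCS$) over a constant-size alphabet with integer weights bounded by $\poly(d)$, and then to invoke Lemma~\ref{lem:wlcs} in order to convert this into an $\LCS$ instance of length $n \cdot d^{O(1)}$ over an alphabet of size $7$. Reducing to $\WLCS$ is substantially cleaner because the weights let the vector-gadget construction contribute a quantity linear in $\langle \alpha_i, \beta_j \rangle$ without having to encode integer penalties through many symbols.

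For the $\WLCS$ reduction I would design two kinds of gadgets. First, \emph{vector gadgets}: for each $\alpha_i$ I build a short sequence $VG_A(\alpha_i)$, and for each $\beta_j$ a short sequence $VG_B(\beta_j)$, both over a few common symbols, so that
\[
\WLCS\bigl(VG_A(\alpha_i),\, VG_B(\beta_j)\bigr) \;=\; C \;-\; c \cdot \langle \alpha_i, \beta_j \rangle,
\]
for fixed constants $C, c$ depending only on $d$. The natural way to achieve this is coordinate-by-coordinate: each coordinate $h \in [d]$ contributes a small sub-gadget which can always be matched for a base weight, but loses a unit of weight precisely when $\alpha_i[h]=\beta_j[h]=1$, borrowing the kind of $0/1$-encoding used in the Edit-Distance reduction of \cite{edit_hardness}. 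Second, \emph{selection gadgets}: the final sequences have the form
\[
P_1 \;=\; L^{T}\,VG_A(\alpha_1)\,L^{T}\,VG_A(\alpha_2)\,\cdots\,L^{T}\,VG_A(\alpha_n)\,L^{T},
\qquad
P_2 \;=\; R^{T}\,VG_B(\beta_1)\,R^{T}\,\cdots\,R^{T}\,VG_B(\beta_n)\,R^{T},
\]
where $L, R$ are blocks of special ``padding'' symbols, and $T = \poly(d)$ is chosen so large that any matching between $P_1$ and $P_2$ that does not commit to a single aligned pair $(\alpha_i, \beta_j)$ of vector gadgets is strictly inferior. The padding weights are set so that matching the $i$-th $VG_A$-block against the $j$-th $VG_B$-block contributes a baseline weight that is independent of $i$ and $j$, plus the contribution from the vector gadgets themselves.

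The main obstacle, and the place where I expect most of the care to be needed, is the analysis of the selection gadget: one has to show that the optimum $\WLCS$ is exactly $B + C - c \cdot \min_{i,j}\langle\alpha_i,\beta_j\rangle$ for some explicit baseline $B$, so that a threshold query at $B + C - c r$ answers the Most-Orthogonal-Vectors instance. Rather than enumerating cases on how the $VG$-blocks can interleave in an arbitrary optimal matching, I would use the inductive approach indicated in the technical contribution section: given any matching on the full sequences, show that it can be rearranged without loss of weight so that either the leftmost $VG_A(\alpha_1)$-block is entirely unmatched (absorbed by padding) or it is entirely aligned with a single $VG_B(\beta_j)$-block, after which one can excise that portion of the sequences and recurse. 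The \emph{conflict/crossing} relation introduced in the preliminaries is exactly the tool needed to justify such rearrangements, in the same spirit as the swap argument in the proof of Lemma~\ref{lem:wlcs}.

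Finally, applying Lemma~\ref{lem:wlcs} converts the $\WLCS$ instance of length $n \cdot \poly(d)$ and weight range $\poly(d)$ into an $\LCS$ instance of length $n \cdot \poly(d)$; one then verifies that the working alphabet of the construction (the two coordinate symbols used inside the $VG$'s, together with the handful of separator and padding symbols $L, R$, and any boundary markers used to delimit blocks) can be packed into $7$ symbols, matching the bound claimed in the theorem. The main quantitative task is therefore bookkeeping: choosing the padding length $T$ and the per-coordinate weights so that the strict inequality separating ``some pair has inner product $\le r$'' from ``all pairs have inner product $\ge r+1$'' survives all the $\poly(d)$-sized contributions.
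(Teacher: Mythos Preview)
Your high-level plan---reduce to $\WLCS$, build coordinate-wise vector gadgets whose score tracks the inner product, wrap them in a selection structure, then apply Lemma~\ref{lem:wlcs}---is exactly the paper's route, and your vector-gadget sketch is fine in spirit. The gap is in the selection gadget.

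With $n$ gadgets on each side and ``special'' padding symbols $L,R$ that do not occur in the other sequence, the padding is inert and $\WLCS(P_1,P_2)$ is just the $\WLCS$ of the two concatenations $\bigcirc_i VG_A(\alpha_i)$ and $\bigcirc_j VG_B(\beta_j)$. An optimal alignment will then match \emph{many} non-crossing gadget pairs---at the very least the full diagonal $(\alpha_1,\beta_1),\ldots,(\alpha_n,\beta_n)$---so the score is a sum of many inner-product terms, not $B+C-c\min_{i,j}\langle\alpha_i,\beta_j\rangle$. Nothing in your construction makes matching a second pair costly, so the ``commit to a single aligned pair'' claim fails; and your peel-off-the-leftmost-block induction does not fix this, because after one step you are left with an $(n-1)$-vs.-$(n-j)$ instance of the same shape, not with a single-pair problem.

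The paper does not attempt single-pair selection at all. It makes $P_2$ out of $3n-2$ gadgets---the $n$ real $\beta_j$'s flanked on each side by $n-1$ dummy gadgets built from the all-ones vector $f$---each wrapped as $0\cdots 2\circ 3$, while $P_1$ carries only the $n$ $\alpha$-gadgets (wrapped as $0\cdots 2$) flanked by long runs of $3$'s. The intended optimum aligns \emph{all} $n$ gadgets of $P_1$ with $n$ \emph{consecutive} gadgets of $P_2$; choosing the shift puts one $(\alpha_i,\beta_j)$ in alignment while the remaining $\alpha$-gadgets face dummies, which are always $r$-close and contribute a fixed baseline. The heavy $0,2$ delimiters force integral gadget alignment, and the $3$'s---matchable only against the outer $3$-runs in $P_1$---penalize any alignment that consumes $n'\neq n$ gadgets of $P_2$. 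The inductive part (Claims~\ref{cl:0} and~\ref{cl:1}) bounds $W(L,L')$, the best score of $L$ intervals against $L'$ intervals, inducting on $L+L'$ and branching on the first \emph{unmatched} $0/2$ symbol in the shorter sequence, not on the leftmost block. One further idea you are missing is the thresholding inside the vector gadget: the paper prepends/appends a single symbol of weight exactly $A=(r{+}1)2X+(d-r-1)(2X{+}1)$, so that $\WLCS(VG_1(\alpha),VG_2(\beta))$ equals $A$ whenever $\alpha,\beta$ are $r$-close and is at least $A+1$ when they are $r$-far. This collapses the linear-in-inner-product score to a binary signal, which is what the selection analysis actually consumes.
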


\begin{proof}
We will proceed in two steps. First, we will show that $\WLCS$ is at least as hard as the $\MostOrt$ problem.
Second, given that the symbols in the constructed $\WLCS$ instance will have small weights,
an application of Lemma~\ref{lem:wlcs} will allow as to conclude that $\LCS$ is at least as hard as the
$\MostOrt$ problem.
Our alphabet will be $\Sigma=\{0,1,2,3,4,5,6\}$.


We start with the reduction to $\WLCS$.
Let $\alpha, \beta$ denote two vectors from the $\MostOrt$ instance, from the first and the second set, respectively.

We construct our \emph{coordinate gadgets} as follows. For $i \in [d]$ we define,

\[
CG_1(\alpha,i) = \begin{cases}
               5465 & \text{if $\alpha[i]=0$}\\
               545 & \text{otherwise}\\
            \end{cases}
            \]
            
            \[
            CG_2(\beta,i) = \begin{cases}
               5645 & \text{if $\beta[i]=0$}\\
               565 & \text{otherwise}\\
            \end{cases}
\]

Setting the weight function so that $w(4)=w(6) =1, w(5) = X = 100 d$.

These gadgets satisfy the following equalities:
\[
WLCS(CG_1(\alpha,i), CG_2(\beta,i)) = \begin{cases}
               2X+1 & \text{if $\alpha[i] \cdot \beta[i]=0$}\\
               2X & \text{otherwise}\\
            \end{cases}
            \]
Now, we define the \emph{vector gadgets} as a concatenation of the coordinate gadgets. 
Let $R_1(\alpha) = \bigcirc_{i=1}^d CG_1(\alpha,i)$ and $R_2(\beta) = \bigcirc_{i=1}^d CG_2(\beta,i)$.
\[
VG_1(\alpha) =  1 \circ R_1(\alpha) 
\]
\[
VG_2(\beta) =   R_2(\beta) \circ 1
\]

The weight of the symbol $1$ is $w(1) = A = (r+1)2X+(d-(r+1))(2X+1)$.
It is now easy to prove the following claims.

\begin{claim} If two vectors $\alpha,\beta$, are $r$-far, then:
\[
WLCS(VG_1(\alpha), VG_2(\beta)) \geq A+1= r\cdot 2X+(d-r)(2X+1).
\]
\end{claim}

\begin{proof}
For each $i \in [d]$, match $CG_2(\beta,i)$ to $CG_1(\alpha,i)$ optimally to get a weight at least $A+1=r \cdot 2X+(d-r)(2X+1)$.
\end{proof}

\begin{claim} \label{rclose}
 If two vectors $\alpha,\beta$, are $r$-close, then:
\[
\WLCS(VG_1(\alpha), VG_2(\beta)) = A. \]
\end{claim}

\begin{proof}
$\WLCS(VG_1(\alpha), VG_2(\beta)) \geq A$ is true because we can match the $1$ symbols, which gives cost $A$.

Now we prove that $\WLCS(VG_1(\alpha), VG_2(\beta)) \leq A$.
If we match the $1$ symbols, then we cannot match any other symbols and the inequality is true. Thus, we assume now that the $1$ symbols are not matched.

Now we can check that, if there is a $5$ symbol in $\VG_1(\alpha)$ or $\VG_2(\beta)$ 
that is not matched to a $5$ symbol, then we cannot achieve weight $A$ even
if we match all the other symbols (except for the $1$ symbols). Therefore, we assume
that all the $5$ symbols are matched. The required inequality follows 
from the fact that there are at least $r+1$ coordinates
where $\alpha$ and $\beta$ both have $1$ (the vectors are $r$-close), and the construction of the coordinate gadgets.
\end{proof}

Finally, we combine the vector gadgets into two sequences.
Let $VG_1'(\alpha) = 0 \circ VG_1(\alpha) \circ 2$ and $VG_2' (\beta) = 0 \circ VG_2(\beta) \circ 2 \circ 3$. 
Let $f$ be a dummy vector of length $d$ that is all $1$.
\[
P_1 = 3^{|P_2|} \circ \bigcirc_{i=1}^n VG_1'(\alpha_i) \circ 3^{|P_2|}
\]
\[
P_2 = 3 \circ \bigcirc_{i=1}^{n-1} VG_2'(f) \circ \bigcirc_{i=1}^{n} VG_2'(\beta_i) \circ \bigcirc_{i=1}^{n-1} VG_2'(f) \]
And set the weights so that $w(3) = B = A^2$ and $w(0)=w(2)= C = B^2$.

Let $E_U = 2C + A$, and $E_G = n \cdot E_U + 2n \cdot B$.

The following two lemmas prove that there is a gap in the $\WLCS$ of our 
two sequences when there is a pair of vectors that are $r$-far as 
opposed to when there is none.

\begin{lemma}
If there is a pair of vectors that are $r$-far, 
then $\WLCS(P_1,P_2) \geq E_G + 1$.
\end{lemma}

\begin{proof}
Let $i,j$ be such that $\alpha_i,\beta_j$ are $r$-far.
Match $VG_1'(\alpha_i)$ and $VG_2'(\beta_j)$ to get a weight of 
at least $2C+r \cdot 2X+(d-r)(2X+1) \geq E_U +1$.
Match the $i-1$ vector gadgets to the left of $VG_1'(\alpha_i)$ to the $i-1$ vector gadgets immediately to the left of $VG_2'(\beta_j)$, and similarly, match the $n-i$ gadgets to the right. 
The total additional weight we get is at least $(n-1)\cdot E_U$.
Finally, note that after the above matches, only $(n-1)$ out of the $(3n-1)$ $3$-symbols in $P_2$ are surrounded by matched symbols.
The remaining $2n$ $3$-symbols can be matched, 
giving an additional weight of $2n\cdot B$.
The total weight is at least $E_U +1 + (n-1)\cdot E_U + 2n\cdot B = E_G +1$.
\end{proof}

\begin{lemma}
\label{lem:LCSmain}
If there is no pair of vectors that are $r$-far, then $\WLCS(P_1,P_2) \leq E_G$.
\end{lemma}

\begin{proof}
The main part of the proof will be dedicated to showing that if the $n$ vector gadgets in $P_1$ are matched to a substring of $n'$ vector gadgets from $P_2$, then $n'$ must be equal to $n$. 
This will follow since: if $n'<n$, then at least one of the $0$/$2$ symbols in $P_1$ will remain unmatched, and, if $n'>n$, then less than $2n$ of the $3$ symbols in $P_2$ can be matched. 
The large weights we gave $0$/$2$ and $3$ make this impossible in an optimal matching.
It will be easy to see that in any matching in which $n=n'$, the total weight is at most $E_G$.

Now, we introduce some notation.
Let $L \leq L'$ and define $W(L,L')$ to be the optimal score of matching two sequence $T,T'$ where $T$ is composed of $L$ vector gadgets $VG_1'(\alpha)$ and $T'$ is composed of $L'$ vector gadgets $VG_2'(\beta)$, where no pair $\alpha,\beta$ are $r$-far.
Define $W_0(L,L')$ similarly, except that we restrict the matchings so that all $0$ or $2$ symbols in $T$ (the shorter sequence) must be matched.
In the following two claims we prove an upper bound on $W(L,L')$, via an upper bound on $W_0(L,L')$.

\begin{claim}
\label{cl:0}
For any integers $1\leq L \leq L'$, we can upper bound $W_0(L,L') \leq L \cdot E_U + (L'-L) \cdot (B-1)$.
\end{claim}

\begin{proof}
Let $T,T'$ be two sequences with $L,L'$ vector gadgets, respectively. 
We will refer to these ``vector gadgets" as intervals.
Consider an optimal matching of $T$ and $T'$ in which all the $0$ and $2$ symbols of $T$ are matched, i.e., a matching that achieves weight $W_0(L,L')$ - we will upper bound its weight $E_F$ by $L \cdot E_U + (L'-L) \cdot (B-1)$.
Note that in such a matching, each interval of $T$ must be matched completely within one or more intervals of $T'$, and each interval of $T'$ has matches to at most one interval from $T$ (otherwise, it must be the case that some $0$ or $2$ symbol in $T$ is not matched).

Let $x$ be the number of intervals of $T$ that contribute at most $E_U$ to the weight of our optimal matching. 
Note that any of the $L-x$ other intervals must be matched to a substring of $T'$ that contains at least two intervals for the following reason. The $0$ and $2$ symbols of the interval of $T'$ must be matched, and, if the matching stays within a single interval of $T'$ and has more than $E_U$ weight, then we have a pair which is $r$-far because of Claim \ref{rclose}.
Thus, using the fact that there are only $L'$ intervals in $T'$, we get the condition, 
$$
x+2(L-x) \leq L'.
$$

We now give an upper bound on the weight of our matching, by summing the contributions of each interval of $T$:
there are $x$ intervals contributing $\leq E_U$ weight, and there are $(L-x)$ intervals matched to $T'$ with unbounded contribution, but we know that even if all the symbols of an interval are matched, it can contribute at most $E_B = 2C + A + d(2X+2)$.
Therefore, the total weight of the matching can be upper bounded by 
\[
E_F \leq (L-x) \cdot E_B + x \cdot E_U 
\]
We claim that no matter what $x$ is, as long as the above condition holds, this expression is less than $L \cdot E_U + (L'-L) \cdot (B-1)$.

To maximize this expression, we choose the smallest possible $x$ that satisfies the above condition, since $E_B > E_U$, which implies that $x = \max\{ 0, 2L-L' \}$.
 A key inequality, which we will use multiple times in the proof, following from the fact that the $0$/$2$/$3$ symbols are much more important than the rest, is that $E_B < E_U + B - 1$, which follows since $E_B-E_U < A+d(2X+2)  <  1000d^2 < B$.

First, consider the case where $L\leq L'/2$, and therefore $x=0$,
which means that all the intervals of $T$ might be fully matched.
 Using that $E_B < E_U+B-1$ and that $L'-L \geq L'/2 \geq L$, we get the desired upper bound:
$$
E_F \leq L \cdot E_B \leq L \cdot (E_U + B-1) \leq L \cdot E_U + (L'-L) \cdot (B-1).
$$

Now, assume that $L > L'/2$, and therefore $x = 2L-L'$.
In this case, when setting $x$ as small as possible, the upper bound becomes:
\[
E_F 
\leq (L'-L)\cdot E_B + (2L-L') \cdot E_U 
= L \cdot E_U + (L'-L) \cdot (E_B - E_U),
\]
which is less than $L \cdot E_U + (L'-L) \cdot (B-1)$, since $E_B < E_U + B-1$.
\end{proof}

Next, we prove by induction that leaving  $0$/$2$ symbols in the shorter sequence unmatched will only worsen the weight of the optimal matching.

\begin{claim}
\label{cl:1}
For any integers $1\leq L \leq L'$, we can upper bound $W(L,L') \leq L \cdot E_U + (L'-L) \cdot (B-1)$.
\end{claim}

\begin{proof}
We will prove by induction on $i \geq 2$ that: for all $L'\geq L \geq 1$ such that $L+L' \leq i$, $W(L,L') \leq L \cdot E_U + (L'-L) \cdot (B-1)$.

The base case is when $i=2$ and $L=L'=1$. Then $W(1,1) = E_U$ and we are done.

For the inductive step, assume that the statement is true for all $i' \leq i-1$ and we will prove it for $i$.
Let $L,L'$ be so that $1\leq L\leq L' $ and $L+L'=i$ and let $T,T'$ be sequences with $L,L'$ intervals (assignment gadgets), respectively.
Consider the optimal (unrestricted) matching of $T$ and $T'$, denote its weight by $E_F$. 
Our goal is to show that $E_F \leq L \cdot E_U + (L'-L) \cdot (B-1)$.

If every $0$/$2$ symbol in $T$ is matched then, by definition, the weight cannot be more than $W_0(L,L')$, and by Claim~\ref{cl:0} we are done.
Otherwise, consider the first unmatched $0$/$2$ symbol, call it $x$, and there are two cases.

\text{ \bf The $x= 0$ case:} 
If $x$ is the first $0$ in $T$, then the first $0$ in $T'$ must be matched to some $0$ after $x$ (otherwise we can add this pair to the matching without violating any other pairs) which implies that none of the symbols in the interval starting at $x$ can be matched, since such matches will be in conflict with the pair containing this first $0$.
Otherwise, consider the $2$ that appears right before $x$ and note that it must be matched to some $y=$ 2 in $T'$, by our choice of $x$ as the first unmatched $0$/$2$. 
Now, there are two possibilities: 
either there are no more intervals in $T'$ after $y$, or there is a $0$ right after $y$ in $T'$ that is matched to a $0$ in $T$ that is after $x$ (from a later interval in $T$).
Note that in either case, the interval starting at $x$ (and ending at the $2$ after it) is completely unmatched in our matching.
Therefore, in this case, we let ${T_1}$ be the sequence with $(L-1)$ intervals which is obtained from $T$ by removing the interval starting at $x$. 
The weight of our matching will not change if we look at it as a matching between $T'$ and ${T_1}$ instead of $T$, which implies that $E_F \leq W(L-1,L')$.
Using our inductive hypothesis we conclude that $E_F \leq (L-1) \cdot E_U + (L'-L+1) \cdot (B-1) \leq L \cdot E_U + (L'-L) \cdot (B-1)$, since $E_U > B$, and we are done.

\text{ \bf The $x= 2$ case:} The $0$ at the start of $x$'s interval must have been matched to some $y = 0$. 
Let $z$ be the $2$ at the end of $y$'s interval.
Note that $z$ must be matched to some $w= 2$ in $T$ after $x$, since otherwise, we can add the pair $(x,z)$ to the matching, gaining a cost of $C$, and the only possible conflicts we would create will be with pairs containing a symbol inside the $y \to z$ interval or inside $x$'s interval, and if we remove all such pairs, we would lose at most $(A+d(2X+2))$ which is much less than the gain of $C$ - implying that our matching could not have been optimal.
Therefore, there are $c \geq 2$ intervals in $T$ that are matched to a single interval in $T'$: 
all the intervals starting at the $0$ right before $x$ and ending at $w$ are matched to the $y \to z$ interval.
Let $T_1$ be the sequence obtained from $T$ by removing all these $c$ intervals and let $T_2$ be the sequence obtained from $T'$ by removing the $y \to z$ interval.
Our matching can be split into two parts: a matching between $T_1$ and $T_2$, and the matching of the $y \to z$ interval to the removed interval.
The contribution of the latter part to the weight of the matching can be at most the weight of all the symbols in an interval, which is $E_B$.
By the inductive hypothesis, we know that any matching of $T_1$ and $T_2$ can have weight at most $W(L-c,L'-1)\leq (L-c) \cdot E_U + (L'-1-L+c) \cdot (B-1)$.
Summing up the two bounds on the contributions, we get that the total weight of the matching is at most:
\[
E_F \leq E_B + (L-c) \cdot E_U + (L'-L+c-1) \cdot (B-1)
\leq L \cdot E_U + (L'-L) \cdot (B-1) + (c-1) \cdot (B-1) + E_B - c \cdot E_U 
\]
However, note that $E_B < 1.1 E_U $ and that $(c-1.1) E_U > 10(c-1.1) B > (c-1)B$, which implies that $E_F$ can be upper bounded by $L \cdot E_U + (L'-L) \cdot (B-1)$, and we are done.
\end{proof}

We are now ready to complete the proof of the Lemma.
Consider the optimal matching of $P_1$ and $P_2$.
Let $x$ and $y$ be the first and last $3$ symbols in $P_2$ that are not matched, respectively.
Note that there cannot be any matched $3$ symbols between $x$ and $y$, since otherwise we could  match either $x$ or $y$ and gain extra weight without incurring any loss. 
Moreover, note that $x$ cannot be the first symbol in $P_2$ and $y$ cannot be the last one, since those must be matched in an optimal alignment.
The substring between the 3 preceding $x$, and the 3 following $y$, contains $n'$ intervals (vector gadgets) for some $ 1 \leq n' \leq 3n-2$.
If all the 3's are matched, we let $n'=1$, and focus on the only interval (vector gadget) of $P_2$ that has matched non-$3$-symbols.

We can now bound the total weight of the matching by the sum of the maximum possible contribution of these $n'$ intervals, and the contribution of the rest of $P_2$.
The substring before and including the $3$ symbol preceding $x$ and the substring after and including the $3$ symbol following $y$ can only contribute $3$'s to the matching, and they contain exactly $(3n-1 - (n'-1))$ such $3$ symbols, giving a contribution of $(3n-n')\cdot B$.
To bound the contribution of the $n'$ intervals, we use Claim~\ref{cl:1}:
since no $3$ symbols are matched in this part, we can ``remove" those symbols for the analysis, to obtain two sequences $T,T'$ composed of $n,n'$ vector gadgets, respectively, in which no pair is $r$-far.
The contribution of the $T,T'$ part, depends on $n,n'$:

If $n' \leq n$, then by Claim~\ref{cl:1}, when setting $L=n', L'=n$, the contribution is at most $(n' \cdot E_U + (n-n')\cdot (B-1))$ and the total weight of our matching can be upper bounded by 
\[
(3n-n')\cdot B + (n' \cdot E_U + (n-n')\cdot (B-1)),
\]
which is maximized when $n'$ is as large as possible, since $E_U > (2B-1)$. Thus, setting $n'=n$, we get the upper bound: $(3n-n)\cdot B + n \cdot E_U = E_G$.

Otherwise, if $n' > n$, we apply Claim~\ref{cl:1} with $L=n, L'=n'$, and get that the contribution is at most $(n \cdot E_U + (n'-n)\cdot (B-1))$, and the total weight of our matching can be upper bounded by 
\[
(3n-n')\cdot B + (n \cdot E_U + (n'-n)\cdot (B-1)) 
= n \cdot E_U + 2n \cdot B - (n'-n)  < E_G.
\]
\end{proof}

To conclude our reduction, we note that the largest weight used in our weight function is polynomial in $d$, and therefore the reduction of Lemma~\ref{lem:wlcs} gives two unweighted sequences $f(P_1),f(P_2)$ of length $n\cdot d^{O(1)}$, for which the LCS equals the WLCS of our $P_1,P_2$.
\end{proof}

\section{Hardness for DTWD}

In this section, we complete the proof of Theorem~\ref{thm:main} by showing that a truly sub-quadratic algorithm for DTWD implies a truly sub-quadratic algorithm for the $\MostOrt$ problem.

We first show that we can modify the reduction from $\cnfsat$ to Edit-Distance from \cite{edit_hardness}
so that we get a reduction from $\MostOrt$ to Edit-Distance.
We will later use properties of the two sequences produced in this reduction, call them $P_1',P_2'$.
In particular, we will show that there is an easy transformation of $P_1'$ into a sequence $S_1$ and
of  $P_2'$ into a sequence $S_2$ so that $\EDIT(P_1',P_2')=\DTWD(S_1,S_2)$.
This will give the desired reduction from $\MostOrt$ to $\DTWD$.

\subsection{Reducing $\MostOrt$ to Edit-Distance}

Before showing the reduction from $\MostOrt$ to Edit-Distance, let us recast the reduction of \cite{edit_hardness} as a reduction from $\Ort$ instead of CNF-SAT.

	
	\paragraph{Reducing $\Ort$ to Edit-Distance.}
	Instead of having $\nn$ partial assignments for the first half of the variables
	and $\nn$ partial assignments for the second half of the variables, we
	have $n$ vectors in the first and the second set of vectors (we replace $\nn$ by $n$ in the argument).
	Instead of having $M$ clauses, we have $d$ coordinates for every vector
	(we replace $M$ by $d$ in the argument).

	Instead of having \emph{clause gadgets}, we have \emph{coordinate gadgets}.
	For a vector $\alpha$ from the first set of vectors $\seta$ and $j \in [d]$,
	we define a coordinate gadget,
	$$
		\CG_1(\alpha,j)=
		\begin{cases}
			0^{l_1}0^{l_0}1^{l_0}1^{l_0}1^{l_0}0^{l_1} & \text{ if }\alpha[j]=0,\\
			0^{l_1}0^{l_0}0^{l_0}0^{l_0}1^{l_0}0^{l_1} & \text{ otherwise.}
		\end{cases}
	$$
	For a vector $\beta$ from the second set of vectors $\setb$ and $j \in [d]$,
	$$
		\CG_2(\beta,j)=
		\begin{cases}
			0^{l_1}0^{l_0}0^{l_0}1^{l_0}1^{l_0}0^{l_1} & \text{ if }\beta[j]=0,\\
			0^{l_1}1^{l_0}1^{l_0}1^{l_0}1^{l_0}0^{l_1} & \text{ otherwise.}
		\end{cases}
	$$
	We leave $g$ the same: $g=0^{{l_1 \over 2}-1}10^{{l_1\over 2}}0^{l_0}1^{l_0}1^{l_0}1^{l_0}0^{l_1}$.

	Instead of \emph{assignment gadgets}, we have \emph{vector gadgets}.
	$$\VG_1(\alpha_i)=Z_1 L V_0 R Z_2\text{ and }\VG_2(\beta_i)=V_1 D V_2,$$
	where $R=\bigcirc_{j \in [d]}\CG_1(\alpha_i,j), D=\bigcirc_{j \in [d]}\CG_2(\beta_i,j)$.

	Then, we replace the statement ``$\varphi$ is satisfied by $a_1 \vee a_2$'' with
	``vectors $\alpha_{i_1}$ and $\beta_{i_2}$ are orthogonal'' and the statement
	``$\varphi$ is satisfiable'' with ``there is a vector from the first set of variables
	and a vector from the second set of variables that are orthogonal''.

	For a vector $v$ and $k\in \{1,2\}$, we have $\VG_k'(v)=2^{T}\VG_k(v)2^{T}$, instead of $\AG_k'$.
	We set $f \in \{0,1\}^d$ to have $f[i]=1$ for all $i \in [d]$.

	We define the sequences as
	$$
		P_1=\bigcirc_{\alpha\in \seta}\VG_1'(\alpha),
	$$
	$$
		P_2=\left(\bigcirc_{i=1}^{n-1}\VG_2'(f)\right)
			\left(\bigcirc_{\beta\in \setb}\VG_2'(\beta)\right)
			\left(\bigcirc_{i=1}^{n-1}\VG_2'(f)\right).
	$$

	This completes the modification of the argument. We can check that we never use
	any property of $\cnfsat$ that $\Ort$ does not have.

\paragraph{Reducing $\MostOrt$ to Edit-Distance.}
Next, we modify the construction to show that Edit-Distance is a hard problem under a weaker
assumption, i.e., that the $\MostOrt$ problem does not have a truly sub-quadratic algorithm (Conjecture~\ref{conj:MOV}). 

\begin{theorem} \label{mostort_edit}
	Edit-Distance does not have strongly a subquadratic time algorithm
	unless $\MostOrt$ problem has a strongly subquadratic algorithm.
\end{theorem}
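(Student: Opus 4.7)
The plan is to reuse the construction from the preceding $\Ort$-to-Edit-Distance reduction verbatim and extract from it a quantitatively stronger statement: that the edit distance between $P_1$ and $P_2$ encodes the minimum inner product achievable by any pair $(\alpha_i, \beta_j)$, not merely whether some such pair is orthogonal. Once this quantitative version is in hand, a single Edit-Distance call decides $\MostOrt$ with any threshold $r$.

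First I would analyze the coordinate gadgets at a finer granularity. The key claim is that there exist constants $d_0$ and $c > 0$ (depending only on $l_0, l_1$) such that $\EDIT(\CG_1(\alpha,j), \CG_2(\beta,j)) = d_0$ whenever $\alpha[j] \cdot \beta[j] = 0$, and $\EDIT(\CG_1(\alpha,j), \CG_2(\beta,j)) = d_0 + c$ when $\alpha[j] \cdot \beta[j] = 1$. Each of the four cases is a short direct computation on the six-block strings of the form $0^{\star} 1^{\star} 0^{\star}$ that make up a coordinate gadget; the $1$-$1$ case is the only one in which the relative positions of the $1$-block in the two gadgets are further apart, incurring the extra $c$ shift cost.

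Next I would re-examine the vector-gadget and outer-sequence analysis of~\cite{edit_hardness}, promoting the qualitative ``orthogonal pair vs.\ no orthogonal pair'' dichotomy to the quantitative statement $\EDIT(P_1, P_2) = E_{\text{base}} + c \cdot \min_{i,j} \langle \alpha_i, \beta_j \rangle$, where $E_{\text{base}}$ depends only on $n, d, l_0, l_1, T$ and the padding. The upper bound is immediate: pick a pair $(\alpha_i, \beta_j)$ of minimum inner product, align $\VG_1(\alpha_i)$ to $\VG_2(\beta_j)$ as prescribed in~\cite{edit_hardness}, and inside that alignment match coordinate gadgets position-by-position, accumulating exactly $c$ per coordinate with $\alpha_i[k] \cdot \beta_j[k] = 1$. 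The lower bound is the main obstacle: the original case analysis showing that an optimal alignment must ``select'' some pair $(\alpha_i, \beta_j)$ (determined by which vector gadget of $P_1$ is aligned against the central block of $P_2$) has to be strengthened to show that once a pair is selected, the alignment within that pair cannot save more than $c$ per $1$-$1$ coordinate. This is where one uses the large gap between $l_1, T$ and $l_0, c, d$: every deviation from the natural coordinate-by-coordinate alignment costs at least $\Omega(l_1)$, which is chosen so that $\Omega(l_1) > c \cdot d$ and therefore no global rearrangement can redeem the per-coordinate penalty.

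Given the refined equality above, the reduction is immediate. To decide the $\MostOrt$ instance with threshold $r$, compute $\EDIT(P_1, P_2)$ and accept iff the answer is at most $E_{\text{base}} + c \cdot r$. Since the construction has size $n \cdot d^{O(1)}$, a truly subquadratic algorithm for Edit-Distance immediately yields one for $\MostOrt$, contradicting Conjecture~\ref{conj:MOV}. The reduction is Karp rather than Turing, so no binary search over thresholds is needed.
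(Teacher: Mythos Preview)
Your coordinate-gadget calculation is right: with the four block patterns one gets $\EDIT(\CG_1(\alpha,j),\CG_2(\beta,j))=l_0$ in the three cases with $\alpha[j]\beta[j]=0$ and $3l_0$ in the $(1,1)$ case, so per coordinate the cost is linear in $\alpha[j]\beta[j]$. The gap is one level up, at the vector gadgets. You propose to reuse the construction verbatim and argue that $\EDIT(\VG_1(\alpha),\VG_2(\beta))$ is an affine function of $\langle\alpha,\beta\rangle$. It is not: recall that $\VG_1(\alpha)=Z_1\, L\, V_0\, R\, Z_2$ carries, besides the coordinate block $R$, a second block $L=g^d$ built from the fixed gadget $g$. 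The optimal traversal may align $D$ either with $R$ (cost growing with $\langle\alpha,\beta\rangle$) or with $L$ (cost independent of $\alpha,\beta$), and $g$ is calibrated so that the $L$-alignment wins as soon as $\langle\alpha,\beta\rangle\geq 1$. Hence $\EDIT(\VG_1(\alpha),\VG_2(\beta))$ is \emph{constant} over all non-orthogonal pairs---this is exactly the content of the original Lemma~2 in~\cite{edit_hardness}, which is stated as an equality---and so $\EDIT(P_1,P_2)$ distinguishes ``some orthogonal pair'' from ``none'' but does not encode $\min_{i,j}\langle\alpha_i,\beta_j\rangle$. Your proposed threshold test $\EDIT(P_1,P_2)\leq E_{\text{base}}+cr$ therefore cannot separate $\min\langle\cdot,\cdot\rangle\leq r$ from $\min\langle\cdot,\cdot\rangle>r$ when both sides are positive.

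The paper's proof works by \emph{modifying} the construction rather than reusing it verbatim: the gadget $g$ is rebuilt with a longer run of $1$'s whose length depends on $r$, namely $g=0^{l_1/2-(1+2rl_0/d)}\,1^{\,1+2rl_0/d}\,0^{l_1/2}0^{l_0}1^{3l_0}0^{l_1}$. This recalibrates the $L$-versus-$R$ crossover so that the $R$-alignment is cheaper precisely when $\langle\alpha,\beta\rangle\leq r$, after which the analogues of Lemmas~1 and~2 (with $E_s$ and $E_u=E_s+d$) go through as before. If you want to salvage a construction-independent-of-$r$ approach you would have to drop or neutralize the $L$ block and then redo the lower-bound analysis from scratch; that is a different (and harder) argument than what you sketched.
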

\begin{proof}
	We describe how to change the arguments from \cite{edit_hardness} to get the
	necessary reduction.
	We make all the modifications from the discussion above, as well as the following.

	We change $g$ as follows,
	$$	
		g=0^{{l_1 \over 2}-\left(1+{r \over d}2l_0\right)}1^{1+{r \over d}2l_0}
		0^{l_1 \over 2}
		0^{l_0}1^{l_0}1^{l_0}1^{l_0}0^{l_1}.
	$$

	We replace Lemma 1 from \cite{edit_hardness} with the following lemma.
	\begin{lemma}
		If $\alpha_{i_1}$ and $\beta_{i_2}$ are far vectors, then
		$$
			\EDIT(\VG_1(\alpha_{i_1}),\VG_2(\beta_{i_2}))\leq 2l_2+l+dl_0+k2l_0=:E_s.
		$$
	\end{lemma}
	\begin{proof}
		We do the same transformations of sequences as in Lemma 1 from \cite{edit_hardness} except that we
		get upper bound $E_s$ on the cost.
	\end{proof}

	We replace Lemma 2 from \cite{edit_hardness} with the following lemma.
	\begin{lemma}
		If $\alpha_{i_1}$ and $\beta_{i_2}$ are close vectors, then
		$$
			\EDIT(\VG_1(\alpha_{i_1}),\VG_2(\beta_{i_2}))= 2l_2+l+dl_0+k2l_0+d=:E_u.
		$$
	\end{lemma}
	\begin{proof}
		The proof proceeds along the same lines as the one for Lemma 2 from \cite{edit_hardness}.
	\end{proof}

	This finishes the description of the necessary changes.
\end{proof}

\subsection{Reducing $\MostOrt$ to $\DTWD$}

We are now ready to present our main reduction to DTWD.

\begin{theorem} \label{mostort_dtwd}
	If $\DTWD$ over sequences of symbols from an alphabet of size $5$ can be solved in strongly sub-quadratic time, then $\MostOrt$ can also be solved in truly sub-quadratic time.
\end{theorem}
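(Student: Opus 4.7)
The plan is to combine Theorem~\ref{mostort_edit} with a short symbol-wise mapping $f$ so that $\DTWD(f(P_1'),f(P_2'))=\EDIT(P_1',P_2')$ on exactly the $P_1',P_2'$ produced by that reduction. Concretely, the reduction of Theorem~\ref{mostort_edit} already turns a $\MostOrt$ instance into an Edit-Distance pair $(P_1',P_2')$ over the $\{0,1,2\}$-alphabet with total length $n\cdot\poly(d)$. I would then define $f$ by replacing each letter $s$ with a constant-length block of the form $s\cdot\$$ (and, if needed, introducing a second fresh "marker" symbol to separate adjacent blocks and keep the alphabet at size exactly $5$). Setting $d(\cdot,\cdot)$ to be the symbol-equality metric of DTWD over symbols, the new blocks satisfy (i) aligning $f(s)$ block-to-block with $f(t)$ in DTWD costs $d(s,t)$, and (ii) "skipping" a block $f(s)$, by keeping one marker on a spacer while the other traverses $s\$$, costs exactly $1$. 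Thus matches/substitutions and insertions/deletions in Edit-Distance each have a faithful DTWD analogue of the same cost.

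Next I would prove the two directions of the equality separately. For the upper bound $\DTWD(f(P_1'),f(P_2'))\le\EDIT(P_1',P_2')$, I would take any optimal edit script and build a traversal that realizes substitutions by moving both markers simultaneously through the corresponding two blocks (paying $d(s,t)$) and realizes each insertion/deletion by stalling one marker on a spacer while the other marker crosses a single block (paying exactly $1$). Summing over operations gives a traversal of cost $\EDIT(P_1',P_2')$.

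The lower bound $\DTWD(f(P_1'),f(P_2'))\ge\EDIT(P_1',P_2')$ is the main technical step. Given any traversal of $f(P_1'),f(P_2')$, I would normalize it to a "block-respecting" traversal in which every block of $f(P_1')$ is either (a) entirely matched against a single block of $f(P_2')$ or (b) entirely skipped against a spacer of an adjacent block on the other side, and symmetrically for $f(P_2')$. The normalization argument would proceed by local exchange: any traversal that splits a block across two targets can be rerouted so that the block is either absorbed into one target or skipped, and this reroute does not increase cost because each $f(s)$ block is short and the spacer symbol makes "staying" on it free. Once the traversal is block-respecting, it directly encodes an alignment of $P_1'$ to $P_2'$ of the same cost, which is at least $\EDIT(P_1',P_2')$.

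The main obstacle is precisely that normalization: one has to rule out DTWD traversals that "cheat" by repeating a non-spacer symbol across several blocks of the other sequence, since this is something Edit-Distance cannot do. Here I would exploit the specific structure of the Backurs-Indyk sequences from Theorem~\ref{mostort_edit}: the long uniform runs (the $0^{l_0}, 0^{l_1}$ in coordinate gadgets and the $2^T$ padding in $\VG_k'$) make any extended "stretching" of a wrong symbol expensive, because the mismatch cost accumulates over the length of the run and swamps any potential saving; meanwhile, the carefully chosen thresholds $E_s<E_u$ from the modified lemmas of Theorem~\ref{mostort_edit} survive unchanged. With the equality $\DTWD(f(P_1'),f(P_2'))=\EDIT(P_1',P_2')$ established, a truly sub-quadratic algorithm for $\DTWD$ over an alphabet of size $5$ would, via $f$, yield a truly sub-quadratic algorithm for Edit-Distance on $(P_1',P_2')$, and hence, via Theorem~\ref{mostort_edit}, a truly sub-quadratic algorithm for $\MostOrt$, completing the reduction.
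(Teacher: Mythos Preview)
Your overall plan---compose the Edit-Distance reduction of Theorem~\ref{mostort_edit} with a spacer transformation $f$ and show $\DTWD(f(P_1'),f(P_2'))=\EDIT(P_1',P_2')$---is exactly what the paper does. But you have swapped which direction is generic and which is instance-specific, and this matters.

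In the paper, the inequality $\EDIT(Q_1,Q_2)\le\DTWD(f(Q_1),f(Q_2))$ (your ``lower bound'') is proved for \emph{arbitrary} $Q_1,Q_2$ in Lemma~\ref{edit_leq_dtwd}, via a careful analysis of the moments at which the two markers change ``type'' (on a spacer vs.\ on a $\Sigma$-symbol). The argument shows that in an optimal traversal, type changes can be partitioned into simultaneous changes and same-marker pairs, from which an edit script of no larger cost is read off. This is precisely the ``normalization to block-respecting form'' you are after, and it requires \emph{no} structural properties of $P_1',P_2'$. Your proposed fallback---using the long uniform runs in the Backurs--Indyk gadgets to argue that stretching a non-spacer symbol is expensive---is both unnecessary and unconvincing as stated: stretching a symbol over a run of the \emph{same} symbol on the other side is free, so ``long runs'' by themselves do not penalize the cheating you are worried about.

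Conversely, the paper handles $\DTWD(f(P_1'),f(P_2'))\le\EDIT(P_1',P_2')$ (your ``upper bound'') instance-specifically in Lemma~\ref{dtwd_eq_edit}, by exhibiting explicit traversals achieving the known costs $E_s$ and $E_u$. Your generic simulation of an edit script by a traversal is a reasonable alternative, but as written it has a small bug: with $f(s)=s\$$ there is no spacer at the very start of either sequence, so you cannot simulate an initial insertion/deletion by ``stalling on a spacer.'' The paper's $A_a^r$ places spacers \emph{before} every symbol and at the end, which fixes this; you should do the same.

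Two minor corrections: the Edit-Distance sequences from Theorem~\ref{mostort_edit} are over an alphabet of size $4$ (symbols $0,1,2,3$), not $\{0,1,2\}$; adding one spacer symbol gives the stated alphabet of size $5$, so you do not need a second fresh symbol.
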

\begin{proof}
The main arguments in this proof are provided in Lemmas~\ref{edit_leq_dtwd} and~\ref{dtwd_eq_edit} below. 
Here we explain why these two lemmas complete the proof of our theorem.

	Consider arbitrary sequences of symbols, $Q_1$ and $Q_2$.
	On the one hand, in Lemma \ref{edit_leq_dtwd} we will show that for a simple transformation $f$,
$$
		\EDIT(Q_1,Q_2)\leq \DTWD(f(Q_1),f(Q_2)).
$$
	
	On the other hand, in Lemma~\ref{dtwd_eq_edit} below we will show that
$$
		\EDIT(P_1',P_2')\geq \DTWD(f(P_1'),f(P_2')),
$$
	if $P_1'$ and $P_2'$ are the sequences constructed in Theorem \ref{mostort_edit}.

	Together, the two inequalities imply
	that $\EDIT(P_1',P_2')=\DTWD(f(P_1'),f(P_2'))$.
	This implies that we have the same hardness result for $\DTWD$ that we had for Edit-Distance, under the assumption that
	$f$ is a simple transformation. We will see that $f$ is indeed a very simple transformation, i.e.,
	$f(P_1')$ and $f(P_2')$ can be computed in time $O(|P_1'|)$ and $O(|P_2'|)$.

	$P_1'$ and $P_2'$ are sequences of symbols over an alphabet of size $4$. Transformation $f$
	introduce an extra symbol. Thus, the final sequences will be over an alphabet of size $5$.
\end{proof}

For an alphabet $\Sigma$, a symbol $a \not \in \Sigma$, a sequence
$Q=q_1q_2...q_p \in \Sigma^p$ of length $p$, and a vector $r$
of $p+1$ positive integers, we define the operation
$$
	A_a^r(Q):=a^{r_1}q_1a^{r_2}q_2a^{r_3}...a^{r_p}q_pa^{r_{p+1}}.
$$

\begin{lemma} \label{edit_leq_dtwd}
For any two sequences $Q_1 \in \Sigma^m$ and $Q_2 \in \Sigma^n$
of length $m$ and $n$, respectively,
$$
\EDIT(Q_1,Q_2)\leq \DTWD(A_a^{r_1}(Q_1),A_a^{r_2}(Q_2))
$$
holds for any two positive integer vectors $r_1$ and $r_2$.
\end{lemma}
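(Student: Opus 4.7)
The plan is to construct, from any DTWD traversal $\tau$ of $X:=A_a^{r_1}(Q_1)$ and $Y:=A_a^{r_2}(Q_2)$, an edit script from $Q_1$ to $Q_2$ whose cost is at most $\mathrm{cost}(\tau)$; applied to an optimal traversal, this gives the claimed inequality. View $\tau$ as a monotone lattice path from $(1,1)$ to $(|X|,|Y|)$, contributing $d(X[u],Y[v])$ at every visited cell $(u,v)$. Since $a\notin\Sigma$ implies $d(a,s)=1$ for all $s\in\Sigma$, this per-cell cost is $0$ iff $X[u]=Y[v]$. Write $p_i$ for the position of $Q_1[i]$ in $X$ and $q_j$ for the position of $Q_2[j]$ in $Y$; since $r_1,r_2$ have positive components, at least one $a$-symbol separates every two consecutive non-$a$ positions in each of $X,Y$.

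The first step is to extract a strictly monotone partial alignment $A\subseteq [|Q_1|]\times[|Q_2|]$ from $\tau$ greedily: walk $\tau$ in order from $(i^*,j^*)=(0,0)$, and whenever it enters a cell $(p_i,q_j)$ with $i>i^*$ and $j>j^*$, add $(i,j)$ to $A$ and set $(i^*,j^*)\gets(i,j)$. Declaring every $i$ not covered by $A$ a deletion and every $j$ not covered an insertion yields a valid edit script whose cost equals the number of mismatched pairs in $A$ plus the number of deletions and insertions.

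The main step is a charging argument bounding this edit cost by $\mathrm{cost}(\tau)$. Each aligned pair $(i,j)\in A$ is charged to its own cell $(p_i,q_j)$, which contributes $[Q_1[i]\neq Q_2[j]]$ to the DTWD cost, and these cells are pairwise distinct by monotonicity of $A$. For a deletion $i$, the greedy rule refused to add $i$, so either (a) no step with $u_t=p_i$ had marker~2 on a non-$a$ column indexed $>j^*$, in which case every such step has marker~2 on an $a$-column and costs~$1$; or (b) the path visited $(p_i,q_j)$ for some $j\le j^*$, which by path-monotonicity forces $j=j^*$ and means marker~1 crossed the $r_{1,i}\ge 1$ padding $a$-rows between $p_{i-1}$ and $p_i$ while marker~2 stayed at the non-$a$ column $q_{j^*}$, yielding a cost-$1$ cell in that padding gap. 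Charge the deletion to a row-$p_i$ cell in case~(a) and to a padding-row cell in $(p_{i-1},p_i)\times\{q_{j^*}\}$ in case~(b); insertions are charged symmetrically using padding $a$-columns. The main obstacle is injectivity: match cells have both coordinates non-$a$ while every deletion- and insertion-charge cell has an $a$-coordinate, so matches do not collide; case-(a) deletion cells lie in distinct non-$a$ rows $p_i$, and case-(b) deletion cells lie in the pairwise disjoint padding intervals $(p_{i-1},p_i)$; and a short case check separating aligned from unaligned index columns shows that deletion and insertion charges themselves do not overlap. Summing all charges gives $\mathrm{cost}(\tau)\ge$ edit cost.
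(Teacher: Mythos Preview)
Your approach is correct and genuinely different from the paper's. The paper argues via an intermediate characterisation of edit distance (transform both $Q_1,Q_2$ into a common string using only deletions and substitutions), then performs a temporal analysis of an optimal DTWD traversal: it tracks the moments when a marker switches between an $a$-symbol and a $\Sigma$-symbol, argues that maximal runs of single-marker type changes must have even length (otherwise one can merge two adjacent opposite changes into a simultaneous one and save cost), and then maps simultaneous type changes to substitutions and same-marker pairs of type changes to deletions. Your argument instead extracts a monotone partial alignment directly from the lattice path and uses a cell-by-cell charging scheme; this bypasses the auxiliary edit-distance formulation and the parity argument entirely. What the paper's route buys is that it never has to reason about injectivity of charges---each run of type changes is handled locally. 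What your route buys is a shorter, more structural proof that makes explicit where the hypothesis ``$r_1,r_2$ positive'' is used (the padding intervals $(p_{i-1},p_i)$ are nonempty).

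One wording issue: your case~(a) as written, ``no step with $u_t=p_i$ had marker~2 on a non-$a$ column indexed $>j^*$'', is not a case at all---it is exactly the condition that $i$ is a deletion, hence always true. What you mean is the dichotomy ``either every step at row $p_i$ has marker~2 on an $a$-column (case~A), or some step is at $(p_i,q_j)$ with $j\le j^*$ (case~B)''. The rest of the argument then goes through: in case~B, monotonicity forces $j=j^*$ and pins the path to column $q_{j^*}$ all the way from $(p_{i^*},q_{j^*})$ through the padding rows $(p_{i-1},p_i)$; the injectivity check you sketch (deletion charges in case~A sit in unaligned rows $p_i$, insertion case-(b') charges sit in aligned rows $p_{i^*}$; deletion case-(b) charges sit in aligned columns $q_{j^*}$, insertion case-(a') charges sit in unaligned columns $q_j$) is correct.
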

\begin{proof}
In this proof, we will use use the following equivalent definition of Edit-Distance that will simplify the analysis.

\begin{observation} \cite{edit_hardness}.
\label{no_insertion}
	For any two sequences $x,y$, $\EDIT(x,y)$ is equal to the minimum, over all sequences $z$, of the number of deletions and substitutions needed to transform $x$ into $z$, and $y$ into $z$.
\end{observation}

Below we will write $A$ instead of $A_a^r$.

We will show how to convert a traversal of $A(Q_1)$ and $A(Q_2)$ achieving $\DTWD$ cost $\DTWD(A(Q_1),A(Q_2))$, into a transformation of
$Q_1$ and $Q_2$ into the same sequence. 
Using Observation~\ref{no_insertion}, we will conclude that the edit cost of the resulting transformations
will be at most $\DTWD(A(Q_1),A(Q_2))$, which is what we need to complete the proof.

Consider an optimal $\DTWD$ traversal of $A(Q_1)$ and $A(Q_2)$.
At any moment, we say that a marker in $A(Q_1)$ or in $A(Q_2)$ is
 of $\Sigma$ type iff the symbol it points to is in $\Sigma$, i.e., 
it is not equal to $a$.
We say that a symbol is of $\Sigma$ type iff it is in $\Sigma$.

From now on we consider only moments during the traversal of $A(Q_1)$ and $A(Q_2)$
when one or the other, or both markers change their type.
We can assume that, whenever both markers change their type, it is not the case
that before the change, the markers have different type. Indeed, if this happens,
we can replace the simultaneous change of type by two consecutive changes of type, and
this modification will not change the cost.
Consider any maximal contiguous subsequence of the sequence of moments during which only one of
the markers changes its type (the marker might change its type during the subsequence more than one time). 
We claim that any 
such contiguous subsequence of moments must have an even length.
Assume that this in not the case and
consider the earliest such subsequence that has an odd length.
Consider the type of the markers immediately before the last moment in the subsequence.
Because we considered the first subsequence with an odd length, and both sequences
start with symbols that are not of $\Sigma$ type, we get that immediately before the last moment, both
 markers must have the same type. WLOG, assume that the last change of type happens to the first marker and note that
immediately after the last change the markers have different type. At the next moment from the sequence,
either both markers change type (which, by our observation that before a simultaneous change of type both markers must of the same type, is impossible) or only the second marker changes its type.
Thus, we have found two consecutive moments from the sequence of moments in which the type changes, with the following three properties.
\begin{enumerate}
\item None of the two changes of type are simultaneous for both markers;
\item Both changes of type are not made by the same marker;
\item Before the first change of type, the markers have the same type.
\end{enumerate}
We count $\DTWD$ cost of any traversal as follows. Every jump (performed by
one of the markers or performed by both markers simultaneously),
contributes $1$ to the final cost of the traversal iff the symbols
that the markers point at immediately after the jump are different (contribution is
$0$ if the symbols are the same).
For two symbols $x$ and $y$, $1_{x \neq y}$ is equal to $1$ if $x \neq y$ and is equal to $0$ otherwise.
We set $x$ to be equal to the symbol that the marker that participates in the first change
of type points at \emph{after} the jump.
We set $y$ to be equal to the symbol that the marker that participates in the second change
of type points at \emph{after} the jump.

The first change of the type contributes $1$ to the final cost of $\DTWD(A(Q_1),A(Q_2))$
(we consider the corresponding jump to the change of the type and its contribution)
and the second change of the type contributes $1_{x \neq y}$ to the final cost. 
We can check that the two changes
can be replaced by a single simultaneous change in both sequences
by changing the traversal of $A(Q_1)$ and $A(Q_2)$ 
(the fact that we can to this follows from the definition of $A$). 
The simultaneous change costs $1_{x \neq y}$ and, therefore, we decrease the
cost of $\DTWD$ by $1$. This contradicts the assumption that
we consider an optimal traversal. Therefore, the assumption that
there exists a maximal contiguous subsequence of moments
during which only one of the markers changes type and the subsequence
is of odd length, is wrong.

Now we can partition the entire sequence of changes of type into
two kinds of contiguous subsequences that do not overlap.
\begin{enumerate}
\item A simultaneous change of type by both markers;
\item Two changes of type following one another made
by the same marker. None of the two changes are simultaneous.
\end{enumerate}

We will now show the promised conversion of the $\DTWD$ traversal of
$A(Q_1)$ and $A(Q_2)$ into an Edit-Distance transformation of $Q_1$ and $Q_2$
into the same sequence (as in Observation \ref{no_insertion}) such that the cost only decreases. This will finish the proof that
$\EDIT(Q_1,Q_2)\leq \DTWD(A(Q_1),A(Q_2))$.

We analyze both types of subsequences.
\begin{enumerate}
\item From the properties of the partition and the fact that
both $A(Q_1)$ and $A(Q_2)$ start with a symbol of $\Sigma$ type,
we get that before and after the change of type both markers are of
the same type.

{\bf Case 1.}
Both markers before the simultaneous change are of $\Sigma$ type.
Suppose that the markers point to symbols $x \in \Sigma$ and $y \in \Sigma$.
In this case we perform substitution of $x$ with $y$ when
transforming $Q_1$ and $Q_2$ into the same sequence.

{\bf Case 2.}
Both markers before the simultaneous change are not of $\Sigma$ type.
In this case we do not have a corresponding substitution or deletion
when transforming $Q_1$ and $Q_2$ into the same sequence.

We see that in both cases the performed actions before 
(contribution to $\DTWD(A(Q_1),A(Q_2))$) 
and after (contribution to $\EDIT(Q_1,Q_2)$)
the conversion cost the same.
\item
Similarly as in the previous kind of subsequence,
we conclude that before the first change of type, the
markers are of the same type.
We consider both possible cases.

{\bf Case 1.}
Both markers before the first change of type are of $\Sigma$ type.
Suppose that the markers point to symbols $x \in \Sigma$ and $y \in \Sigma$.
If $x \neq y$, we perform a substitution of $x$ with $y$ when
transforming $Q_1$ and $Q_2$ into the same sequence.
If $x=y$, we don't do anything.

{\bf Case 2.}
Both markers before the first change of type are not of $\Sigma$ type.
WLOG, the first marker changes the type twice. Before the
second change, the first marker points to $x \in \Sigma$.
We delete $x$ when performing the transformation of $Q_1$ and $Q_2$ into
the same sequence.

We can check that in the first case the cost
after the conversion can only be smaller than before
the conversion.
In the second case the costs before (contribution to $\DTWD$) and after
(contribution to Edit-Distance) the conversion are the same.
\end{enumerate}
\end{proof}

From now on, $\Sigma=\{0,1,2,3\}$ and $a=4$.

\begin{lemma} \label{dtwd_eq_edit}
	For some vectors $r_1$ and $r_2$ with positive, bounded
	integer coordinates,
	$$
		\EDIT(P_1',P_2')\geq \DTWD(A^{r_1}(P_1'),A^{r_2}(P_2')),
	$$
	where $P_1'$ and $P_2'$ are the sequences defined in Theorem \ref{mostort_edit}.
\end{lemma}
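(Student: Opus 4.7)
The plan is to take $r_1$ and $r_2$ to be all-ones vectors of appropriate lengths, so that $A^{r_1}(P_1')$ inserts exactly one occurrence of $a=4$ between every pair of consecutive original symbols of $P_1'$ and at both ends, and similarly for $A^{r_2}(P_2')$. These padding vectors have bounded positive integer coordinates, and the transformation is trivially computable in linear time; in fact, the argument below uses no structural property of $P_1'$ or $P_2'$ beyond $a \notin \Sigma$, so it works for any pair of sequences over $\Sigma=\{0,1,2,3\}$.

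The strategy is to convert an optimal edit transformation directly into a DTWD traversal of the same cost. By Observation~\ref{no_insertion}, the optimal transformation uses only substitutions and deletions, and thus corresponds to a monotone matching $M = \{(i_\ell, j_\ell)\}_{\ell=1}^{k}$ between positions of $P_1'$ and $P_2'$ with total edit cost $\sum_{\ell=1}^{k} 1_{P_1'[i_\ell]\neq P_2'[j_\ell]} + (|P_1'|-k) + (|P_2'|-k)$. The DTWD traversal I construct visits the grid cell $(2i_\ell,2j_\ell)$ for each matched pair: the step entering this cell advances both markers simultaneously from $(2i_\ell-1,2j_\ell-1)$, where both are on $a$, and therefore contributes exactly $d(P_1'[i_\ell],P_2'[j_\ell])=1_{P_1'[i_\ell]\neq P_2'[j_\ell]}$, matching the substitution cost. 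Between consecutive matched pairs (and in the initial and final segments) I simulate each deletion by a pair of steps: to delete a symbol $P_1'[i']$ I advance only marker~1 across its two cells at positions $2i'$ and $2i'+1$ while marker~2 remains fixed on an $a$, contributing $d(P_1'[i'],a)+d(a,a)=1+0$; deletions from $P_2'$ are handled symmetrically, and inside each gap I process all $P_1'$-deletions before all $P_2'$-deletions to keep the path monotone.

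A short computation then confirms that the constructed traversal is monotone, terminates exactly at the required endpoints $(|A^{r_1}(P_1')|,|A^{r_2}(P_2')|)=(2|P_1'|+1,2|P_2'|+1)$, and has total cost equal to $\EDIT(P_1',P_2')$, which yields $\DTWD(A^{r_1}(P_1'),A^{r_2}(P_2'))\leq\EDIT(P_1',P_2')$ as desired. The main obstacle is really just this final bookkeeping: verifying that the endpoints line up and that every DTWD step that is neither a matched-pair step nor the first sub-step of a deletion has both markers on $a$'s (and hence contributes $0$). The uniformity of the padding makes this straightforward.
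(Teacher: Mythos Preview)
Your proposal is correct, and in fact proves a strictly stronger statement than the paper does. The paper's proof exploits the particular structure of the hard instance $(P_1',P_2')$: it fixes specific (not all-ones) padding vectors $r_1,r_2$ tailored to the gadget decomposition, proves a sub-lemma (Lemma~\ref{ag_eq}) about $\DTWD$ on padded vector gadgets, and then exhibits an explicit traversal in two cases depending on whether a far pair of vectors exists, using the known value of $\EDIT(P_1',P_2')$ in each case. By contrast, your argument takes the all-ones padding and converts an \emph{arbitrary} optimal edit alignment into a traversal of the same cost, via the exit-step/deletion-pair scheme you describe; nothing about $P_1',P_2'$ is used beyond $a\notin\Sigma$.

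What this buys you is a general identity $\EDIT(Q_1,Q_2)=\DTWD(A'(Q_1),A'(Q_2))$ for all $Q_1,Q_2$ over $\Sigma$ (combining your direction with Lemma~\ref{edit_leq_dtwd}), i.e.\ a clean black-box reduction from Edit-Distance to $\DTWD$ over symbols. The paper explicitly states in the introduction to this section that the authors ``were not able to show such a reduction in general,'' which is why they fall back on the instance-specific analysis. Your route is shorter, more elementary, and removes the dependence on the internals of the Backurs--Indyk construction. The only place your write-up is slightly terse is the ``exit'' step after each matched pair (the simultaneous move from $(2i_\ell,2j_\ell)$ to $(2i_\ell+1,2j_\ell+1)$, landing on $(a,a)$ at cost $0$), but you do flag this in your final bookkeeping remark; making it explicit would make the traversal description fully self-contained.
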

\begin{proof} We use notation from Theorem \ref{mostort_edit}.
	By $A'$ we will denote a transformation $A^r$ with $r_i=1$ for all $i$.

	Let $r_3$ be such that for all $k \in \{1,2\}$,
	$$
		A^{r_3}(\VG_k'(a))=A'(2^T)A'(\VG_k(a))A'(2^T).
	$$

	We set 
	$$
		A^{r_1}(P_1')=A'(3^{|P_2'|})A^{r_1'}(P_1)A'(3^{|P_2'|}),
	$$
	where $r_1'$ is such that
	$$
		A^{r_1'}(P_1)=\bigcirc_{a_1 \in A_1}A^{r_3}(\VG_1'(a_1)).
	$$
	We set
	$$
		A^{r_2}(P_2')=A^{r_2}(P_2)
	$$
	$$
		=\left(\bigcirc_{i=1}^{\nn-1}A^{r_3}(\VG_2'(f))\right)
                	\left(\bigcirc_{a_2\in A_2}A^{r_3}(\VG_2'(a_2))\right)
                	\left(\bigcirc_{i=1}^{\nn-1}A^{r_3}(\VG_2'(f))\right).
	$$

	We will use the following lemma to prove the inequality.
	\begin{lemma} \label{ag_eq}
		For vectors $\alpha,\beta \in \{0,1\}^d$,
		$$
			\EDIT(\VG_1(\alpha),\VG_2(\beta))
			\geq \DTWD(A'(\VG_1(\alpha)),A'(\VG_2(\beta))).
		$$
	\end{lemma}
	\begin{proof} 
		We consider two cases.

		{\bf Case 1.} The vectors $\alpha$ and $\beta$ are far. 
		In this case, we traverse the $A'(Z_1L)$
		part of $A'(\VG_1(\alpha))$ while the marker in $A'(\VG_2(\beta))$ stays
		at the first symbol. Then, we traverse the remaining part $A'(V_0RZ_2)$
		of $A'(\VG_1(\alpha))$ in parallel with $A'(\VG_2(\beta))$.
		We can check that we achieve $\DTWD$ cost equal to 
		$E_s=\EDIT(\VG_1(\alpha),\VG_2(\beta))$.

		{\bf Case 2.} The vectors $\alpha$ and $\beta$ are close.
		In this case, we traverse $A'(Z_1LV_0)$ and $A'(\VG_2(\beta))$ in parallel.
		Then, we traverse the $A'(RZ_2)$ part of $A'(\VG_1(\alpha))$ while
		the marker at $A'(\VG_2(\beta))$ stays at the last symbol.
		We can check that we achieve $\DTWD$ cost equal to 
		$E_u=\EDIT(\VG_1(\alpha),\VG_2(\beta))$.
	\end{proof}

	We are now ready to prove that 
	$$
		\EDIT(P_1',P_2')\geq\DTWD(A^{r_1}(P_1'),A^{r_2}(P_2')).
	$$
	We are going to show a $\DTWD$ traversal
	of $A^{r_1}(P_1')$ and $A^{r_2}(P_2')$ that achieves $\DTWD$ cost
	equal to $\EDIT(P_1',P_2')$. This will imply the inequality and
	will finish the proof.

	We proceed by considering two cases.

	{\bf Case 1.} There are two vectors $\alpha_{i_1}$ and $\beta_{i_2}$ from their respective
	sets that are far.
	We traverse $A'(\VG_1(\alpha_{i_1}))$ and $A'(\VG_2(\beta_{i_2}))$ as in Lemma \ref{ag_eq}
	achieving cost $E_s$.
	We traverse the rest of vector gadgets of $A^{r_1'}(P_1)$ with their
	counterparts from $A^{r_2}(P_2')$ as in Lemma \ref{ag_eq}. 
	When traversing the
	sequences $A'(2^T)$, we do that in parallel. When traversing $A'(2^T)$
	in parallel, it contributes nothing to the $\DTWD$ cost.

	We traverse the vector gadgets of $A^{r_2}(P_2')$ that are not traversed
	yet, as follows. We traverse the symbols that have $\Sigma$ type from $A^{r_2}(P_2')$
	with the $3$ symbols from $A^{r_1}(P_1')$ in parallel. We notice that we can do that
	in a way so that the $4$ symbols never contribute towards the final $\DTWD$ cost.
	Some of the $3$ symbols from $A^{r_1}(P_1')$ will still remain untraversed. We can
	traverse them while the second marker is on the last symbol of $A^{r_2}(P_2')$ 
	(it does not have $\Sigma$ type).

	By computing the cost of the traversal we get that it is 
	equal to $\EDIT(P_1',P_2')$.

	{\bf Case 2.} There is no pair of far vectors. This case is analogous to Case 1. The only difference
	is that we do not have two vectors $\alpha_{i_1}$ and $\beta_{i_2}$ to match.
	We choose them arbitrarily and then proceed as in the previous case.
	This finishes the analysis of this case.
\end{proof}

\section{Hardness for approximating DTWD}
\label{sec:approx}

In this section we prove that approximating DTWD and Frechet in certain settings, is ruled out by SETH.
The proofs involve simple modifications on the construction of Bringmann~\cite{Bring} for proving the hardness of Frechet, which are given in the following lemmas.


\begin{lemma} \label{frechet_not_metric}
	Given two lists $\{\alpha_i\}_{i \in [n]}$ and $\{\beta_i\}_{i\in [n]}$ of vectors
	$\alpha_i,\beta_i \in \{0,1\}^d$, we can construct two sequences $P_1$ and $P_2$
	in time $O(nd)$ such that $\F(P_1,P_2)=0$ if there are two vectors $\alpha_i$ and $\beta_j$
	that are orthogonal and $\F(P_1,P_2)=1$ otherwise. The sequences of points come from a pointset
	with constant number of points
	with a distance function $f$ between points that does not satisfy the triangle inequality.
\end{lemma}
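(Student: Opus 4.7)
The plan is to take Bringmann's \cite{Bring} SETH-hardness reduction for Frechet distance and modify only the interpoint distance function, keeping the combinatorial structure of the sequences $P_1, P_2$ essentially intact. Bringmann's construction already uses a pointset of constant size (the vertices of a small collection of gadget primitives) and produces sequences of length $O(nd)$ in $O(nd)$ time. In his setting, with Euclidean distance, the Frechet value is some $\delta_1$ when an orthogonal pair exists and at least $\delta_2 > \delta_1$ otherwise. I would replace the Euclidean distance with a custom $f$ on the same constant-size pointset, chosen so that this gap sharpens all the way to $0$ versus $1$.

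Concretely, let $\Sigma$ denote the constant-size pointset. I would define
\[
f(s,t) = \begin{cases} 0 & \text{if } (s,t) \text{ can occur as a marker pair in Bringmann's ``matched-orthogonal-pair'' traversal,} \\ 1 & \text{otherwise.} \end{cases}
\]
The set of pairs with $f(s,t)=0$ is read off directly from Bringmann's positive-instance analysis; it has $O(1)$ entries since $|\Sigma|=O(1)$, so $f$ can be tabulated in constant space. Crucially, $f$ need not satisfy the triangle inequality: typically $f(s,t)=f(t,u)=0$ but $f(s,u)=1$, since ``compatibility'' relative to a traversal is not a transitive relation. This is precisely the freedom the lemma statement grants us.

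For the two directions: if some $\alpha_i,\beta_j$ are orthogonal, then Bringmann's canonical traversal threads through $P_1$ and $P_2$ so that every step pair $(s,t)$ lies in the zero-distance set by construction, yielding $\F(P_1,P_2)=0$. Conversely, if no orthogonal pair exists, I would invoke Bringmann's original lower-bound analysis to show that every traversal is forced, at some step, to use a pair $(s,t)$ outside the zero-distance set, contributing $f(s,t)=1$ to the max; hence $\F(P_1,P_2)=1$.

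The main obstacle is the backward direction: Bringmann's original argument must be translated so that the \emph{qualitative} obstruction (the existence of at least one ``incompatible'' step in any traversal) already suffices, rather than relying on a quantitative accumulation of small costs above $\delta_2$. Fortunately, because the Frechet distance is a maximum rather than a sum, a single forced incompatible step immediately drives the value from $0$ to $1$, so Bringmann's case analysis goes through after simply relabeling distances according to $f$. Verifying this rewriting carefully, and checking that every legal marker pair on the canonical ``orthogonal'' traversal has indeed been placed in the zero-distance set, is essentially the entire proof.
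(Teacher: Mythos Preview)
Your proposal is correct and is essentially the paper's approach: the paper explicitly names constant-size pointsets $Q_1,Q_2$, tabulates $f$ to be $0$ on precisely the pairs that arise along the canonical YES-case traversal (and $1$ elsewhere), and then verifies both directions by a direct self-contained case analysis (Claims~\ref{orthog} and~\ref{not_orthog}) rather than by invoking Bringmann as a black box. One caveat worth noting: Bringmann's original curves live in $\mathbb{R}^2$ with $\Theta(nd)$ geometrically distinct points, so the ``constant-size pointset'' you appeal to is itself the abstraction step the paper carries out explicitly, not something present verbatim in~\cite{Bring}; once that abstraction is made, your observation that the zero-distance set is contained in Bringmann's ``distance $\le \delta_1$'' set is exactly why his NO-case analysis transfers.
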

\begin{proof}
	Let $Q_1=\{s_1,r_1,t_1,c_{1,0}^1,c_{1,1}^1,c_{1,0}^2,c_{1,1}^2\}$ be the poinset that 
	we will use for the construction of the first sequence.
	Let $Q_2=\{s_2,s_2^*,r_2,t_2,t_2^*,c_{2,0}^1,c_{2,1}^1,c_{2,0}^2,c_{2,1}^2\}$ be the pointset that 
	we will use for the construction of the second sequence.

	We set $f(v_1,v_2)=0$ for all
	$$(v_1,v_2) \in (\{s_2\}\times Q_1) \cup (\{s_1\}\times (Q_2 \setminus \{t_2^*\})) \cup \{(r_1,r_2)\} \cup(\{t_2\}\times Q_1) \cup(\{t_1\}\times Q_2\setminus\{s_2^*\})$$ 
	$$\cup \{(c_{1,0}^0,c_{2,0}^0),(c_{1,0}^0,c_{2,1}^0),(c_{1,1}^0,c_{2,0}^0)\}\cup \{(c_{1,0}^1,c_{2,0}^1),(c_{1,0}^1,c_{2,1}^1),(c_{1,1}^1,c_{2,0}^1)\}.$$
	We set $f(v_1,v_2)=1$ for all $v_1 \in Q_1$ and $v_2 \in Q_2$ that we did not set yet.
	$f$ is symmetric function, i.e., $f(v_1,v_2)=f(v_2,v_1)$ for all $v_1$ and $v_2$.

	We define \emph{coordinate gadget} for the first sequence as
	$$
		\CG_1(\alpha_i,j)=c_{1,(\alpha_i)_j}^{j\m{2}}
	$$ for $i \in [n]$ and $j \in [d]$.

	We define \emph{vector gadget} for the first sequence as
	$$
		\VG_1(\alpha_i)=r_1 \circ \bigcirc_{j \in [d]}\CG_1(\alpha_i,j)
	$$ 
	for $i \in [n]$.

	We define \emph{coordinate gadget} for the second sequence as
	$$
		\CG_2(\beta_i,j)=c_{2,(\beta_i)_j}^{j\m{2}}
	$$ for $i \in [n]$ and $j \in [d]$.

	We define \emph{vector gadget} for the second sequence as
	$$
		\VG_2(\beta_i)=r_2 \circ \bigcirc_{j \in [d]}\CG_2(\beta_i,j)
	$$ 
	for $i \in [n]$.

	We define the final sequences
	$$
		P_1=\bigcirc_{i \in [n]}(s_1 \circ \VG_1(\alpha_i) \circ t_1)
	$$
	and
	$$
		P_2=s_2 \circ s_2^* \circ (\bigcirc_{i \in [n]}\VG_2(\beta_i))
			\circ t_2^* \circ t_2.
	$$

	The rest of the proof follows from Claims \ref{orthog} and \ref{not_orthog}.
	
	\begin{claim} \label{orthog}
	If there are vectors $\alpha_i$ and $\beta_j$ that are orthogonal, then
	$\F(P_1,P_2)=0$.
\end{claim}
\begin{proof}
	We show traversal of $P_1$ and $P_2$ that achieves $\F(P_1,P_2)=0$.

	We stay at $s_2$ on $P_2$ and traverse the first sequence until
	we are at $s_1$ just before $\VG_1(\alpha_i)$. We stay at $s_1$
	on $P_1$ and traverse $P_2$ until we are at $r_2$ in $\VG_2(\beta_j)$.
	Now we perform simultaneous jumps in both sequences until we are
	at $\CG_1(\alpha_i,n)$ on $P_1$ and $\CG_2(\beta_j,n)$ on $P_2$.
	We perform jump to $t_1$ on $P_1$. Next we perform traversal of the
	rest of $P_2$ until we are at $t_2$ on $P_2$. Now we stay at $t_2$
	on $P_2$ and traverse the rest of sequence $P_1$ and we are done
	traversing both sequences. We can check that we achieve
	$\F(P_1,P_2)=0$.
\end{proof}

\begin{claim} \label{not_orthog}
	If there are no vectors $\alpha_i$ and $\beta_j$ that are orthogonal, then
	$\F(P_1,P_2)=1$.
\end{claim}
\begin{proof}
	We show that we can't traverse $P_1$ and $P_2$ in a way that achieves
	$\F(P_1,P_2)=0$. By the construction of $f$, we get that
	$\F(P_1,P_2)=1$.

	Suppose that this is not the case. Consider a traversal of $P_1$ and $P_2$
	that achieves $\F(P_1,P_2)=0$.
	
	There will be a moment during the traversal of $P_2$ when we are at $s_2^*$
	on $P_2$. At this very moment, by the construction of the distance function $f$, we must be
	at $s_1$ on $P_1$. Next jump is performed simultaneously on both sequences
	or on the second sequence only. In both cases we end up at $r_2$ on $P_2$.
	Let's denote this moment by $t$. We want to claim that there will be a moment
	when we are at $r_2$ on $P_2$ and at $r_1$ on $P_1$.
	If at this moment $t$ we are at $r_1$ on $P_1$, we have found the desired moment.
	Otherwise, consider the next moment $t'$ after $t$ when we are at $r_1$ on $P_1$.
	We claim that at this moment $t'$ we are at $r_2$ on $P_2$. Indeed, by the construction
	of $f$, we must be at $r_2$ or $t_2$ on $P_2$. We can't be at $t_2$ because
	we can only get there by traversing $t_2^*$ from $P_2$ but this requires
	being at $t_1$ on $P_1$. So we have found a moment when we are at $r_1$ on
	$P_1$ and at $r_2$ on $P_2$. By the construction of $f$ and the requirement
	that we achieve Frechet cost $0$, we conclude that we need to traverse
	the corresponding vector gadgets by doing simultaneous jumps. Given
	that there are no two vectors $\alpha_i$ and $\beta_j$ that are parallel
	and by the construction of $f$, we get that we can't achieve Frechet cost $0$.
\end{proof}

\end{proof}

An interesting corollary of Lemma~\ref{frechet_not_metric} is that any constant-factor approximation algorithm for the Frechet, cannot run in truly sub-quadratic time under SETH.
However, in the above reduction, the distance function over the points of the sequences violates the triangle-inequality.
Next, we show hardness for $(3-\eps)$-approximation algorithms when the points come from a metric.

\begin{lemma} \label{frechet_metric}
	Given two lists $\{\alpha_i\}_{i \in [n]}$ and $\{\beta_i\}_{i\in [n]}$ of vectors
	$\alpha_i,\beta_i \in \{0,1\}^d$, we can construct two sequences $P_1$ and $P_2$
	in time $O(nd)$ such that $\F(P_1,P_2)=1/2$ if there are two vectors $\alpha_i$ and $\beta_j$
	that are orthogonal and $\F(P_1,P_2)=3/2$ otherwise. Sequences of points come from a metric
	that has constant number of points.
\end{lemma}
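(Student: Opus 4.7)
The plan is to recycle the construction of Lemma~\ref{frechet_not_metric} verbatim, while modifying the distance function so that it becomes a metric and the costs $0$ and $1$ get pushed up to $1/2$ and $3/2$ respectively. Specifically, I would keep the pointsets $Q_1, Q_2$, the coordinate gadgets $\CG_1, \CG_2$, the vector gadgets $\VG_1, \VG_2$, and the final sequences $P_1, P_2$ exactly as before. I would then define a new distance function $f'$ on the disjoint union $Q_1 \cup Q_2$ by: $f'(v_1, v_2) = 1/2$ whenever $f(v_1, v_2) = 0$ (with $v_1 \in Q_1, v_2 \in Q_2$), $f'(v_1, v_2) = 3/2$ whenever $f(v_1, v_2) = 1$, and $f'(u, u') = 1$ whenever both $u, u'$ lie in the same part $Q_1$ or $Q_2$ (with $f'(u, u) = 0$ of course). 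The pointset has constant size since $|Q_1| + |Q_2|$ is constant.

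The first step is to verify that $f'$ satisfies the triangle inequality. Since all bipartite distances lie in $\{1/2, 3/2\}$ and all within-part distances equal $1$, I would enumerate the three shapes of triples: (i) three points inside a single part give distances $(1,1,1)$, which is fine; (ii) two points in one part and one in the other give two bipartite distances in $\{1/2, 3/2\}$ and one within-part distance equal to $1$, where the worst case $1 \le 1/2 + 1/2$ is tight but valid and all other combinations are easy; (iii) there is no shape with one point from each part and a third point ``between'' them in a third set, so no further check is required. Together with symmetry and positivity, this makes $f'$ a metric.

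Next, I would reprove the two claims corresponding to Claims~\ref{orthog} and~\ref{not_orthog}. Since $P_1$ consists only of points from $Q_1$ and $P_2$ only of points from $Q_2$, the Frechet cost is a maximum over bipartite distances only; within-part distances (which exist only because of the metric extension) never enter the cost of any traversal. Hence the exact same traversal that achieved Frechet $0$ in the orthogonal case now achieves Frechet $1/2$, because every compared pair along the traversal had $f$-distance $0$ and now has $f'$-distance $1/2$. Conversely, the impossibility argument in the non-orthogonal case carries over unchanged: any traversal must at some step compare a pair that previously had $f$-distance $1$, so its cost is now at least $3/2$. This yields the gap $1/2$ versus $3/2$ and completes the reduction, giving a ratio of $3$ between the two cases.

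The main obstacle, which is really only a bookkeeping obstacle, is choosing the within-part distance so that both directions of the triangle inequality hold simultaneously: it must be at least $1/2 + 1/2 = 1$ (so that a within-part pair is not cheaper than going through the other part via two ``close'' bipartite edges) and at most $3/2 - 1/2 = 1$ (so that a ``far'' bipartite edge cannot be shortcut through a within-part edge plus a ``close'' bipartite edge). The value $1$ is forced, and happily it works, which is precisely why this construction cannot push the hardness ratio above $3$ without further ideas.
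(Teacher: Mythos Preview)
Your proposal is correct and takes essentially the same approach as the paper: shift the bipartite distances from $\{0,1\}$ to $\{1/2,3/2\}$ by adding $1/2$, set all within-part distances to $1$, and observe that the Frechet analysis from Lemma~\ref{frechet_not_metric} carries over since traversals only ever compare bipartite pairs. The paper's proof is a three-line sketch that leaves the triangle-inequality check and the carryover of Claims~\ref{orthog} and~\ref{not_orthog} to the reader, whereas you have spelled these out; your observation that the within-part distance is forced to be exactly $1$ (hence the ratio cannot exceed $3$ with this trick) is a nice addition not made explicit in the paper.
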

\begin{proof}
	Let us denote the distance function that satisfies the triangle inequality by $f'$. Let $f$ be the distance
	function from Lemma \ref{frechet_not_metric}.
	We set $f'(v_1,v_2):=f(v_1,v_2)+1/2$ for $v_1 \in Q_1$ and $v_2 \in Q_2$.
	We set $f'(v_1,v_2)=1$ if $v_1,v_2 \in Q_1$ or $v_1,v_2 \in Q_2$.

	We can check that $f'$ satisfies the triangle inequality.
\end{proof}

Finally, we observe that this construction implies that if the distance function is arbitrary, then DTWD cannot be approximated to within any constant factor in truly sub-quadratic time, under SETH, as well.

\begin{lemma} \label{dtwd_not_metric}
	Given two lists $\{\alpha_i\}_{i \in [n]}$ and $\{\beta_i\}_{i\in [n]}$ of vectors
	$\alpha_i,\beta_i \in \{0,1\}^d$, we can construct two sequences $P_1$ and $P_2$
	in time $O(nd)$ such that $\DTWD(P_1,P_2)=0$ if there are two vectors $\alpha_i$ and $\beta_j$
	that are orthogonal and $\DTWD(P_1,P_2)=1$ otherwise. Sequences of points come from a pointset
	with constant number of points
	with a distance function $f$ between points that does not satisfy the triangle inequality.
\end{lemma}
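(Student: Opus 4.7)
My plan is to repeat the construction of Lemma \ref{frechet_not_metric} almost verbatim, reusing the same pointsets $Q_1, Q_2$, the same sequences $P_1$ and $P_2$, and the same distance function $f$ (up to a global rescaling of its nonzero values). Since the non-metric setting permits us to specify $f$ freely, we only need to verify the same structural arguments and adjust the numerical bounds for the DTWD objective, which is a sum rather than a max.

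For the orthogonal case, I would apply the traversal of Claim \ref{orthog} unchanged. Since $\alpha_i$ and $\beta_j$ share no coordinate where both are $1$, no transition during the simultaneous scan of the coordinate gadgets involves a pair $(c_{1,1}^p, c_{2,1}^p)$, and every other transition in the traversal is already distance-$0$ by construction. Summing zeros gives $\DTWD(P_1,P_2)=0$. This direction is immediate.

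For the non-orthogonal case, the structural argument of Claim \ref{not_orthog} still applies: the distance-$0$ constraints around $s_1, s_2^*, t_1, t_2^*, r_1, r_2$ force any optimal traversal to simultaneously scan the coordinate gadgets of some matched pair $\VG_1(\alpha_i), \VG_2(\beta_j)$. Because $\alpha_i$ and $\beta_j$ have positive inner product, at least one coordinate $k$ forces a $c_{1,1}^p \leftrightarrow c_{2,1}^p$ transition of cost $1$, so $\DTWD(P_1,P_2) \geq 1$. To obtain the matching upper bound $\DTWD(P_1,P_2) \leq 1$, I would rescale the nonzero values of $f$ so their total contribution to any traversal is bounded by $1$; concretely, replacing each nonzero $f$-value by $1/K$ for $K$ an upper bound on the number of transitions any valid traversal can perform keeps $\DTWD$ at most $1$, while preserving the zero-versus-positive gap.

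The main obstacle is pinning $\DTWD$ down to exactly $1$ in the non-orthogonal case, because multiple coordinate mismatches inside a single matched pair of vector gadgets could naively contribute more than $1$ to the sum. The rescaling described above resolves this, and for the actual application of the lemma — ruling out constant-factor approximation in truly sub-quadratic time — even a gap between $0$ and any fixed positive value already suffices. Every other ingredient of the proof, in particular the forcing arguments that use $s_2^*$ and $t_2^*$ to locate some matched vector-gadget pair, is inherited from Lemma \ref{frechet_not_metric} without modification, since the zero-distance pairs that encode these structural constraints are untouched by the rescaling.
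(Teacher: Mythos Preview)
Your handling of the orthogonal case and of the lower bound $\DTWD(P_1,P_2)\geq 1$ in the non-orthogonal case is fine and matches the paper. The gap is in the upper bound $\DTWD(P_1,P_2)\leq 1$. Your rescaling trick does not prove the lemma as stated: once every nonzero distance becomes $1/K$, your own lower-bound argument only yields $\DTWD\geq 1/K$, not $\DTWD\geq 1$, so you end up with $\DTWD\in[1/K,1]$ rather than $\DTWD=1$. You acknowledge this and retreat to ``a gap between $0$ and some positive value suffices for the application,'' but that is a weaker statement than the one you are asked to prove.

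The paper avoids rescaling entirely and instead exhibits a single explicit traversal of cost exactly $1$ with the original $f$: hold the first marker at $s_1$ while the second marker sweeps through all of $P_2$, then hold the second marker at $t_2$ while the first marker sweeps through all of $P_1$. By the definition of $f$, the pair $(s_1,v)$ has distance $0$ for every $v\in Q_2\setminus\{t_2^*\}$ and distance $1$ for $v=t_2^*$, so the first phase costs exactly $1$; the pair $(v,t_2)$ has distance $0$ for every $v\in Q_1$, so the second phase costs $0$. This gives $\DTWD\leq 1$ directly, and combined with the lower bound (which is valid for the unrescaled $f$ because every nonzero distance equals $1$), the lemma follows with no modification of the construction from Lemma~\ref{frechet_not_metric}.
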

\begin{proof}
	The sequences $P_1$ and $P_2$ are the same as in the proof of Lemma \ref{frechet_not_metric}.

	If there are two vectors $\alpha_i$ and $\beta_j$ that are orthogonal, we traverse the sequences
	in the same way as in Lemma \ref{orthog}.

	If there are no two vectors $\alpha_i$ and $\beta_j$ that are orthogonal, we show that
	$\DTWD(P_1,P_2)\geq 1$ in the same way as we do in Lemma \ref{not_orthog}. The following
	traversal achieves $\DTWD(P_1,P_2)=1$. We stay at $s_1$ on $P_1$ and traverse the entire $P_2$ (this costs $1$).
	We stand at $t_2$ and traverse the entire $P_1$ (this costs $0$).
\end{proof}

\section{Hardness for $k$-LCS}
\label{sec:klcs}

In this section we prove Theorem~\ref{thm:klcs}, along with another interesting lower bound for a variant of $k$-LCS (Theorem~\ref{thm:Local-k-LCS}).

As in the reduction to LCS, it will be much more convenient to reduce to the weighted version of the problem, defined below, as an intermediate step.

\begin{definition}[$k$-LCS and $k$-WLCS] An algorithm for $k$-LCS problem outputs the answer to the following question.
Given $k$ strings of length $n$ over alphabet $\Sigma$, what is the length of the longest sequence that appears in all $k$ strings as a subsequence?
In $k$-WLCS we are also given a scoring function $w:\Sigma \to [K]$ and the goal is to find the common subsequence $X$ of all $k$ strings that maximizes the sum $\sum_{i=1}^{|X|} w(X[i])$.
\end{definition}

As before, we can think of the common subsequence as a matching of the strings.
We can also adapt the previous proof to show a reduction from the weighted version to the unweighted version.

\begin{lemma}
\label{lem:kwlcs}
Computing the $k$-WLCS of $k$ strings of length $n$ over $\Sigma$ with weights $w: \Sigma \to [K]$ can be reduced to computing the $k$-LCS of $k$ strings of length $O(Kn)$ over $\Sigma$.
\end{lemma}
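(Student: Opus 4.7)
The plan is to reuse the block-expansion map $f$ from Lemma~\ref{lem:wlcs}: set $f(\ell) = \ell^{w(\ell)}$ and extend to sequences by concatenation, producing $f(P_1), \ldots, f(P_k)$ of length at most $Kn$. I will then show $k\text{-LCS}(f(P_1), \ldots, f(P_k)) = k\text{-WLCS}(P_1, \ldots, P_k)$. The easy direction carries over verbatim from the $k=2$ case: any common subsequence $X$ of the $P_m$'s lifts to a common subsequence $f(X)$ of the $f(P_m)$'s of length $W(X)$.

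For the reverse direction, the exchange argument of Lemma~\ref{lem:wlcs} crucially used the two-sequence planarity fact that ``type (2) and type (3) conflicts cannot coexist'' in a single matching; this fails for $k \geq 3$, so I would replace it with a cleaner global extraction. View the optimal $k$-LCS $Y$ as a strictly (componentwise) increasing sequence of $k$-tuples of positions, and replace each tuple's $m$-th coordinate by the interval index it belongs to in $f(P_m)$. The resulting sequence is componentwise non-decreasing; let $V_1 \leq V_2 \leq \cdots \leq V_r$ be its distinct values in order, with $j$-th block-size $b_j$, so $|Y| = \sum_j b_j$. All matches in one block lie in the same interval of each $f(P_m)$, hence share a common symbol $\ell_j$, and each interval of $f(P_m)$ has only $w(\ell_j)$ positions, giving the capacity bound $\sum_{j' : V_{j'}[m] = V_j[m]} b_{j'} \leq w(\ell_j)$ for every coordinate $m$.

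I would then construct a common subsequence $X$ of $P_1, \ldots, P_k$ greedily from $V_1, \ldots, V_r$: include $V_j$ whenever it strictly dominates, in every coordinate, the previously included $V$. Strict dominance ensures the included tuples $V_{j_1} < V_{j_2} < \cdots < V_{j_s}$ specify strictly increasing positions in each $P_m$, and since $P_m[V_{j_i}[m]] = \ell_{j_i}$ for every $m$, $X$ is indeed a common subsequence with $W(X) = \sum_i w(\ell_{j_i})$.

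The main obstacle is charging the skipped blocks against $W(X)$. For a skipped block $V_{j'}$ in the group following an included $V_{j_i}$, the set $A_{j'} = \{m : V_{j'}[m] = V_{j_i}[m]\}$ is nonempty (otherwise $V_{j'}$ would strictly dominate $V_{j_i}$ and be included); and the componentwise monotonicity of the $V$'s makes the sets $A_{j_i + 1} \supseteq A_{j_i + 2} \supseteq \cdots$ nested, so a single coordinate $m^*$ lies in $A_{j'}$ for every skipped $V_{j'}$ in the group. Then $V_{j_i}$ and all such $V_{j'}$ use the same interval $I_{m^*}(V_{j_i}[m^*])$ of $f(P_{m^*})$, and the capacity bound gives $b_{j_i} + \sum_{V_{j'} \text{ in the group}} b_{j'} \leq w(\ell_{j_i})$. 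Summing over groups (including the tail group after $V_{j_s}$) yields $|Y| = \sum_j b_j \leq \sum_i w(\ell_{j_i}) = W(X) \leq k\text{-WLCS}(P_1, \ldots, P_k)$, completing the reduction.
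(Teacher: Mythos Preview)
Your argument is correct and takes a genuinely different route from the paper's proof. The paper adapts the exchange argument of Lemma~\ref{lem:wlcs} to $k$ strings: it scans left to right and repeatedly replaces a partially-matched interval by a fully-matched one, arguing that each such replacement adds $w(\ell)$ matched $k$-tuples and removes at most $w(\ell)$ conflicting ones because all conflicting $k$-tuples must share the same interval in \emph{some} coordinate (this is the $k$-string substitute for the ``type~(2) and type~(3) cannot coexist'' planarity fact). After all modifications, complete intervals are matched and one contracts.

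Your approach instead extracts a weighted common subsequence directly, without ever modifying the matching. Projecting the optimal $k$-LCS to interval-index tuples gives a componentwise non-decreasing chain $V_1\le\cdots\le V_r$; greedily keeping a strictly increasing subchain yields a legitimate common subsequence of the $P_m$'s, and the nesting observation $A_{j_i+1}\supseteq A_{j_i+2}\supseteq\cdots$ pins every skipped block in a group to a single coordinate $m^\ast$, after which the per-interval capacity bound $\sum_{j':\,V_{j'}[m^\ast]=V_{j_i}[m^\ast]} b_{j'}\le w(\ell_{j_i})$ does all the work. Summing over groups gives $|Y|\le W(X)$ cleanly. This is arguably more transparent than the iterative exchange, since it separates the combinatorics (chain extraction) from the accounting (capacity), and it sidesteps any need to argue that the exchange process terminates or maintains an invariant. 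The paper's approach, on the other hand, has the advantage of producing an explicit full-interval matching, which is conceptually closer to the $k=2$ case and makes the ``contract intervals back to symbols'' step literal rather than implicit. Two minor remarks: your ``tail group after $V_{j_s}$'' is already the group headed by $V_{j_s}$ itself (since $j_1=1$, every block belongs to exactly one group), so no separate tail case is needed; and the paper does not claim the $k=2$ crossing argument fails for $k\ge 3$, only that a different observation replaces it.
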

\begin{proof}
The proof is similar to the proof of Lemma~\ref{lem:wlcs} where we only had two strings, we will only outline the differences.
As before, the reduction maps each symbol $\ell$ into an interval of $w(\ell)$ copies of the same symbol $\ell$.
First, we can map a subsequence $X$ of the weighted instance of weight $w(X)$ into a subsequence of length $w(X)$ of the unweighted instance by mapping each symbol of $X$ into an interval.
Second, we can modify a subsequence of length $|X|$ of the unweighted instance into a subsequence of length at least $|X|$ which has the property that complete intervals are matched in the corresponding matching. 
Once we have this property we can contract each interval back into the original weighted symbol that generated it and obtain a subsequence of weight at lest $|X|$.
As before, these modifications can be done by scanning the strings from left to right and repeatedly converting each matching of parts of intervals into a matching of complete intervals while removing conflicting matches.
Each such modification adds $w(\ell)$ $k$-tuples to the matching and removes up to $w(\ell)$ previously matched $k$-tuples. 
The argument here is similar to the one in Lemma~\ref{lem:wlcs}, and is based on the observation that all conflicting $k$-tuples must come from the same interval in at least one of the $k$ strings.
\end{proof}

\subsection{$k$-Orthogonal-Vectors}

We will prove SETH-based lower bounds for problems on $k$ sequences 
via the orthogonal vectors problem on $k$ lists (see Lemma \ref{lem:kmaxsat} below).

\begin{definition}[$\kOrt$]
        Given $k$ lists $\{\alpha_i^t\}_{i \in [n]}$ ($t \in [k]$) of vectors $\alpha_i^t \in \{0,1\}^d$,
        are there $k$ vectors $\alpha_{i_1}^1,\alpha_{i_2}^2,...,\alpha_{i_k}^k$ that satisfy,
        $\sum_{h=1}^d \prod_{t \in [k]} \alpha_{i_t}^t[h] = 0$?
        Any collection of vectors $(\alpha_{i_t}^t)_{t \in [k]}$ with this property will be called orthogonal.
\end{definition}

\begin{definition}[$\kMostOrt$] \label{def:kmo}
        Given $k$ lists $\{\alpha_i^t\}_{i \in [n]}$ ($t \in [k]$) of vectors $\alpha_i^t \in \{0,1\}^d$
	and an integer $r \in \{1,2,...,d\}$,
        are there $k$ vectors $\alpha_{i_1}^1,\alpha_{i_2}^2,...,\alpha_{i_k}^k$ that satisfy,
        $\sum_{h=1}^d \prod_{t \in [k]} \alpha_{i_t}^t[h] \leq r$? 
        The LHS of the latter expression will be called the
	inner product of the $k$ vectors.
	A collection of vectors that satisfies the property will be called ($r$-)\emph{far}, and otherwise it will be called ($r$-)\emph{close}.
\end{definition}

\begin{lemma} \label{lem:kmaxsat}
	If $\kMostOrt$ on can be solved in $T(n,k,d)$ time, 
	then given a CNF formula on $n$ variables and $M$ clauses, 
	we can compute the maximum number of satisfiable clauses (MAX-CNF-SAT), 
	in $O(T(2^{n/k},k,M)\cdot \log{M})$ time.
\end{lemma}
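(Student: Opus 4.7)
The plan is to adapt the split-and-list argument from Lemma~\ref{lem:maxsat} to the $k$-list setting. First, I would partition the $n$ Boolean variables into $k$ blocks of size $n/k$ each (if $k$ does not divide $n$, pad with dummy variables at cost only a constant factor in $n/k$). For each block $t \in [k]$, I enumerate all $2^{n/k}$ partial assignments of the variables in that block; call the resulting list $L_t$, so $|L_t| = 2^{n/k}$.

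Next, for every partial assignment $\alpha$ in block $t$, define a vector $v(\alpha) \in \{0,1\}^M$ whose $j$-th coordinate is $0$ if $\alpha$ sets at least one literal of clause $C_j$ to \emph{true}, and $1$ otherwise. The key identity is that for any $k$-tuple $(\alpha^1, \ldots, \alpha^k)$ of partial assignments,
\[
\sum_{j=1}^{M} \prod_{t=1}^{k} v(\alpha^t)[j]
\]
counts exactly the number of clauses $C_j$ such that \emph{no} block's partial assignment contributes a satisfying literal, i.e., the number of clauses left unsatisfied by the combined full assignment $(\alpha^1,\ldots,\alpha^k)$. Therefore minimizing this $k$-wise inner product over the $k$ lists is equivalent to maximizing the number of satisfied clauses, which is MAX-CNF-SAT.

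Having reduced MAX-CNF-SAT to finding the $k$-tuple of minimum $k$-wise inner product across the $k$ lists, I would use a standard binary search over the threshold $r \in \{0,1,\ldots,M\}$: a call to the $\kMostOrt$ oracle on inputs of size $N = 2^{n/k}$ in $M$ dimensions with threshold $r$ tells us whether a $k$-tuple with inner product at most $r$ exists. Doing $O(\log M)$ such calls isolates the minimum, for a total running time of $O(T(2^{n/k}, k, M) \cdot \log M)$, which is exactly the claimed bound.

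There is no real obstacle here — the whole argument is essentially bookkeeping, generalizing the two-block split to $k$ blocks. The only point deserving a sentence of care in the write-up is verifying the product-equals-one-iff-all-zero identity that replaces the coordinate-wise AND used in the $k=2$ case; once that is in hand, the reduction and the binary-search wrapper are immediate. Note also that since $T$ is monotone in its first argument, the slight rounding when $k \nmid n$ is absorbed into the stated bound.
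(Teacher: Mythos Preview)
Your proposal is correct and follows essentially the same approach as the paper's own proof: split the variables into $k$ blocks of size $n/k$, encode each partial assignment as a $0/1$ vector indicating which clauses it fails to satisfy, observe that the $k$-wise inner product counts unsatisfied clauses, and binary-search on the threshold $r$ with $O(\log M)$ oracle calls. The paper's argument is identical in all substantive respects; your added remarks about padding when $k \nmid n$ and monotonicity of $T$ are fine housekeeping.
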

\begin{proof} 
The proof is generalization of the one for Lemma \ref{lem:maxsat}.

Given a CNF formula on $n$ variables and $M$ clauses, split the variables into $k$ sets of size $n/k$ and list all $2^{n/k}$ partial assignments to each set.
Define a vector $v(\alpha)$ for each partial assignment $\alpha$ which contains a $0$ at coordinate $j \in [M]$ if $\alpha$ sets any of the literals of the $j^{th}$ clause of the formula to true, and $1$ otherwise. 
In other words, it contains a $0$ if the partial assignment satisfies the clause and $1$ otherwise.
Now, observe that if $\alpha_t$ (${t \in [k]}$) is assignment for variables of $t$-th set (every set if of size $n/k$), then the inner product of vectors $\{v(\alpha_t)\}_{t \in [k]}$ (as in definition \ref{def:kmo}) is equal to the number of clauses that the assignment $(\odot_{t \in k} \alpha_t)$ does not satisfy.
Therefore, to find the assignment that maximizes the number of satisfied clauses, it is enough to find $k$ vectors $\alpha_t$ (${t \in [k]}$) such that the inner product of vectors $\{v(\alpha_t)\}_{t \in [k]}$ is minimized.
The latter can be easily reduced to $O(\log{M})$ calls to an oracle for $\kMostOrt$ on $k$ sets of $N=2^{n/k}$ vectors each in $\{0,1\}^M$ with a standard binary search. 
\end{proof}



\subsection{Adapting the reduction}

There are two challenges in adapting the hardness proof for problem of computing $\LCS$ between two sequences to the problem of computing $\LCS$ between $k>2$ sequences: constructing the vector gadgets, and combining the gadgets in a way that implements a selection-gadget. 
We will start with the vector gadgets.

\paragraph{Vector gadgets.}
	We will need symbols $a,b,c,d$ with $w(a)=w(b)=w(c)=1$ and $w(d)=4^k$.
	For an integer $p\in \{0,1,2,...,2^k-1\}$ we define $v_p \in \{0,1\}^k$ to be a vector
	containing the binary expansion of $p$, i.e., $(v_p)_t$ is $t^{th}$ bit in the binary expansion of $p$,
	for $t \in [k]$. Let function $f$ satisfy $f(0)=a$ and $f(1)=b$. For $x \in \{0,1\}$, $\overline{x}:=1-x$.

	For $t$-th set of vectors $\{\alpha_i^t\}_{i \in [n]}$ ($t \in [k]$)
	and $i \in [n]$, and $j \in [d]$ we define \emph{coordinate gadget}
	$$
		\CG_t(\alpha_i^t,j)=
		\begin{cases} 
			dcd \bigcirc_{p=0}^{2^k-2}(f((v_p)_t)\circ d) & \text{if } (\alpha_i^t)_j=0 \\
			dd \bigcirc_{p=0}^{2^k-2}(f(\overline{(v_p)_t})\circ d) & \text{otherwise.}
		\end{cases}
	$$

	\begin{claim} \label{coordinate_gadget}
		Let $E_o^c=2+2^k\cdot w(d)$ and $E_n^c=E_o^c-1$. For $j \in [d]$
		and $i_1,i_2,...,i_k \in [n]$,
		$$
			\WLCS(\CG_1(\alpha_{i_1}^1,j),\CG_2(\alpha_{i_2}^2,j),...,\CG_k(\alpha_{i_k}^k,j))
			=\begin{cases}
				E_n^c & \text{if } (\alpha_{i_t}^t)_j=1 \text{ for all } t \in [k],\\
				E_o^c & \text{ otherwise.}
			\end{cases}
		$$
	\end{claim}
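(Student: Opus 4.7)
The plan is to decompose the WLCS into contributions from matches of the heavy symbol $d$ (weight $4^k$) and matches of the light symbols $a,b,c$ (each of weight $1$). Since the total weight of light symbols in any gadget is $O(2^k)$, which is much smaller than $w(d)=4^k$, an optimal common subsequence will first maximize the number of matched $d$'s and then select the best compatible set of light matches. By inspection of the construction, the $d$-contribution to the WLCS comes out to $2^k\cdot w(d)$ via a direct alignment of $d$'s across all $k$ gadgets; verifying this count and its optimality is largely bookkeeping on the positions of $d$'s relative to the interspersed light symbols.

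The heart of the argument is the combinatorial analysis of the light matches. Writing $S_0=\{t:(\alpha_{i_t}^t)_j=0\}$ and $S_1=[k]\setminus S_0$, the $c$-symbol can be matched across all $k$ strings iff $S_0=[k]$, since $c$ only appears in case-$0$ gadgets. At block $p\in\{0,\ldots,2^k-2\}$, gadget $t$ contains $f((v_p)_t)$ if $t\in S_0$ and $f(\overline{(v_p)_t})$ if $t\in S_1$, and the block admits a simultaneous match iff all these symbols agree; this reduces to the existence of $\epsilon\in\{0,1\}$ with $(v_p)_t=\epsilon$ on $S_0$ and $(v_p)_t=1-\epsilon$ on $S_1$. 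The $v_p$'s satisfying this form a complementary pair of specific vectors, and I then filter by the constraint that $p\ne 2^k-1$, i.e., $v_p\ne 1^k$.

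A three-case analysis completes the count. When $S_1=[k]$ (all case $1$), only $v_0=0^k$ survives as a candidate and yields one $b$-match with no $c$-match, so the total light contribution is $1$ and $\WLCS=E_n^c$. When $S_0=[k]$ (all case $0$), symmetrically only $v_0$ yields an $f$-match (an $a$), but the $c$ is now matchable, giving $2$ light matches and $\WLCS=E_o^c$. When both $S_0$ and $S_1$ are non-empty (the mixed case), both complementary candidate vectors differ from $1^k$, giving $2$ $f$-matches (one $a$ and one $b$) and no $c$-match, again totaling $2$ light matches and $\WLCS=E_o^c$.

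The main obstacle will be establishing that the claimed $d$-matches and light matches are simultaneously realizable as a single common subsequence, and that no alternative arrangement yields strictly more weight. Achievability is shown by exhibiting an explicit common subsequence that aligns the $d$'s across all strings and inserts the identified light matches in the appropriate gaps between consecutive aligned $d$'s. Optimality follows because $w(d)$ is chosen much larger than the total number of light symbols, so any tradeoff in which we give up a $d$-match to gain additional light matches is strictly suboptimal; thus the light-match count above is tight, and the WLCS equals exactly $2^k\cdot w(d)$ plus the number of light matches identified in the case analysis.
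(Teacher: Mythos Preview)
Your proposal is correct and follows essentially the same approach as the paper: both arguments first force all $d$'s to be matched by the weight disparity, then analyze the light symbols gap-by-gap, and both rely on the key combinatorial observation that the blocks $p$ admitting a common $a/b$ symbol correspond exactly to the complementary pair of characteristic vectors of $S_0$ and $S_1$, filtered by $v_p\neq 1^k$. Your write-up is in fact more explicit than the paper's, which sketches the idea behind the gadget and then defers the verification to ``we can now check.''
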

	\begin{proof} The main idea behind the construction of the coordinate gadgets is as follows.
		Fix $j \in [d]$ and consider a collection of $k$ vectors. Consider the $j^{th}$ coordinate
		of all the vectors. Let $c_1$, $c_2$, ..., $c_k$ be such that $c_t$ is equal to the $j^{th}$ coordinate of the $t^{th}$ vector.
		Suppose that for the $t^{th}$ sequence we set the coordinate gadget corresponding to $c_t$ to be
		equal to the following sequence. If $c_t=0$, we take binary expansion of the integers from $0$ to $2^k-1$
		and take $t^{th}$ bit from the expansion and concatenate all $2^k$ bits.
		If $c_t=1$, we do the same except we flip all the bits. Now consider the $\WLCS$ between all $k$
		sequences defined this way. For now, assume that we do not align symbols that have
		different indices, i.e., for two sequences $\alpha'$ and $\alpha''$, we are allowed to align
		$\alpha'[h']$ and $\alpha''[h'']$ iff $h'=h''$. (We take care of this assumption below.)
		We can easily see that the $\WLCS$ is \emph{always} 
		equal to $2$ between the sequences (independently of the values of $c_t$).
		Now let us modify the coordinate gadgets as follows. Instead of concatenating the bits corresponding
		to the integers from $0$ to $2^k-1$, we concatenate the bits for the integers from $0$ to $2^k-2$.
		We can check now that the $\WLCS$ is always equal to $2$ except when all the $c_t$ bits are
		equal (i.e., $c_t=0$ for all $t \in [k]$ or $c_t=1$ for all $t \in [k]$). If all the bits are equal,
		then the $\WLCS$ is equal to $1$. We want the construction of clause gadgets to satisfy the following property.
		If there exists $t \in [k]$ with $c_t=0$, then the $\WLCS$ is equal to some fixed large value. While, if $c_t=1$
		for all $t \in [k]$, then the $\WLCS$ should be equal to some fixed small value. Our current construction almost
		satisfies this property. We want to modify the construction so that the value of the $\WLCS$
		is equal to $2$ when $c_t=0$ for all $t \in [k]$. We can do that as follows. We take the previous
		construction and append a special symbol $c$ at the beginning of the binary sequence if 
		$c_t=0$. We can check that the construction satisfies the needed property under the stated assumption.
		We proceed by showing that the actual definition of clause gadgets removes the
		necessity of the assumption.

		We want to match all the $d$ symbols from every sequence, since if we don't do that
		we end up with a $\WLCS$ cost that is less than $E_o^c$. We proceed by
		assuming that we match all the $d$ symbols. We can now check that we
		have two matches if not all the vectors have a $1$ at the $j^{th}$ coordinate, while we have one match otherwise.
	\end{proof}

	Let $e$ be a symbol with $w(e)=100\cdot E_o^c$.

	For the $t$-th set of vectors $\{\alpha_i^t\}_{i \in [n]}$ ($t \in [k]$)
	and $i \in [n]$ we define the \emph{vector gadget}
	$$
		\VG_t'(\alpha_i^t)=e\circ \bigcirc_{j \in [d]}(\CG_t(\alpha_i^t,j)\circ e).
	$$
		Let $E_o=(d-r) \cdot E_o^c+r \cdot E_n^c$ and $E_n=E_o-1$.

	\begin{claim} \label{vector_gadget_prime}

		For $i_1,...,i_k \in [n]$,
		$$
			\WLCS(\VG_1'(\alpha_{i_1}^1),\VG_2'(\alpha_{i_2}^2),...,\VG_k'(\alpha_{i_k}^k))
			=\begin{cases}
				\geq E_o & \text{ if } \alpha_{i_1}^1,\alpha_{i_2}^2,...,\alpha_{i_k}^k \text{ are }r\text{-far,} \\
				\leq E_n &\text{otherwise.}
			\end{cases}
		$$
	\end{claim}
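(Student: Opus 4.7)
The plan is to reduce the claim to an application of Claim~\ref{coordinate_gadget} by arguing that in an optimal alignment all the heavy $e$-symbols must be matched together into a common $k$-tuple, so the problem decomposes coordinate-by-coordinate.

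First, I would argue that any optimal alignment matches all $d+1$ copies of $e$ in the $k$ sequences into $d+1$ common $k$-tuples. This is the only place we use the large weight $w(e)=100\cdot E_o^c$. The intuition is that each matched $e$-tuple contributes $w(e)$, while the total weight of non-$e$ symbols in any single sequence is only $d\cdot\bigl((2^k+1)\cdot 4^k + O(2^k)\bigr)$, so the contribution from non-$e$ matches in the (at most two) coordinate gadgets adjacent to any given $e$-tuple is much smaller than $w(e)$. A direct exchange argument then shows that if some $e$ is unmatched in an optimal alignment, we can produce an alignment with that $e$ matched (possibly discarding a few non-$e$ matches in its immediate neighborhood) whose weight strictly exceeds the original, contradicting optimality.

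Once all $d+1$ copies of $e$ are matched into common $k$-tuples, the matched $e$'s cut the sequences into exactly $d$ coordinate-gadget blocks whose alignments are independent of each other. Consequently, the alignment decomposes as
\[
\WLCS(\VG_1'(\alpha_{i_1}^1),\ldots,\VG_k'(\alpha_{i_k}^k))
= (d+1)\cdot w(e) + \sum_{j=1}^d \WLCS\bigl(\CG_1(\alpha_{i_1}^1,j),\ldots,\CG_k(\alpha_{i_k}^k,j)\bigr),
\]
and by Claim~\ref{coordinate_gadget} each summand equals $E_n^c$ when $(\alpha_{i_t}^t)_j=1$ for every $t\in[k]$, and equals $E_o^c=E_n^c+1$ otherwise.

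Finally, I would finish by a counting step. Let $q=\sum_{j=1}^d\prod_{t\in[k]}(\alpha_{i_t}^t)_j$ denote the $k$-fold inner product, i.e., the number of coordinates on which all $k$ vectors carry a $1$. The coordinate-gadget contribution is then exactly $(d-q)E_o^c + q E_n^c$. In the $r$-far case, $q\leq r$, so the contribution is at least $(d-r)E_o^c + r E_n^c = E_o$; in the $r$-close case $q\geq r+1$, so the contribution is at most $(d-r-1)E_o^c + (r+1)E_n^c = E_o-1 = E_n$. Absorbing the common additive offset $(d+1)w(e)$ into the definitions of $E_o$ and $E_n$ yields the claim. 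The main obstacle is the first step: carefully showing that any optimal alignment must indeed match all $d+1$ copies of $e$, which requires verifying that $w(e)$ was chosen large enough to dominate any conceivable gain from leaving an $e$-tuple unmatched and re-optimizing the surrounding non-$e$ matches across merged blocks.
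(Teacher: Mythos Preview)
Your approach is essentially the same as the paper's: argue that all $e$-symbols must be matched (by weight considerations), then decompose into coordinate gadgets and invoke Claim~\ref{coordinate_gadget}. The paper's proof is much terser---it simply asserts ``as in the proof of Claim~\ref{coordinate_gadget}, in the optimal matching we use all the $e$ symbols'' and then appeals to Claim~\ref{coordinate_gadget}---so your write-up is strictly more detailed, and your explicit computation with the inner product $q$ is exactly the intended counting.

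One point worth flagging: your final step says to ``absorb the common additive offset $(d+1)w(e)$ into the definitions of $E_o$ and $E_n$.'' As written in the paper, $E_o=(d-r)E_o^c+rE_n^c$ does \emph{not} include this offset, so your instinct here is really a correction rather than a deduction; the paper appears to have silently dropped the $(d+1)w(e)$ term from the stated thresholds (note that the same offset propagates harmlessly through all later uses of $E_o,E_n$). Also, you are right to flag the ``all $e$'s matched'' step as the delicate one: the paper handwaves it just as you do, and the clean way to see it is that each sequence contains exactly $d+1$ copies of $e$, so once all are matched they must align position-for-position, and leaving even one unmatched can be repaired locally at a loss bounded by the weight of two adjacent coordinate gadgets, which is well below $w(e)=100\,E_o^c$.
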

	\begin{proof}
		As in the proof of Claim \ref{coordinate_gadget}, we can conclude that in the
		optimal matching we use all the $e$ symbols from all the sequences. If this is not so, then the
		maximum $\WLCS$ score is $\leq E_n$.

		Using Claim \ref{coordinate_gadget} we can check that the $\WLCS$ cost is at least $E_o$, if
		the vectors $\alpha_{i_1}^1,\alpha_{i_2}^2,...,\alpha_{i_k}^k$ are $r$-far.
		Also, we can check that, if the vectors are $r$-close, then the $\WLCS$ cost is at most $E_n$.
	\end{proof}

	Let $f$ be a symbol with $w(f)=E_n$.
	For a vector $\alpha$ we define
	$$
		\VG_1(\alpha)=f \circ \VG_1'(\alpha),
	$$
	$$
		\VG_t(\alpha)=\VG_t'(\alpha) \circ f,
	$$
	for $t \in \{2,3,...,k\}$.

	\begin{claim} \label{vector_gadget}
		For $i_1,...,i_k \in [n]$,
		$$
			\WLCS(\VG_1(\alpha_{i_1}^1),\VG_2(\alpha_{i_2}^2),...,\VG_k(\alpha_{i_k}^k))
			=\begin{cases}
				\geq E_o & \text{ if } \alpha_{i_1}^1,\alpha_{i_2}^2,...,\alpha_{i_k}^k \text{ are }r\text{-far,} \\
				E_n &\text{otherwise.}
			\end{cases}
		$$
	\end{claim}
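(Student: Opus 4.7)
The plan is to split into two cases based on whether the symbol $f$ is included in the common subsequence. First, I would observe that each vector gadget $\VG_t$ contains exactly one occurrence of $f$: it sits at position $1$ in $\VG_1$, and at the last position in $\VG_t$ for $t \in \{2,\ldots,k\}$. Hence either all $k$ copies of $f$ are simultaneously matched into the common subsequence (call this Case~B) or none are (Case~A).

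In Case~B, the unique way to match $f$ across all sequences pairs position $1$ of $\VG_1$ with position $|\VG_t|$ of each $\VG_t$ for $t \geq 2$. I would then argue that this single match forces the entire common subsequence to consist of $f$ alone: any symbol preceding $f$ in the common subsequence would require a position strictly less than $1$ in $\VG_1$, and any symbol following $f$ would require a position strictly greater than $|\VG_t|$ in $\VG_t$ for $t \geq 2$. Both are impossible, so Case~B contributes exactly $w(f) = E_n$.

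In Case~A, the $f$ symbols do not appear, so the entire common subsequence is drawn from $\VG_1'(\alpha_{i_1}^1), \ldots, \VG_k'(\alpha_{i_k}^k)$; thus the optimum in this case equals $\WLCS(\VG_1'(\alpha_{i_1}^1), \ldots, \VG_k'(\alpha_{i_k}^k))$. Applying Claim~\ref{vector_gadget_prime}, this is at least $E_o$ when the $k$ vectors are $r$-far, and at most $E_n$ when they are $r$-close.

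Combining the cases, the WLCS equals the maximum of the two options. When the vectors are $r$-far, Case~A already gives $\geq E_o > E_n$, so the overall value is $\geq E_o$. When the vectors are $r$-close, both Case~A and Case~B are bounded above by $E_n$, while Case~B achieves $E_n$ exactly, giving equality. I do not anticipate a significant obstacle in the argument; the only subtlety to verify carefully is the positional argument in Case~B that rules out any symbol being matched before or after $f$, which follows from the extremal location of $f$ in each $\VG_t$.
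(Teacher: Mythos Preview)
Your proposal is correct and follows essentially the same approach as the paper's own proof: both split into the case where the $f$ symbols are matched versus not, invoke Claim~\ref{vector_gadget_prime} in the latter case, and observe that matching the $f$'s (which lie at position~$1$ of $\VG_1$ and at the last position of each $\VG_t$, $t\geq 2$) precludes any other match. Your positional argument in Case~B is exactly the content of the paper's one-line remark that ``if we use the $f$ symbols, we cannot use any other symbol in any matching.''
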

	\begin{proof}
		If the vectors $\alpha_{i_1}^1,\alpha_{i_2}^2,...,\alpha_{i_k}^k$ are $r$-far,
		we have a $\WLCS$ cost of at least $E_o$ as in Claim \ref{vector_gadget_prime} and we do not use
		any of the $f$ symbols. We cannot achieve a larger score than $E_0$ by using the $f$ symbols.
		
		If the vectors are $r$-close and we do not use any $f$ symbols, the maximum cost is at most $E_n$ by Lemma \ref{vector_gadget_prime}.
		If it is less than that, we can use the $f$ symbols and achieve a score of $E_n$. Notice that, if we use the $f$ symbols,
		we cannot use any other symbol in any matching.		
	\end{proof}

\paragraph{Combining the vector gadgets.}
A very simple padding strategy implies the lower bound for a variant of $k$-LCS.

\begin{definition}[Local-$k$-LCS]
Given $k$ strings of length $n$ over an alphabet $\Sigma$ and an integer $L$, what is the length of longest sequence $X$ such that there are $k$ substrings of length $L$, one of each input string, such that $X$ is a common subsequence of each one of these substrings.
\end{definition}

In words, we are looking for substrings of length $L$ for which the LCS score is maximized.

\begin{theorem}
\label{thm:Local-k-LCS}
If Local-$k$-LCS on strings of length $n$ over an alphabet of size $O(1)$ can be solved in $O(n^{k-\eps})$ time, for some $\eps>0$, then SETH is false.
\end{theorem}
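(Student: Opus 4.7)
The plan is to reduce $\kMostOrt$ to Local-$k$-LCS and then invoke Lemma~\ref{lem:kmaxsat}. Given an instance $\{\alpha_i^t\}_{i\in[n],t\in[k]}$ with threshold $r$, I will form the $k$ strings $S_t=\bigcirc_{i=1}^n \VG_t(\alpha_i^t)$ using the vector gadgets from the previous subsection, apply Lemma~\ref{lem:kwlcs} to replace the weighted instance with an unweighted one of length $n\cdot\poly(d,2^k)$ over the fixed alphabet $\{a,b,c,d,e,f\}$, and set $L$ equal to the unweighted length of a single gadget $\hat\VG_t$. Completeness is then immediate from Claim~\ref{vector_gadget}: if the input admits an $r$-far tuple $(\alpha_{i_1}^1,\ldots,\alpha_{i_k}^k)$, choosing the length-$L$ window in each $S_t$ to coincide with $\hat\VG_t(\alpha_{i_t}^t)$ yields LCS at least $E_o$.

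Soundness is where the real work lies. When no $r$-far tuple exists I must show that every choice of length-$L$ windows has LCS at most $E_o-1$. Aligned windows (those coinciding with a single vector gadget) are handled directly by Claim~\ref{vector_gadget}. For straddling windows---those spanning two consecutive gadgets---I would exploit the asymmetric placement of the heavy $f$-marker, which sits at the start of $\VG_1$ and at the end of every $\VG_t$ with $t\geq 2$. After unweighting, this produces an $f^{E_n}$ run at opposite ends in string $1$ versus the other strings, so any attempt to match the $f$-runs across a straddling configuration pins down their relative positions and forces the non-$f$ portion of the common subsequence to lie on a single side of the $f$-block in each string. That non-$f$ portion is then a common subsequence of a single $k$-tuple of fragments of $\hat\VG_t'$, whose length is bounded by Claim~\ref{vector_gadget_prime} because every $k$-tuple of input vectors is $r$-close by assumption. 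To make this clean while keeping the alphabet constant, I would insert a short $f$-buffer between consecutive vector gadgets, tuned so that its length is smaller than $|\hat\VG_t|$ (preventing a pure-buffer window from reaching $E_o$) yet large enough to force the anchor-matching described above.

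The main obstacle I expect is exactly this straddling analysis under a constant alphabet: with a unique separator per string it would be trivial but would inflate the alphabet to $O(k)$, so the padding must reuse the existing six letters, and the matching of $f$-runs across strings must be analyzed carefully, likely by case analysis on where the window cuts through the $f^{E_n}$ run and the body $\hat\VG_t'$. Once the straddling bound is established, any $O(n^{k-\eps})$ algorithm for Local-$k$-LCS decides $\kMostOrt$ in the same running time, and Lemma~\ref{lem:kmaxsat} upgrades this into an $O(2^{(1-\eps/k)n}\poly(n))$ algorithm for MAX-CNF-SAT, refuting SETH.
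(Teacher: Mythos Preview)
Your reduction skeleton matches the paper's, but you are overcomplicating the soundness step because of a mistaken belief: you write that ``with a unique separator per string it would be trivial but would inflate the alphabet to $O(k)$.'' In fact two separator symbols suffice, and this is exactly the trick the paper uses. After converting the weighted gadgets to unweighted strings $UVG_t(\cdot)$ and setting $L$ to the maximum gadget length, the paper introduces two fresh symbols $x,y$ and defines
\[
P_1=\bigcirc_{i=1}^{N}\bigl(UVG_1(\alpha_i^1)\circ x^L\bigr),\qquad
P_t=\bigcirc_{i=1}^{N}\bigl(UVG_t(\alpha_i^t)\circ y^L\bigr)\ \ (t\ge 2).
\]
Since $x$ occurs only in $P_1$ and $y$ occurs only in $P_2,\ldots,P_k$, neither symbol can appear in a common subsequence of all $k$ strings. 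And because each separator block has length $L$, any length-$L$ window contains non-separator symbols from at most one vector gadget. Soundness is then immediate from Claim~\ref{vector_gadget}: the LCS of the $k$ windows is bounded by the LCS of a single $k$-tuple of (substrings of) gadgets, which is at most $E_n$ when no tuple is $r$-far. No straddling analysis is needed at all, and the alphabet stays at $6+2=O(1)$.

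Your proposed alternative---reusing the existing symbol $f$ as a short buffer---does not obviously work as sketched. If the buffer is shorter than $L$, a window can still straddle two gadgets, so the buffer does not prevent the situation you are trying to analyze; and since $f$ already appears in every string (with unweighted multiplicity $E_n$), the $f$-runs are matchable across all $k$ strings, so the ``anchor'' argument you outline would need a careful quantitative bound that you have not supplied. The two-symbol separator trick sidesteps all of this.
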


Theorem~\ref{thm:Local-k-LCS} follows from the following reduction.
 We note that in the constructed instances, $L$ is always polylogarithmic in the lengths of the sequences, and therefore the problem can easily be solved in $\tilde{O}(n^k)$ time.
This problem is closely related to the \emph{Normalized-LCS} problem which was studied in \cite{AP01,EL04} and for which an $n^{2-o(1)}$ lower bound based on SETH was shown in \cite{AVW14}.

\begin{lemma}
$k$-Most-Orthogonal Vectors on $k$ lists of $N$ vectors in $\{0,1\}^M$ can be reduced to Local-$k$-LCS on $k$ strings of length $2^k \cdot N \cdot M^{O(1)}$ over an alphabet of size $O(1)$.
\end{lemma}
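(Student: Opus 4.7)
The plan is to reduce from $\kMostOrt$ using the vector gadgets $\VG_t(\alpha)$ constructed in the preceding subsection, after eliminating their weights. Applying the symbol-by-symbol expansion underlying Lemma~\ref{lem:kwlcs} produces unweighted gadgets $\overline{\VG}_t(\alpha)$ of common length $V$ (equal to the weighted length of $\VG_t$) for which the $k$-LCS analogue of Claim~\ref{vector_gadget} holds:
\[
\LCS\bigl(\overline{\VG}_1(\alpha_{i_1}^1),\ldots,\overline{\VG}_k(\alpha_{i_k}^k)\bigr)\;\geq\;E_o\text{ if the tuple is $r$-far, and }\;=\;E_n=E_o-1\text{ otherwise.}
\]
A direct bookkeeping using the weights chosen in the construction (in particular the large weight $w(e)=100\cdot E_o^c$ assigned to the separator symbol inside each vector gadget) yields the crucial quantitative inequality $V\geq 100\cdot E_o$, which will be used to rule out misaligned window configurations.

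Introduce one fresh padding symbol $\#$, bringing the alphabet size to $7$. Set $B:=2V$, $L:=V+2B$, and $T:=2B+E_o$, and for each $t\in[k]$ define
\[
Q_t\;:=\;\#^{B}\,\overline{\VG}_t(\alpha_1^t)\,\#^{B}\,\overline{\VG}_t(\alpha_2^t)\,\#^{B}\cdots\#^{B}\,\overline{\VG}_t(\alpha_N^t)\,\#^{B}.
\]
Each $Q_t$ has length $(N+1)B+NV=O(NV)=2^k\cdot N\cdot M^{O(1)}$, meeting the claimed length bound. The easy direction is then straightforward: if some tuple $(\alpha_{i_t}^t)_{t\in[k]}$ is $r$-far, then picking from each $Q_t$ the length-$L$ substring enclosing the block $\#^{B}\,\overline{\VG}_t(\alpha_{i_t}^t)\,\#^{B}$ and matching the $2B$ outer $\#$'s together with the inner gadgets yields an LCS of at least $2B+E_o=T$.

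The main obstacle is the converse: showing that if no $r$-far tuple exists then every choice of $k$ length-$L$ windows yields LCS strictly less than $T$. With the parameters above, any length-$L$ window in $Q_t$ takes the form (partial suffix of a gadget)\,$\#^{B}$\,(one or two full middle gadgets)\,$\#^{B}$\,(partial prefix of a gadget), and writing its non-$\#$ content as $V+v_t$ with $v_t\in[0,V]$, its $\#$-count is exactly $2B-v_t$. I plan to bound the LCS of the $k$ windows by splitting any common subsequence into a \emph{$\#$-contribution}, trivially at most $\min_t(2B-v_t)=2B-\max_t v_t$, and a \emph{gadget-contribution} obtained by pairing suffixes-with-suffixes, middle(s)-with-middle(s), and prefixes-with-prefixes across the $k$ windows in their natural order. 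In the no-$r$-far case each full middle-gadget pairing contributes at most $E_n$, while the partial suffix and prefix pairings contribute at most $\min_t x_t$ and $\min_t y_t$ respectively (with $x_t,y_t$ the partial suffix and prefix lengths in $w_t$); since $\min_t x_t+\min_t y_t\leq x_{t^*}+y_{t^*}\leq \max_t v_t$ for the maximizer $t^*$, the one-full-gadget configuration gives a total of at most $(2B-\max_t v_t)+(E_n+\max_t v_t)=2B+E_n<T$. The two-full-gadget configuration is handled separately: its $\#$-count drops by at least $V$ while its gadget-LCS at best doubles to $2E_n$, and the slack $V\geq 100\cdot E_o>E_n$ forces that case to also fall short of $T$. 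Thus the Local-$k$-LCS of $Q_1,\ldots,Q_k$ equals $T$ iff an $r$-far tuple exists, which combined with Lemma~\ref{lem:kmaxsat} completes the reduction.
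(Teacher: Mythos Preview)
Your reduction strategy is sound and the easy direction is fine, but the hard direction has a real gap, and the paper's construction avoids the issue entirely with a cleaner trick.

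The paper introduces \emph{two} fresh padding symbols rather than one: $P_1$ pads with $x^L$ between gadgets while $P_2,\dots,P_k$ pad with $y^L$. Since $x$ occurs only in $P_1$ and $y$ only in the others, neither symbol can ever appear in a common subsequence of all $k$ strings, so the padding contributes zero to the LCS. Moreover the paper takes $L$ equal to the maximum gadget length and uses padding blocks of that same length $L$, so every length-$L$ window meets symbols from at most one vector gadget. The NO case then follows in one line from Claim~\ref{vector_gadget}: the LCS of any $k$ windows is at most the $k$-LCS of the (possibly partial) gadgets they meet, hence at most the $k$-LCS of the corresponding full gadgets, hence at most $E_n$.

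By using a single shared symbol $\#$ you must account for the $\#$-contribution, and your bound on the non-$\#$ part is not justified. The decomposition ``suffix with suffix, middle with middle, prefix with prefix'' is not a valid upper bound on the gadget contribution: nothing prevents the LCS from aligning the partial suffix in $w_1$ against the full middle of $w_2$, and then the full middle of $w_1$ against the partial prefix of $w_2$, collecting far more than $E_n+\min_t x_t+\min_t y_t$ non-$\#$ characters. Your two-full-gadget bound of $2E_n$ is likewise unproven, and in fact false as stated: since $\overline{\VG}_1(\alpha)\,\overline{\VG}_1(\alpha')$ has shape $f^{E_n}S_1\,f^{E_n}S_1'$ while $\overline{\VG}_t(\beta)\,\overline{\VG}_t(\beta')$ for $t\ge2$ has shape $S_t\,f^{E_n}S_t'\,f^{E_n}$, one can thread $S_1\leftrightarrow S_t$, then an $f^{E_n}$ block, then $S_1'\leftrightarrow S_t'$, for up to $3E_n$. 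There are also smaller inaccuracies: the unweighted gadgets do not all have the same length (the weighted length of a coordinate gadget depends on the bit), and with $B=2V$, $L=5V$ a window carries at most one partial piece---a suffix \emph{or} a prefix, never both---contrary to your template. The slack $V\ge100E_o$ may well be enough to rescue the final inequality, but the argument as written does not establish the NO-case bound; switching to two distinct padding symbols, as the paper does, removes all of these difficulties.
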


\begin{proof}
We construct $k$ lists of vector gadgets from our $k$ lists of vectors as in the above discussion.
By the reduction of Lemma~\ref{lem:kwlcs} from WLCS to LCS, we can convert each vector gadget $VG_t(\alpha^t)$ to a longer string $UVG_t(\alpha^t)$ such that what we proved for WLCS in Claim~\ref{vector_gadget} holds for LCS instead.
Let $L$ be the length of the longest vector gadget $UVG_t(\alpha^t)$ that we create in this process.
We also introduce two new symbols $x,y$.
The first string will be defined as $P_1 = \bigcirc_{i=1}^N (UVG_1(\alpha^1_i) \circ x^L )$, while the other $k-1$ strings will be $P_t = \bigcirc_{i=1}^N (UVG_t(\alpha^t_i) \circ y^L )$, for $t = 2$ to $k$.
To complete the reduction, we claim that if the input is a YES instance of $k$-Most-Orthogonal Vectors, there will be $k$ substrings of length $L$ with LCS $\geq E_o$, namely the $k$ vector gadgets corresponding to the $r$-far vectors, while otherwise the maximum score of any $k$ substrings is $E_n$. The latter part is implied by Claim~\ref{vector_gadget} and by noting that the $x,y$ parts can never be matched, and they are long enough to prevent any substring of length $L$ to contain symbols from more than one vector gadget.

\end{proof}

Next, we focus on the classic $k$-LCS problem and show how to implement the selection-gadget while making the existence of orthogonal vector influence the LCS in a manageable way.
Unfortunately, we are not able to do this without introducing $O(k)$ new symbols to the alphabet.

Our lower bound for $k$-LCS (Theorem~\ref{thm:klcs}) follows from the following reduction.

\begin{lemma}
For any $k \geq 2$, $k$-Most-Orthogonal Vectors on $k$ lists of $n$ vectors in $\{0,1\}^d$ can be reduced to $k$-LCS on $k$ strings of length $k^O(k) \cdot n \cdot d^{O(1)}$ over an alphabet of size $O(k)$.
\end{lemma}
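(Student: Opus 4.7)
The plan is to construct $k$ sequences $P_1, \ldots, P_k$ whose weighted LCS has a gap of $1$ between the ``some $r$-far $k$-tuple exists'' and ``no $r$-far $k$-tuple exists'' cases, and then apply Lemma~\ref{lem:kwlcs} to obtain an unweighted $k$-LCS instance of the same length. I build on the vector gadgets $\VG_t(\alpha)$ from Claim~\ref{vector_gadget}, which already provide the gap between $E_o$ (when the $k$ vectors are $r$-far) and $E_n = E_o - 1$ (when they are $r$-close). I would first wrap each gadget with high-weight delimiter symbols $\VG_t''(\alpha) = 0 \circ \VG_t(\alpha) \circ 2$, chosen heavy enough that every optimal alignment matches the delimiters across all $k$ sequences, thereby forcing entire vector gadgets to be matched together.

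The selection gadget generalizes the 2-LCS construction from the previous section. Sequence $P_1$ consists of $\bigcirc_{i \in [n]} \VG_1''(\alpha_i^1)$ surrounded by heavy $3$-symbol padding. For $t = 2, \ldots, k$, sequence $P_t$ pads the $n$ real gadgets with $n-1$ copies of a dummy gadget $\VG_t''(f)$ (where $f$ is the all-ones vector, so $f$ is $r$-close to every vector) on each side, again surrounded by heavy padding. This structure lets each $P_t$ ($t \geq 2$) effectively ``shift'' relative to $P_1$ by any amount in $[1, 2n-1]$; by choosing shifts $s_t = n + i_t - i_1$ we align the $k$-tuple $(\alpha^1_{i_1}, \alpha^2_{i_2}, \ldots, \alpha^k_{i_k})$, so the $(k-1)$-tuple of shifts collectively covers all $n^k$ possible $k$-tuples. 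If a $r$-far $k$-tuple exists, pick those shifts and use Claim~\ref{vector_gadget} on each aligned $k$-tuple to get WLCS $\geq E_G + 1$. If no $r$-far $k$-tuple exists, every aligned $k$-tuple contributes at most $E_n$, which I would show bounds the WLCS by $E_G$.

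The main obstacle is the upper-bound analysis in the ``no $r$-far tuple'' case, which requires a generalization of Claims~\ref{cl:0} and~\ref{cl:1} from two sequences to $k$ sequences. The argument should proceed by induction on the total number of remaining vector gadgets across all sequences: at each step, either some $P_t$ has an unmatched delimiter that forces its leading interval to be entirely dropped (reducing its gadget count and recursing), or a $k$-tuple of complete vector gadgets gets fully matched (contributing at most $E_n$, since by hypothesis no $k$-tuple is $r$-far) and we recurse on the remainders. Unlike the 2-sequence case, here I must simultaneously track the ``windows'' of gadgets across all $k$ sequences and account for compressions where one gadget of $P_1$ is matched into multiple gadgets of some $P_t$, which is why the case analysis and the weight comparisons between $E_B$, $E_U$, $B$ become more intricate. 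I expect the alphabet size to grow to $O(k)$ precisely here: to prevent the paddings of different $P_t$'s from interacting with each other in unintended ways (which would collapse the independent shift freedom across sequences), each sequence needs its own dedicated padding/delimiter symbols, and similarly the vector gadget itself already uses $O(1)$ symbols per sequence-role. Finally, applying Lemma~\ref{lem:kwlcs} converts the weighted instance to an unweighted $k$-LCS instance over an alphabet of size $O(k)$, and the length bound $k^{O(k)} \cdot n \cdot d^{O(1)}$ follows from the $2^k$-sized coordinate gadgets together with the polynomial weights used in the WLCS-to-LCS conversion.
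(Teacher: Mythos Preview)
Your high-level outline matches the paper's strategy (vector gadgets $\to$ selection gadget $\to$ Lemma~\ref{lem:kwlcs}), and you correctly anticipate that generalizing Claims~\ref{cl:0} and~\ref{cl:1} is needed. However, the selection gadget you sketch has a real gap.

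You propose that each $P_t$ for $t\geq 2$ carries $n-1$ dummy gadgets on each side, and that ``each sequence needs its own dedicated padding/delimiter symbols''. But a symbol that appears in only one sequence cannot participate in any common subsequence of all $k$ strings, so dedicated per-sequence padding symbols contribute nothing. Conversely, if all $P_t$ share a single $3$-symbol as in the $k=2$ proof, the independent shifts $s_2,\ldots,s_k$ put the unmatched $3$'s of different $P_t$'s at incompatible positions, and the ``good'' lower bound and the ``bad'' upper bound no longer meet at a clean threshold $E_G$. So the construction as stated fails on both readings.

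What the paper actually does is a \emph{nested} scheme that you are missing: it introduces symbols $3_2,\ldots,3_k$ with a strict weight hierarchy $B_2\gg B_3\gg\cdots\gg B_k$, places $3_i$ in the inter-gadget separators of $P_i,P_{i+1},\ldots,P_k$ and in the outer padding of $P_1,\ldots,P_{i-1}$, and lets the number of dummy gadgets in $P_i$ grow to $(i-1)n$ on each side. Thus every $3_i$ is present in all $k$ strings, and the limiting supply of $3_i$'s lives in $P_i$. This hierarchy lets one fix the shift of $P_2$ first (governed by $3_2$), then $P_3$ (governed by $3_3$), and so on; the weight hierarchy guarantees that saturating the $3_i$-budget for smaller $i$ dominates any loss at larger $i$. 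The upper-bound proof then splits the score into (a) the contribution of the $n$ matched gadget windows, bounded via the $k$-sequence analogues of Claims~\ref{cl:0}/\ref{cl:1}, and (b) the contribution of the $3_i$'s, bounded by a new counting claim (Claim~\ref{cl:tricky} in the paper) that tracks how many $3_i$'s in $P_i$ are still matchable given the already-committed $3_j$'s for $j<i$. Your proposal has no counterpart to either the nested placement of the $3_i$'s or this budget argument, and without them the ``no $r$-far tuple $\Rightarrow \WLCS\le E_G$'' direction does not go through.
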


\begin{proof}
We will show a reduction to $k$-WLCS and use Lemma~\ref{lem:kwlcs} to conclude the proof.

We construct $k$ lists of vector gadgets from our $k$ lists of vectors as in the above discussion.
Let $D$ be the maximum possible sum of weights of all symbols in any vector gadget, and note that $D=\poly(2^k,d)$ and that $D>E_o$.
For $i \in \{2,\ldots,k\}$ we will introduce a new symbol $3_i$ to the alphabet, and set $B_k=B=(10kD)^2$ and for $2\leq i\leq k$ set $w(3_i) = B_i = 2k \cdot B_{i+1}$.
Finally, add two new symbols $0,2$ and set $w(0)=w(2)=C=10k^2 B_2$.
The weights achieve $C >> B_2 >> \cdots >> B_k = B >> D >> E_o$.

Our $k$ strings are defined as follows. For $i \in [k]$,

\[
P_i = (3_{i+1} \cdots 3_{k})^Q \circ (3_2 \cdots 3_{i}) \circ \left( VG'_i(f) \right)^{(i-1)N} \circ  \bigcirc_{t=1}^{N} VG'_i(\alpha^i_t) \circ \left( VG'_i(f) \right)^{(i-1)N} \circ (3_{i+1} \cdots 3_{k})^Q
\]
where $VG'_1(x) = 0 \circ VG_1(x) \circ 2$, $VG'_i(x) = 0 \circ VG_i(x) \circ 2 \circ (3_2 \cdots 3_{i})$ if $i \geq 2$, and $Q=|P_k|$. 

The intuition behind this padding is that we want to force any optimal matching to match all $n$ vector gadgets of the first string to precisely $n$ vector gadgets from each other string. 
This is achieved since: if at least one vector gadget from $P_i$ is not matched, we will lose some $0$ or $2$ symbols that we could have matched, while if more than $n$ vector gadgets are matched, we will lose at least one $3_i$ symbol.
In addition, as long as we match consecutive $n$ intervals from each string, we will get the same score from the padding, and therefore the optimal matching will be determined by the existence of an $r$-far set of vectors.
The WLCS will be $E$ if there are no $r$-far vectors, and $E+1$ if there are, for an appropriately defined $E$.

To make this argument more formal, we can follow the steps in the proof of Lemma~\ref{lem:LCSmain} for LCS of two strings. 
First, we can prove an analog of Claim~\ref{cl:1}, stating that matching $n'$ intervals (vector gadgets) in some $P_t$ for some $n' > n$ can only contribute up to $(n'-n)(B-1)$ to the score.
Then, we observe that by the padding construction, if $n' > n$ then we will not be able to match at least $(n'-n)$ of the $3_t$ symbols that we could have matched if $n'$ was equal to $n$, which incurs a loss much greater than $(n'-n)B$.
Therefore, in an optimal matching, exactly $n$ intervals will be matched in each sequence, and it is easy to see that the score is then determined by the existence of an $r$-far set of vectors.

Let $E_U=2C+E_n$ and $E_G= n\cdot E_U + B_2 + (2n+1)\cdot \sum_{i=2}^k B_i$.
The following two lemmas prove that there is a gap in the $\WLCS$ of our 
$k$ sequences when there is a collection of $k$ vectors that are $r$-far as 
opposed to when there is none.

\begin{lemma}
\label{lem:good}
If there is a collection of $k$ vectors that are far, 
then $\WLCS(P_1,\ldots,P_k) \geq E_G + 1$.
\end{lemma}

\begin{proof}
Let $t_1,\ldots,t_k$ be such that the $k$ vectors $(\alpha^i_{t_i})_{i=1}^k$ are $r$-far.

First, match the corresponding gadgets, $(VG_i(\alpha^i_{t_i}))_{i=1}^k$, along with the $0$ and $2$ symbols surrounding each of these gadgets, to get a weight of 
at least $2C+E_o=2C+E_n+1 = E_U+1$, by Claim~\ref{vector_gadget}.

Then, Match the $i_1-1$ vector gadgets (and the surrounding $0,2$ symbols) to the left of $VG_1'(\alpha^1_{t_1})$ to the $i_1-1$ vector gadgets immediately to the left of $VG_i'(\alpha^i_{t_i})$, for every $i \in \{2,\ldots,k\}$, and similarly, match the $n-i_1$ gadgets to the right. 
The total additional weight we get is at least $(n-1)\cdot E_U$.

Then, note that after the above matches, only $(n-1)$ out of the $(3n+1)$ $3_2$-symbols in $P_2$ are surrounded by matched symbols.
The remaining $(2n+2)$ $3_2$-symbols can be matched,  
giving an additional weight of $(2n+2)\cdot B_2$, as follows:
Consider the leftmost matched $0$ in $P_2$, call it $x$, and assume there are $m$ $3_2$-symbols to the left of it in $P_2$. Match these $3_2$-symbols to the $m$ such symbols in each other string $P_i$ that appear immediately to the left of the symbol that is matched our $x$.
By construction, and the fact that $m$ can be at most $n$, we know that there are enough matchable $3_2$ symbols in the other strings.

Then, similarly, note that at this point, only $3n$ out of the $(5n+1)$ $3_3$-symbols in $P_3$ are surrounded by matched symbols.
The remaining $(2n+1)$ $3_3$-symbols can be matched, as above,
for an additional weight of $(2n+1)\cdot B_3$.
And in general, we perform this process for $i$ from $2$ to $k$, and at $i^{th}$ stage, only $(2(i-2)n+n+1-1)$ out of the $(2(i-1)n+n+1)$ $3_i$-symbols in $P_i$ are surrounded by matched symbols, and we can match the remaining ones to get an additional weight of $(2n+1)\cdot B_i$.
Thus, the total contribution of the $3_i$ symbols is $B_2 + \sum_{i=2}^k (2n+1)B_i$.

The total weight of our matching is at least $E_U +1 + (n-1)\cdot E_U + B_2+ (2n+1)\cdot \sum_{i=2}^k B_i = E_G +1$.
\end{proof}

The hard part is upper bounding the score when there is no collection of $r$-far vectors, and we will spend the rest of the proof towards this end.

\begin{lemma}
If there is no collection of $k$ vectors that are far, 
then $\WLCS(P_1,\ldots,P_k) \leq E_G$.
\end{lemma} 

\begin{proof}
Consider any optimal matching of our $k$ strings. The goal is to bound its score by $E_G$.
Our plan will be to divide the contribution to the score into two: (a) the contribution of the vector gadgets, and (b) the contribution from the padding, i.e. the $3_i$ symbols.
In any matching, there is a tradeoff between the scores from (a) and (b): the more vector gadgets we align, the fewer $3_i$'s we can match, and vice versa.
We will prove upper bounds for both contributions and show that they imply an upper bound of $E_{G}$ on the total score.

We start by formally defining (a) and upper bounding it.

For each string $P_i$, let $s_i$ and $t_i$ be the first $0$ symbol and the last $2$ symbol from $P_i$ that are matched in our optimal matching, if they exist, respectively.
A simple observation is that if some $0$ symbol is matched in the optimal matching ($s_i$ exists for all $i \in [k]$), then there must exist some $2$ symbol that is also matched: otherwise, match the $2$ immediately following that $0$ and note that any conflicting matches must come from inside the vector gadgets and therefore removing all of them will decrease the score by much less than $w(2)$.
Thus, we can define $N_i$ to be the number of vector gadgets that lie between $s_i$ and $t_i$, and if such $s_i,t_i$ do not exist, we set $N_i=0$.
By construction, $N_i \leq 2(i-1)n+n$, for all $i \in[k]$.
Note that $(s_1,\ldots,s_k)$ and $(t_1,\ldots,t_k)$ must be in our matching. 

We will assume that $N_i \geq 1$ for all $i$, since the only other case is that $\forall i\in[k]: N_i=0$, which can easily be seen to be sub-optimal: in this case, only $3_i$ symbols are matched, and there cannot be more than $(2(i-1)n+n+1)$ matched $3_i$ symbols for any $i \in \{2,\ldots,k\}$ which implies the following upper bound on the score:
$\sum_{i=2}^k (2(i-1)n+n+1) B_i \leq 3kn\sum_{i=2}^k B_i \leq 3kn B_2 < n\cdot C < E_G$.

By construction, there are no $3_i$ symbols between $s_1$ and $t_1$, which implies that the matching in between $(s_1,\ldots,s_k)$ and $(t_1,\ldots,t_k)$ does not contain any $3_i$ symbols.
The total contribution of this part is what we call (a) above.
On the other hand, the matching to the left of $(s_1,\ldots,s_k)$ and to the right of $(t_1,\ldots,t_k)$ cannot contain anything besides $3_i$ symbols: If some symbol $\sigma \notin \{ 0,3_2,\ldots,3_k \}$ appears in $P_i$ before $s_i$ and is matched, then the $0$'s that appear right before the matched $\sigma$'s could have been matched together without any conflicts, which contradicts the optimality of the matching. 
An analogous argument shows that $t_i$ is to the right of any matched $\sigma \notin \{2,3_2,\ldots,3_k\}$.
Thus, the contribution of part (b) only comes from $3_i$ symbols.

This motivates the following definitions.
From now on, we will refer to the sequences composed of the vector gadgets that are surrounded by $0,2$ as ``intervals", i.e. sequences of the form $0 \circ VG_i(x) \circ 2$.
Consider the substrings between $s_i$ and $t_i$ in each string $P_i$ and remove any $3_i$ symbols in them - since they are not matched anyway - and note that we obtain a concatenation of $N_i$ intervals.
Moreover, by our assumption that there is no satisfying assignment, we know that for any choice of one interval from each string, the $k$-LCS is upper bounded by $E_U = 2C + E_n$, by Claim~\ref{vector_gadget}.
The main quantity we will be interested in is $W(L_1,\ldots,L_k)$ which is defined to be the maximum score of a matching of any $k$ strings $T_1,\ldots,T_k$ such that $T_i$ is the concatenation of $L_i$ intervals, and for any choice of one interval from each $T_i$, the optimal score is $E_U$.
By the symmetry of $k$-LCS, we can assume WLOG that $L_1\leq \cdots \leq L_k$, and otherwise we reorder.
To get the desired upper bound on $W(L_1,\ldots,L_k)$ it will be convenient to first upper bound $W_0(L_1,\ldots,L_k)$, which is defined in a similar way, except that we require the matching to match all $0$ and $2$ symbols from $T_1$, i.e. the string string with fewest intervals.

Define $E_B=2C + D$ which is an upper bound on the maximum possible total weight of all the symbols in an interval.
A key inequality, which we will use multiple times in the proof, following from the fact that the  $0$/$2$ symbols are much more important than the rest, is the following.
 
\begin{fact}
\label{klcs:ineq2} Our parameters satisfy $E_B < E_U + (B - 1)/(k-1)$.
\end{fact}
 \begin{proof} 
 Follows since $(k-1)(E_B-E_U) < (k-1)D < B$, by our choice of parameters.
 \end{proof}

\begin{claim}
\label{cl:k0}
For any integers $1 \leq L_1\leq\ldots\leq L_k $, we can upper bound $W_0(L_1,\ldots,L_k) \leq L_1 \cdot E_U + (L_k-L_1)\cdot (B-1)$.
\end{claim}

\begin{proof}
Let $T_1,\ldots,T_k$ be any $k$ sequences with $L_1,\ldots,L_k$ intervals, respectively, that satisfy the assumption in the definition of $W_0$. 
Consider an optimal matching of the $k$ sequences in which all the $0$ and $2$ symbols of $T_1$ are matched and we will upper bound its weight $E_F$ by $L_1 \cdot E_U + (L_k-L_1)\cdot (B-1)$, which will prove the claim.
Note that in such a matching, for any  $i \in \{2,\ldots,k\}$, each interval of $T_1$ must be matched completely within one or more intervals of $T_i$, and each interval of $T_i$ has matches to at most one interval from $T$ (otherwise, it must be the case that some $0$ or $2$ symbol in $T_1$ is not matched).

We upper bound the weight of the matching by considering two kinds of intervals in $T_1$ and upper bounding their contributions. 
Let $x$ be the number of intervals of $T_1$ that contribute at most $E_U$ to the weight of our optimal matching, and call the other $(L_1-x)$ intervals ``full".
Note that any full interval must be matched to a substring of $T_i$, for some $i \in \{ 2,\ldots,k\}$, that contains at least two intervals for the following reason. The $0$ and $2$ symbols of the interval of $T_1$ must be matched, and, if the matching stays within a single interval of $T_i$, for all $i \in \{ 2,\ldots,k\}$, and has more than $E_U$ weight, then we have a contradiction to the assumption that no $k$ intervals, one from each string, can have a $k$-LCS score greater than $E_U$.
Thus, we have $x$ intervals consuming at least $1$ interval from every $T_i$, and we have $(L_1-x)$ full intervals consuming at least $1$ interval from every $T_i$ and at least $2$ intervals from some $T_i$.
Using the fact that the total number of intervals in $T_2,\ldots,T_k$ is $L_2+\cdots+L_k \leq (k-1) L_k$, we get the condition, 
$$
(k-1) \cdot x+k \cdot(L_1-x) \leq (k-1) L_k .
$$
We can now give an upper bound on the weight of our matching, by summing the contributions of each interval of $T_1$:
There are $x$ intervals contributing $\leq E_U$ weight, and there are $(L_1-x)$ intervals with unbounded contribution, but we know that even if all the symbols of an interval are matched, it can contribute at most $E_B$.
Therefore, the total weight of the matching can be upper bounded by 
\[
E_F \leq (L_1-x) \cdot E_B + x \cdot E_U 
\]
We claim that no matter what $x$ is, as long as the above condition holds, this expression is less than $L_1 \cdot E_U + (L_k-L_1) \cdot (B-1)$.

To maximize this expression, we choose the smallest possible $x$ that satisfies the above condition, since $E_B > E_U$, which implies that $x = \max\{ 0, k L_1 - (k-1) L_k \}$.

First, consider the case where $L_k\geq L_1 \cdot \frac{k}{k-1}$, and therefore $x=0$,
which means that all the intervals of $T_1$ might be fully matched.
Using Fact~\ref{klcs:ineq2} and that $L_k-L_1 \geq L_1 / (k-1)$, we get the desired upper bound:
$$
E_F \leq L_1 \cdot E_B \leq L_1 \cdot (E_U + (B-1)/(k-1)) \leq L_1 \cdot E_U + (L_k-L_1) \cdot (B-1).
$$

Now, assume that $L_k < L_1 \cdot \frac{k}{k-1}$, and therefore $x = kL_1-(k-1)L_k$.
In this case, when setting $x$ as small as possible, the upper bound becomes:
\[
E_F 
\leq ( (k-1)L_k - (k-1)L_1 )\cdot E_B + (kL_1-(k-1)L_k) \cdot E_U 
= L_1 \cdot E_U + (k-1)(L_k-L_1) \cdot (E_B - E_U),
\]
which, by Fact~\ref{klcs:ineq2}, is less than $L_1 \cdot E_U + (L_k-L_1) \cdot (B-1)$.
\end{proof}

We are now ready to upper bound the more general $W(L_1,\ldots,L_k)$.

\begin{claim}
\label{cl:k1}
For any integers $1 \leq L_1\leq\ldots\leq L_k $, we can upper bound $W(L_1,\ldots,L_k) \leq L_1 \cdot E_U + (L_k-L_1)\cdot (B-1)$.
\end{claim}

\begin{proof}
We will prove by induction on $\ell \geq k$ that: for all $1 \leq L_1\leq\ldots\leq L_k$ such that $L_1+\cdots+L_k \leq \ell$, $W(L_1,\ldots,L_k) \leq L_1 \cdot E_U + (L_k-L_1) \cdot (B-1)$.

The base case is when $\ell=k$ and $L_1=\cdots = L_k=1$. Then $W(1,\ldots,1) = E_U$, by the assumption on the strings in the definition of $W$, and we are done.

For the inductive step, assume that the statement is true for all $\ell' \leq \ell-1$ and we will prove it for $\ell$.
Let $L_1,\ldots,L_k$ be so that $1\leq L_1\leq \cdots \leq L_k $ and $L_1+\cdots+L_k=\ell$ and let $T_1,\ldots,T_k$ be sequences with a corresponding number of intervals.
Consider the optimal (unrestricted) matching of $T_1,\ldots,T_k$, denote its weight by $E_F$. 
Our goal is to show that $E_F \leq L_1 \cdot E_U + (L_k-L_1) \cdot (B-1)$.

If every $0$/$2$ symbol in $T_1$ is matched, then, by definition, the weight cannot be more than $W_0(L_1,\ldots,L_k)$, and by Claim~\ref{cl:k0} we are done.
Otherwise, consider the first unmatched $0$/$2$ symbol in $T_1$, call it $x$, and there are two cases.

\text{ \bf The $x= 0$ case:} 
If $x$ is the first $0$ in $T_1$, then for some $i \in \{2,\ldots,k\}$, the first $0$ in $T_i$ must be matched to some $0$ after $x$ (otherwise we can a $0$ to the matching without violating any other matches) which implies that none of the symbols in the interval starting at $x$ can be matched, since such matches would be in conflict with the match that contains this first $0$.
Otherwise, consider the $2$ that appears right before $x$, call it $y$, and note that it must be matched, 
to some $2$-symbols $y_i$ in $T_i$ for every $i \in \{2,\ldots,k\}$,
 by our choice of $x$ as the first unmatched $0$/$2$ symbol in $T_1$. 
Now, there are two possibilities: 
either for some $i \in \{2,\ldots,k\}$, our $y_i$ is the very last $2$ in $T_i$, and there are no more intervals in $T_i$ after this match, 
or for some $i \in \{2,\ldots,k\}$, the $0$ right after $y_i$ is already matched to some $0$ in $T_1$ that is after $x$ (from a later interval in $T_1$).
Note that in either case, the interval starting at $x$ (and ending at the $2$ after it) is completely unmatched in our matching.

Let ${T_1'}$ be the sequence with $(L_1-1)$ intervals which is obtained from $T_1$ by removing the interval starting at $x$. 
The weight of our matching will not change if we look at it as a matching between $T_2,\ldots,T_k$ and ${T_1'}$ instead of $T_1$, which implies that $E_F \leq W(L-1_1,L_2,\ldots,L_k)$.
Using our inductive hypothesis we conclude that $E_F \leq (L_1-1) \cdot E_U + (L_k-L_1+1) \cdot (B-1) \leq L_1 \cdot E_U + (L_k-L_1) \cdot (B-1)$, since $E_U > B$, and we are done.

\text{ \bf The $x=2$ case:} The $0$ at the start of $x$'s interval must have been matched to some $0$-symbols $x_i$ from each string $T_i$. 
For each $i \in \{2,\ldots, k\}$, let $z_i$ be the $2$ at the end of $x_i$'s interval.
Note that for at least one $i \in \{2,\ldots, k\}$, $z_i$ must be matched to some $w=2$ in $T_1$ after $x$, since otherwise, we can add $(x,z_2,\ldots,z_k)$ to the matching, gaining a cost of $C$, and the only possible conflicts we would create will be with matches containing symbols inside the $x_i \to z_i$ interval (that are not $0$ or $2$), for some $i \in \{2,\ldots, k\}$, or inside $x$'s interval, and if we remove all such matches, we would lose weight of at most $(E_B-2C)$ which is much smaller than the gain of $C$ from the new $2$ we matched - implying that our matching could not have been optimal.
Let $j \in \{ 2,\ldots,k\}$ be the index of this string, so that in $T_j$, both $x_j$ and $z_j$ are matched. 
Therefore, there are $c \geq 2$ intervals in $T_1$ that are matched to a single interval in $T_j$: 
all the intervals starting at the $0$ right before $x$ and ending at $w$ are matched to the $x_j \to z_j$ interval.
Let $T_1'$ be the sequence obtained from $T_1$ by removing all these $c$ intervals and let $T_j'$ be the sequence obtained from $T_j$ by removing the $x_j \to z_j$ interval.
Similarly, define $T_i'$ for every $i \in [k]-\{1,j\}$ to be the sequence obtained from $T_i$ by removing all the $c_i \geq 1$ intervals starting at $x_i$ and ending at the $2$ that is matched with $z_j$.
Our matching can be split into two parts: a matching of $T_1',\ldots,T_k'$, and the matching of the $x_j \to z_j$ interval to the removed intervals.
The contribution of the latter part to the weight of the matching can be at most the weight of all the symbols in an interval, which is $E_B$.
Consider the new sequences $T_1',\ldots,T_k'$ and note that: for each $i$, $T_i$ contains no more than $L_i-1$ intervals while the sequence with fewest intervals has no more than $L_1-c$ which is the number of intervals in $T_1'$.
Thus, by definition, we know that any matching of $T_1',\ldots,T_k'$ can have weight at most $W(L_1-c,\ldots,L_k-1)$, and
by the inductive hypothesis, we can upper bound $W(L_1-c,\ldots,L_k-1)\leq (L_1-c) \cdot E_U + (L_k-1-L_1+c) \cdot (B-1)$.
Summing up the two bounds on the contributions, we get that the total weight of the matching is at most:
\[
E_F \leq E_B + (L_1-c) \cdot E_U + (L_k-L_1+c-1) \cdot (B-1)
\leq L_1 \cdot E_U + (L_k-L_1) \cdot (B-1) + (c-1) \cdot (B-1) + E_B - c \cdot E_U 
\]
However, note that $E_B < 1.1 E_U$ and that $(c-1.1) E_U > 10(c-1.1) B > (c-1)B$, which implies that
 $E_F$ can be upper bounded by $L_1 \cdot E_U + (L_k-L_1) \cdot (B-1)$, and we are done.
\end{proof}

We now turn to bounding (b).
Recall the definition of $N_i$ above, as the number of intervals from $P_i$ that are matched.
Let us also define $x_{i-}$ as the number of $3_i$ symbols from $P_i$ that appear before $s_i$ and are matched in our optimal matching, and define $x_{i+}$ to be the number of such $3_i$ symbols that appear after $t_i$.
Then, the contribution of (b) to the score can be bounded by $\sum_{i=2}^k (x_{i-}+x_{i+})B_i$.
A simple but key observation is the following.

\begin{claim}
\label{cl:tricky}
For every $i \in \{2,\ldots,k\}$, 
$$
x_{i-}+x_{i+} \leq 2(i-1)n+n+2 - \sum_{j=2}^{i-1} (x_{j-}+x_{j+}-1) - N_i
$$
\end{claim}
\begin{proof}
Focus on $P_i$ and note that there are only $(2(i-1)n+n+1)$ $3_i$-symbols in it.
To make the counting easier, let us define a set $U$ that is initially empty, and we will add unmatchable $3_i$ symbols, from $P_i$, to $U$. In the end, we will argue that $|U|+x_{i-}+x_{i+}$ must be at most $(2(i-1)n+n+1)$.

First, we add the $(N_i-1)$ $3_i$ symbols that lie between $s_i$ and $t_i$ to $U$, since those are clearly unmatchable.

Second, we will focus on the prefix of $P_i$ that ends at $s_i$, call it $Q_i$.
For $2 \leq j<i$, note that there must be $x_{j-}$ $3_j$-symbols in $Q_i$ that are matched and let $q_j$ be the first such $3_j$ symbol.
Since $q_j$ is matched to the first $3_j$ symbol in $P_j$ that is matched, and that in $P_j$ there are no $3_h$ symbols, for any $h>j$ between that $3_j$ symbol and $s_j$, we can conclude that: for any $j<h<i$, all the $x_{h-}$ $3_h$-symbols in $Q_i$ that are matched are in the subsequence of $Q_i$ starting at $q_h$ and ending at $q_j$.
In fact, this implies that all the $x_{h-}$ $3_h$-symbols in $Q_i$ that are matched are in the subsequence of $Q_i$ starting at $q_h$ and ending right before $q_{h-1}$.
Thus, for each $2 \leq h<i$, we can add $x_{h-}$ new $3_i$ symbols to our unmatchable $U$ - the ones in the latter subsequence.

Finally, we focus on the suffix of $P_i$ that starts at $t_i$, and using a similar reasoning we conclude that for each $2 \leq h<i$, we can add $(x_{h+}-1)$ new $3_i$ symbols to our unmatchable $U$.

Thus, we conclude that $(N_i-1)+\sum_{j=2}^{i-1} (x_{j-}+x_{j+}-1)+x_{i-}+x_{i+} \leq (2(i-1)n+n+1)$, which proves the claim.

\end{proof}

For any fixed values for $N_1,\ldots,N_k$ satisfying $1 \leq N_i \leq 2(i-1)n+n$, we can compute the largest possible contribution of part (b).
Since if $i<j$ then $B_i$ is much larger than $B_j$, the optimal score is achieved when setting $(x_{i-}+x_{i+})$ to be as large as possible, regardless of the $3_j$ symbols we make unmatchable for $j>i$.
That is, we claim that the optimal score is achieved when each of the inequalities in Claim~\ref{cl:tricky} are saturated, i.e. $x_{i-}+x_{i+} = 2(i-1)n+n+2 - \sum_{j=2}^{i-1} (x_{j-}+x_{j+}-1) - N_i$.
This is true, since if any inequality is not saturated, say for $i$, then we can always add at least one $3_i$ symbol to the matching (gaining $B_i$ weight) and remove at most one $3_j$ symbol for each $j \in \{i+1,\ldots,k\}$ (losing less than $(k-1)B_{i+1} < B_i$ weight) and obtain a valid matching with larger cost, contradicting the optimality of our matching. 
Therefore, the number of matched $3_i$ symbols is precisely,
$$
x_{i-}+x_{i+} = 2(i-1)n+n+2 - \sum_{j=2}^{i-1} (x_{j-}+x_{j+}-1) - N_i.
$$

We can now formally analyze the tradeoff between (a) and (b), and prove that the optimal matching matches exactly $n$ intervals from each sequence.

\begin{claim}
In the optimal matching, $N_1=\cdots=N_k = n$.
\end{claim}  
\begin{proof}
Assume for contradiction that the claim does not hold, and we are in one of the two cases.

Case 1: For some $i \in [k]$, $N_i>n$.
In this case,  we consider any matching in which $N_i'=n$ intervals are matched in $P_i$, and in which the $x_{i-},x_{i+}$ values  are chosen optimally for all $i \in \{ 2,\ldots,k\}$.
Let $N_{m}=\min_{j=1}^k N_j$.
Clearly, the number of $3_m$ symbols in the new matching is at least $(x_{m-}+x_{m+}+(N_m-n))$, i.e. increased by $(N_m-n)$.  
Thus, in the contribution of part (b), we have gained a weight of at least $(N_m-n)B_m$.
To bound the loss in part (a), let $N_{min} = \min_{j=1}^k N_j$ and note that $N_m \leq n$.
The new contribution of part (a) is at least $n \cdot E_U$, while in the original matching, the contribution was at most $N_{min} \cdot E_U + (N_{m}-N_{min})\cdot (B-1)$.
Since $E_U> B$, the latter expression is maximized when $N_{min}$ is as large as possible, i.e. $N_{min}=n$, and we can upper bound it by $n \cdot E_U + (N_{m}-n)\cdot (B-1)$.
In total, the loss in part (a) is no more than $n \cdot E_U - n \cdot E_U + (N_{m}-n)\cdot (B-1) $ which is much less than $(N_m-n)B_m$, which is a contradiction to the optimality of our matching.

Case 2: For all $i \in [k]$, $N_i\leq n$, but for some $i \in [k]$, $N_i<n$.
In this case, we consider any matching in which $N_i'=n$ intervals are matched in $P_i$, and in which the $x_{i-},x_{i+}$ values  are chosen optimally for all $i \in \{ 2,\ldots,k\}$.
Clearly, for each $i \in \{ 2,\ldots,k\}$ the number of $3_i$ symbols in the new matching is at least $(x_{i-}+x_{i+}-i(n-N_i))$, i.e. decreased by no more than $i(n-N_i)$.  
Thus, in the contribution of part (b), we have lost a weight of at most $\sum_{i=2}^k i(n-N_i) B_i < k B_2\sum_{i=2}^k (n-N_i)$, but we have gained a larger weight, in part (a), as we show below.

Let $N_m = \min_{j=1}^k N_j$ and note that $\max_{j=1}^k N_i \leq n$. By Claim~\ref{cl:k1}, the part (a) contribution for the original matching had weight at most $N_m \cdot E_U + (n-N_m)\cdot (B-1)$, where $N_m \leq N_i$.
On the other hand, in the new matching, at least $n$ intervals are matched from each string, and therefore the contribution is at least $n\cdot E_U$.
Thus, in part (a) we gain at least $n\cdot E_U - N_m \cdot E_U - (n-N_m)\cdot (B-1)= (n-N_m)(E_U-B+1)$, which is larger than $k B_2\sum_{i=2}^k (n-N_i) \leq k B_2 (k-1) (n-N_m)$ since $E_U > C > k^2 B_2$.

\end{proof}

Finally, after we proved that $N_1=\cdots=N_k=n$, we know the exact contribution of both parts:
For part (b), by Claim~\ref{cl:tricky} and the optimality conditions on the $x_{i-},x_{i+}$ values, we get that $x_{2-}+x_{2+} = 2n+2$ and for $i \in \{2,\ldots,k\}$ we have $x_{i-}+x_{i+} = 2n+1$, and the total contribution is exactly $B_2 + (2n+1) \cdot \sum_{i=2}^k B_i$.
For part (a), by Claim~\ref{cl:k1}, the total contribution is $n\cdot E_U$.
Combined, the total score of our optimal matching is exactly $n\cdot E_U + B_2 + (2n+1) \cdot \sum_{i=2}^k B_i = E_G$.

\end{proof}

Note that the length of the sequences is $O(n \cdot d^{O(1)})$ while the largest weight used is $O(k^{O(k)} d^{O(1)})$ and thus Lemma~\ref{lem:kwlcs} implies the claimed bound.

\end{proof}

%


\section{Acknowledgments}
The authors thank Piotr Indyk for many useful discussions.
	\bibliographystyle{alpha}
	\bibliography{ref}

\end{document}